\documentclass[ aps,
                pra,
                showpacs,
                amssymb,
                nofootinbib,
                superscriptaddress,
                twocolumn,
                longbibliography,noeprint, 10pt]{revtex4-2}

\usepackage[ansinew]{inputenc}
\usepackage{bm,bbm,dsfont}
\usepackage{amsbsy,amsthm,amssymb,amsfonts,amsmath,mathtools}
\usepackage{graphicx}
\usepackage{physics}
\usepackage{xcolor}
\usepackage{nicefrac}
\usepackage{MnSymbol}
\usepackage{enumerate}
\usepackage{float}
\usepackage{comment}
\usepackage{braket}
\usepackage{soul}
\usepackage[colorlinks]{hyperref}

\usepackage[normalem]{ulem} 
\newcommand{\stkout}[1]{\ifmmode\text{\sout{\ensuremath{#1}}}\else\sout{#1}\fi}

\makeatletter
\newcommand\org@hypertarget{}
\let\org@hypertarget\hypertarget
\renewcommand\hypertarget[2]{%
  \Hy@raisedlink{\org@hypertarget{#1}{}}#2%
  }

\makeatother

\hypersetup{
	bookmarksnumbered,
	pdfstartview={FitH},
	citecolor={darkgreen},
	linkcolor={darkred},
	urlcolor={darkblue},
	pdfpagemode={UseOutlines}}
\definecolor{darkgreen}{RGB}{50,190,50}
\definecolor{darkblue}{RGB}{0,0,190}
\definecolor{darkred}{RGB}{238,0,0}

\newcommand{\eqnref}[1]{Eq.~(\ref{#1})}
\newcommand{\eqnsref}[2]{Eqs.~(\ref{#1}) and (\ref{#2})}
\newcommand{\figref}[1]{Fig.~\ref{#1}}

\newcommand{\secref}[1]{Sec.~\ref{#1}}
\newcommand{\appref}[1]{App.~\ref{#1}}

\newcommand{\mc}[1]{\mathcal{#1}}
\newtheorem{theorem}{Theorem}[section]

\def\one{{\mbox{$1 \hspace{-1.0mm}  {\bf l}$}}}						

\renewcommand{\thesection}{\Roman{section}}
\renewcommand{\thesubsection}{\Roman{section}.\Alph{subsection}}
\renewcommand{\thesubsubsection}{\Roman{section}.\Alph{subsection}.\arabic{subsubsection}}
\makeatletter
\renewcommand{\p@subsection}{}
\renewcommand{\p@subsubsection}{}
\newtheorem{corollary}{Corollary}
\makeatother

\begin{document}

\title{Quantum sequential parameter testing}

\author{Simon Morelli}
\email{simon.morelli@tuwien.ac.at}
\affiliation{Technische Universit{\"a}t Wien, Atominstitut \& Vienna Center for Quantum Science and Technology (VCQ),  Stadionallee 2, 1020 Vienna, Austria}

\author{Ricard Ravell Rodr\'iguez}
\email{ricard.ravell@icfo.eu}
\affiliation{ICFO-Institut de Ci\`encies Fot\`oniques, The Barcelona Institute of
Science and Technology, 08860 Castelldefels (Barcelona), Spain}

\author{John Calsamiglia}
\affiliation{F\'isica Te\`orica: Informaci\'o i Fen\`omens Qu\`antics, Departament de F\'isica, Universitat Aut\`onoma de Barcelona, 08193
Bellaterra (Barcelona), Spain}

\author{Ram\'on Mu\~{n}oz-Tapia}
\affiliation{F\'isica Te\`orica: Informaci\'o i Fen\`omens Qu\`antics, Departament de F\'isica, Universitat Aut\`onoma de Barcelona, 08193
Bellaterra (Barcelona), Spain}

\author{Gael Sent\'is}
\affiliation{F\'isica Te\`orica: Informaci\'o i Fen\`omens Qu\`antics, Departament de F\'isica, Universitat Aut\`onoma de Barcelona, 08193
Bellaterra (Barcelona), Spain}

\author{Michalis Skotiniotis}
\email{mskotiniotis@onsager.ugr.es}
\affiliation{Quantum Thermodynamics and Computation Group,
Department of Electromagnetism and Condensed Matter, Universidad de Granada, 18071 Granada, Spain}
\affiliation{Instituto Carlos I de F\'isica Te\'orica y Computacional, Universidad de Granada, 18071 Granada, Spain}

\begin{abstract}
Sequential strategies in hypothesis testing use a variable number of measurement rounds, allowing a decision to be made as soon as the observed data 
provide a prescribed level of error tolerance. Although sequential testing is well established for a discrete set of hypotheses, extending this framework to a 
continuous parameter space poses additional challenges and has remained largely unexplored. In this article, we introduce sequential \textit{parameter 
testing}, a framework for determining an unknown parameter up to a prescribed tolerance by ruling out sufficiently distant competing values with a 
target error tolerance. We clarify its operational distinction from conventional parameter estimation and develop sequential tests for continuous families 
of hypotheses, introducing the \textit{twin-peaks test} as a natural and computationally efficient analog of sequential likelihood-ratio testing. We 
apply the framework to two paradigmatic quantum tasks: testing the phase and the purity of a qubit. For phase testing, we show numerically that adaptive projective 
measurements achieve the same average sample cost as collective covariant measurements with a fixed number of copies. For purity testing, local 
measurements are optimal, and sequential parameter testing yields significant average sample savings over fixed sample-size protocols. Our results 
establish parameter testing as an operationally meaningful framework for resource-efficient certification tasks involving continuous parameters.
\end{abstract}

\maketitle

\section{Introduction}

Deciding accurately between possible hypotheses that explain observed data is the foundation of modern science.
A paradigmatic task is to determine the value of a parameter that governs the behavior of an otherwise known model based on 
empirical data. When the parameter can only assume two or more discrete values, we speak of (simple) \textit{hypothesis testing}, whereas if 
the parameter is continuous, the task falls under the umbrella of \textit{parameter estimation}. In both settings, the observed data are used 
to draw an inference about the underlying model, either by selecting a hypothesis or by reporting an estimate of the parameter. The quality of 
this inference is assessed through a task-dependent figure of merit (FOM). 

Historically, the sample size, i.e., the number of measurement rounds, was treated as fixed in advance. Because each measurement outcome is 
inherently probabilistic, and the resulting error in the selected hypothesis varies from one experimental run to another, one typically considers 
the mean squared error (MSE)---in the local estimation scenario---or average mean squared error (AvMSE) in the global estimation scenario. In 1945, 
Abraham Wald~\cite{wald_sequential_1945} presented an alternative framework for hypothesis testing called \textit{sequential analysis}, in which 
the roles of sample size and error threshold are effectively reversed:  one prescribes a target error, and the sample size becomes a random variable 
whose value is determined by the observed measurement outcomes. An additional benefit of this approach is an average saving in resources, as test runs can 
potentially stop earlier if the data allow for a decision with the target error tolerance~\cite{Wald1948}.

In this article, we propose sequential methods for the estimation of an unknown parameter $\vartheta$, which we call \textit{parameter testing}, whose 
goal is to guarantee a parameter value to within a prescribed tolerance, $\delta$, and error $\epsilon$. Once an estimate, $\hat\vartheta$, 
is reported, parameter values within a $\delta-$neighborhood $B_\delta(\hat\vartheta)$ are regarded as acceptable, whereas values outside 
this region are treated as competing alternatives to be ruled out. Unlike conventional parameter estimation, where the quality of an estimate is typically 
quantified by a distance-based loss function (such as the MSE or AvMSE), parameter testing is designed for scenarios in which the usefulness of an 
estimate is determined by whether it lies within a prescribed tolerance. The objective is therefore not to minimize estimation error \textit{per se}, 
but to determine, whether the unknown parameter belongs to an acceptable region while excluding sufficiently distant alternatives. Examples where such 
an objective appears naturally include image-guided surgery, where a target location needs only be identified within the spatial tolerance required 
for the planned intervention~\cite{Shamir2009,Lin2023}, and engineering metrology and conformity assessment, where manufactured components 
are accepted or rejected according to whether measured characteristics satisfy prescribed tolerances~\cite{ISOIECGuide984}.

We introduce three sequential procedures for parameter testing: the \textit{complement test}, which pits the $\delta$-neighborhood
$B_\delta(\vartheta)$ around each parameter value against its complement; the \textit{concentration test}, which compares the posterior
density at a candidate parameter value with the average posterior density outside its $\delta$-neighborhood; and the
\textit{twin-peaks test}, which compares the posterior density at the maximum-a-posteriori estimate $\hat{\vartheta}$ with the largest
posterior density attained outside $B_\delta(\hat{\vartheta})$. The three protocols are depicted in Fig.~\ref{fig:diff-tests}.  

Among these three tests, the \textit{twin-peaks test} is most closely aligned with the sequential likelihood-ratio test in hypothesis testing: it stops 
once the best estimate is separated, in likelihood-ratio terms, from every parameter value outside $B_\delta(\vartheta)$. The test is simple 
to implement, avoiding conceptual burdens and the computational cost of repeatedly computing posterior distributions, whilst retaining a direct 
comparison with the strongest competing hypothesis.

We implement these sequential parameter-testing methodologies in a quantum setting where the unknown parameter is encoded in a family of 
quantum states $\{\rho_\vartheta\}_{\vartheta\in\Theta}$, with $\Theta$ being a convex and compact space. We look at the paradigmatic 
cases of determining the phase and the Bloch vector length of a qubit. We derive asymptotic stopping times of the \textit{twin-peaks test} and 
find quantum measurement strategies that achieve these times. For the case of phase-testing, we find that adaptive projective measurements 
perform best and yield the same average number of samples as collective measurement strategies using a fixed predetermined number of samples.  
For the case of purity, we find that the stopping time depends on the value of the purity and, assuming that the direction of the Bloch vector is known, 
local projective measurements achieve the asymptotic stopping time. If the direction of the Bloch vector is not known, we show that sequential testing 
procedures that process the quantum states in batches of size $M$ achieve the same stopping time---in the limit $M\to\infty$--- as for the local 
projective measurements with the direction of the Bloch vector assumed known. Unlike the case of phase-testing, our \textit{twin-peaks test} for 
purity testing can exhibit significant gains in the number of samples needed compared to fixed sample-size strategies. Our methodology and results are 
of particular relevance for several key tasks in quantum information where the success/failure of the task exhibits a sharp threshold between useful 
and non-useful resources. Such tasks include entanglement bounds and entanglement distillation thresholds~\cite{Werner1989,Horodecki1995}, the 
quantum bit error rate (QBER) or CHSH threshold for the extractability of a secret key in quantum key 
distribution~\cite{ShorPreskill2000,AcinGisinMasanes2006,PironioAcinMasanes2010}, fault-tolerance thresholds for quantum error-correction~\cite{AharonovBenOr1997,RaussendorfHarringtonGoyal2006} and many more.

The paper is structured as follows. We start with an overview of Bayesian estimation theory, sequential hypothesis 
testing, and quantum hypothesis testing in Sec.~\ref{sec:framework}. We then proceed by discussing the difference between our approach to 
conventional parameter estimation in Sec~\ref{sec:cont-hyp-testing} and possible ways to take the continuous limit of multihypothesis 
testing, before introducing the \textit{twin-peaks test} in Sec.~\ref{sec:twin-peaks_test}. In Sec.~\ref{sec:stopping times}, we give an 
analytical estimate of the stopping times of the \textit{twin-peaks test}, and compare sequential strategies to deterministic ones in 
Sec.~\ref{sec:comparison_fixed_size}, extending the analysis also to the quantum setting. Finally, we test the developed framework with two 
paradigmatic examples---the estimation of an unknown phase and unknown purity of a qubit---in Sec.~\ref{sec:phase} and Sec.~\ref{sec:purity}.
\begin{figure*}[t]
\begin{center}
    \includegraphics[width=\linewidth]{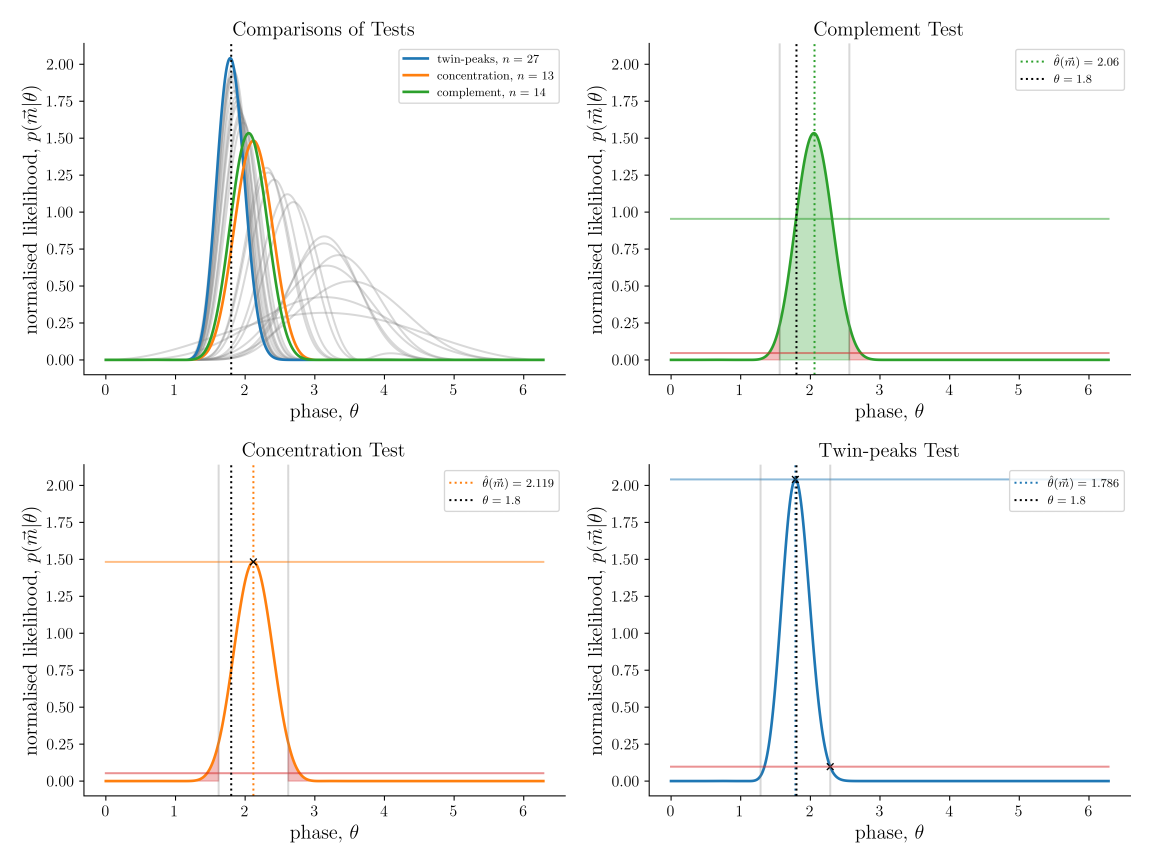}\\
    \caption{The figure shows the three different tests explained in this work. The first image shows the updated normalized likelihood of 
    each round as gray curves, with the three rounds where a test stops highlighted in color. The other three figures show the 
    stopping conditions of the three sequential tests introduced in this work in detail. In green, we see the \textit{complement test}, where 
    the probability of the parameter lying in the interval is compared to the probability that it lies outside said interval. In orange, the 
    \textit{concentration test}, where the maximum likelihood is compared to the average likelihood outside a given interval, and in blue, 
    the \textit{twin-peaks test}, where the maximum likelihood is compared against the highest likelihood outside of the interval. The gray vertical 
    lines indicate the interval, the black dotted line the true value $\vartheta=1.8$, and the colored dotted line the estimator of the corresponding test. 
    The colored horizontal line indicates the value of the test, and in red the compared value. For this example, we have chosen an error 
    $\epsilon=0.05$ and an interval size $2\delta=1$.}
    \label{fig:diff-tests}
\end{center}
\end{figure*}

\section{Framework}\label{sec:framework}
We begin by briefly reviewing the main concepts on which our framework is based. Both hypothesis testing and parameter estimation involve 
making inferences about unknown parameters that index a statistical model, either by selecting plausible values or by deciding between 
competing subsets of the parameter space. These parameters are not directly measurable, but are inferred from observed data via the 
likelihood function---specified by the model---which quantifies how probable the observed data are for different parameter values. The main 
difference between the two frameworks lies in the nature of the inference task. In hypothesis testing, the goal is to decide between 
competing hypotheses, which typically correspond to subsets of the parameter space (such as a null hypothesis versus an alternative), 
whereas in parameter estimation, the goal is to infer the value of the parameter itself. This difference is reflected in commonly used FOMs 
for the two tasks: error probabilities (e.g., type-I or type-II error rates) for the case of hypothesis testing, and 
distance measures in parameter space (such as the MSE) for parameter estimation.


\subsection{Parameter estimation}\label{sec:parameter_estiamtion}
Parameter estimation can be approached from either a frequentist or a Bayesian perspective and may involve either local or global figures of merit.
We begin by briefly reviewing the main concepts of parameter estimation that will serve as reference for the parameter-testing framework introduced
below.

In the frequentist approach, the parameter, $\vartheta$, is treated as fixed but unknown. The observed data, $m$, which may represent either a single 
outcome or a complete measurement record, are distributed according to the conditional probability distribution $p_\vartheta(m) = p(m\mid \vartheta)$. An estimator 
$\hat{\vartheta}(m)$ is a function of the data that assigns an estimate to the unknown parameter. Its performance is evaluated over hypothetical 
repetitions of the experiment at the same parameter value. A standard FOM is the pointwise MSE
	\begin{equation}
    		V(\hat{\vartheta}\mid\vartheta):=\int p(m\mid \vartheta)\bigl(\hat{\vartheta}(m)-\vartheta\bigr)^2\,\dd m \, .
	\label{eq:MSE}
	\end{equation}
For unbiased estimators, this quantity coincides with the variance.

For regular statistical models, the local sensitivity of the outcome distribution to changes in the parameter is quantified by the Fisher
information
	\begin{equation}
    		I(\vartheta):=\int p(m\mid \vartheta)\left[\partial_{\vartheta}\log p(m\mid \vartheta)\right]^2\dd m \, .
	\label{eq:FI}
	\end{equation}
Under the usual regularity and unbiasedness assumptions, the Cram\'er-Rao bound lower bounds the variance by~\cite{rao1992information,cramer1999mathematical},
	\begin{equation}
    		V(\hat{\vartheta}\mid\vartheta)\ge\frac{1}{I(\vartheta)}\, .
	\label{eq:CR_bound}
	\end{equation}
For independent observations the Fisher information is additive, and regular estimators such as the maximum-likelihood estimator can 
asymptotically attain the corresponding bound. Hence, the Fisher information provides a local, pointwise characterization of estimation precision.

For finite data, however, the problem is generally global, since several well-separated parameter values may remain compatible with the observations. 
Bayesian estimation provides a natural framework for this regime: the unknown parameter is assigned a prior distribution, $\pi(\vartheta)$, that captures 
the initial knowledge of the parameter. Upon observing data $m$, our prior knowledge about $\vartheta$ is updated to the posterior distribution according 
to Bayes's rule~\cite{DAgostini2003,Bolstad2009,AmaralTurkmanPaulinoMuller2019}
    	\begin{equation}\label{eq:Bayes}
        	p(\vartheta\mid m)=\frac{p(m\mid\vartheta)\, \pi(\vartheta)}{\int p(m\mid\vartheta) \, \pi(\vartheta)\,\dd \vartheta}=\frac{p(m\mid\vartheta)\,
        	\pi(\vartheta)}{p(m)}\,.
    	\end{equation}
The posterior, therefore, quantifies the remaining uncertainty over the full parameter space. The AvMSE associated with an estimator 
$\hat{\vartheta}(m)$ is
    	\begin{equation}
        	\bar V = \iint p(\vartheta,m)\bigl(\vartheta-\hat{\vartheta}(m)\bigr)^2\,\dd\vartheta\,\dd m\, ,
    	\label{eq:AvMSE}
    	\end{equation}
and the estimator minimizing the AvMSE is the posterior mean,
    	\begin{equation}
        	\hat{\vartheta}(m) = \int p(\vartheta\mid m)\,\vartheta\,\dd\vartheta\, .
    	\label{eq:posteriormean}
    	\end{equation}
For this estimator, the outcome-dependent posterior variance is
    	\begin{equation}
        	V(m) = \int p(\vartheta\mid m)\bigl(\vartheta-\hat{\vartheta}(m)\bigr)^2\,\dd\vartheta\, ,
    	\label{eq:post_var}
    	\end{equation}
which captures the level of certainty in the estimation and can be used to monitor the progress of the estimation procedure.
Averaging over all measurement outcomes returns the AvMSE.


\subsection{Sequential hypothesis testing}\label{sec:sequential}
In hypothesis testing, two or more hypotheses are compared to determine which is the most compatible with the observed data. Sequential strategies are 
characterized by a variable number of test rounds. This has two main advantages. First, once a hypothesis is accepted, it is done so with the desired 
error in the decision (strong error condition), as opposed to the weaker condition to reach this prescribed error on average. Second, a variable sample size 
can lead to substantial savings in resources.

Let us illustrate this with a toy example. Imagine we want to test the two hypotheses that a coin flip results in `heads' with probability 1 
(hypothesis 1) against a probability of 0.9 (null hypothesis). Clearly, as soon as we observe the outcome `tail', we can discard hypothesis 1 and accept 
the null hypothesis. Conditioned that the null hypothesis holds, we will, on average, observe a `tail' once every 10 coin flips. Testing with a fixed 
sample size of 10 flips will, in some runs, result in an outcome `tail' earlier than the 10th flip, whereas we might not see `tail' at all on other 
runs, leading us to accept the wrong hypothesis. Testing with a variable sample size, we can stop when we observe a `tail' for the first time. Only 
if we do not observe `tail' for a long time, we might wrongly accept hypothesis 1.

Wald~\cite{wald_sequential_1945} formalized this idea in the \textit{sequential probability ratio test} (SPRT) formulated in the following way.
Let $H_x$ be the hypothesis that we would like to test and $\Vec{m}_k=(m_1,\dots,m_k)$ be the outcomes of the first $k$ measurements. 
We want to compare hypothesis $H_x$ with the mutually exclusive null hypothesis $H_y$, where we assume that 
the two hypotheses are exhaustive. We require that, whenever the test accepts hypothesis $H_x$, the (posterior) probability of having made 
the correct decision is greater than a desired error, i.e.,
	\begin{equation}
    		P(H_x\mid \Vec{m}_k)\ge 1-\epsilon_x\, .
	\label{eq:posterior_error}
	\end{equation}
Since the two hypotheses are mutually exclusive and collectively exhaustive, this is equivalent, by Bayes' rule, to the stopping condition, 
	\begin{equation}
    		\frac{P(\Vec{m}_k\mid H_x)\pi(H_x)}{P(\Vec{m}_k\mid H_y)\pi(H_y)}=\frac{P(H_x\mid \Vec{m}_k)}{P(H_y\mid \Vec{m}_k)}\ge\frac{1-\epsilon_x}{\epsilon_x}
    		=:A_{\epsilon_x}\, .
	\label{eq:stopping_condition_binary}
	\end{equation}
After each measurement round, we check whether the stopping condition is satisfied for either hypothesis. If neither condition is met, the test continues, 
and a new datum is collected. The test therefore stops only when the accepted hypothesis reaches the required level of error tolerance. This run-by-run 
guarantee is known as a \textit{strong error condition}~\cite{Slussarenko_2017}. By contrast, fixed-sample-size strategies do not generally satisfy the same run-by-run 
stopping condition. Their performance is instead specified through error probabilities averaged over the possible measurement outcomes, either separately under each 
hypothesis or according to a chosen prior over the hypotheses. This is commonly referred to as a \textit{weak error condition}.

The stopping time, $N$, is defined as the minimal $k$, such that one of the stopping conditions of \eqnref{eq:stopping_condition_binary} is met
    \begin{equation}
        N=\inf\left\{k:\frac{P(\Vec{m}_k\mid H_x)}{P(\Vec{m}_k\mid H_y)}\geq A_{\epsilon_x}\frac{\pi(H_y)}{\pi(H_x)}\right\}\, .
    \label{eq:stopping_time}
    \end{equation}
For two simple hypotheses and i.i.d. observations, the log-likelihood ratio has, under $H_x$, mean increment 
    \begin{equation} 
		D\!\left( p_{x}\,\middle\|\,p_{y} \right) = \sum_m P(m\mid H_x) \log \frac{P(m\mid H_x)}{P(m\mid H_y)}, 
	\label{eq:KL_divergence} 
	\end{equation} 
which corresponds to the well-known relative entropy, or Kullback-Leibler (KL) divergence, between the two probability distributions, where $p_x(m)=P(m\mid H_x)$.
Consequently, in the small-error limit and neglecting the lower-order contribution from boundary overshoot, the expected stopping time 
conditioned on $H_x$ satisfies \cite{wald_sequential_1945,baum1994sequential}
	\begin{equation} 
		\mathds{E}(N\mid H_x) = \frac{ \log A_{\epsilon_x} + \log\!\left[ \pi(H_y)/\pi(H_x) \right] }{ D\!\left( p_{x}\,\middle\|\,p_{y} \right) } + 
		\mathcal{O}(1)\, . 
	\label{eq:asymptotic_stopping_time_binary} 
	\end{equation} 
Equivalently, 
	\begin{equation} 
		\mathds{E}(N\mid H_x) \sim \frac{\log(1/\epsilon_x) }{ D\!\left( p_{x}\,\middle\|\,p_{y} \right) } 
	\end{equation} 
as $\epsilon_x\to0$, provided that the prior odds remain fixed.

For multiple mutually exclusive and collectively exhaustive hypotheses, the strong-error condition can be imposed by comparing the posterior probability 
of $H_x$ with the total posterior probability of all competing hypotheses,
	\begin{equation}
    		\frac{P(H_x\mid\Vec{m})}{\sum\limits_{y\neq x}P(H_y\mid\Vec{m})}\geq A_{\epsilon_x}\, ,
	\label{eq:condition_sum}
	\end{equation}
where $A_{\epsilon_x}=(1-\epsilon_x)/\epsilon_x$. Since
	\begin{equation}
    		\sum_{y\neq x}P(H_y\mid\Vec{m})=1-P(H_x\mid\Vec{m}),
	\end{equation}
this condition is equivalent to the \emph{strong-error conditions}
	\begin{equation}
    		P(H_x\mid\Vec{m})\geq 1-\epsilon_x,
	\end{equation}
i.e., it guarantees the prescribed posterior probability of a correct decision~\cite{baum1994sequential,VeeravalliBaum1995}.

An alternative is to compare $H_x$ only with its strongest individual competitor,
	\begin{equation}
    		\frac{P(H_x\mid\Vec{m})}{\max\limits_{y\neq x}P(H_y\mid\Vec{m})}\geq A_x,
	\label{eq:condition_max}
	\end{equation}
where $A_x>1$ is an \textit{evidence threshold}. 
Unlike \eqnref{eq:condition_sum}, this \emph{maximum-competitor condition} does not directly bound 
the total posterior probability of an incorrect decision, since the posterior weights of several competing hypotheses may add up to a substantial value.
For the same numerical threshold, however, the maximum-competitor condition is generally satisfied earlier than the complement condition, because the 
probability of the strongest individual competitor cannot exceed the total probability of all competitors. This potentially reduces the stopping time, 
but the thresholds of the two tests should not be assigned the same error-probability interpretation without an additional calibration~\cite{Armitage50,Lorden77}.
Under standard regularity assumptions, both associated multi hypothesis sequential tests are asymptotically optimal, in the regime of small error probabilities 
or decision risks, with respect to the expected sample size~\cite{Draglia_1999}.


\subsection{Sequential quantum hypothesis testing}\label{sec:quantum}

We now review sequential quantum hypothesis testing between two fixed states.
In quantum hypothesis testing, simple hypotheses correspond to possible states of a quantum system, while the data are obtained from the outcomes of quantum measurements. 
The fixed-sample setting, where the number of copies of the unknown state is specified in advance, has a long history dating back to the pioneering works of 
Refs.~\cite{helstrom1969quantum,HOLEVO1973337,yuen1975optimum,HiaiPetz1991} and has since been extensively 
developed~\cite{OgawaNagaoka2000,Hayashi2002,Audenaert2007,Tsang_2012,Mosonyi_2014}. By contrast, sequential quantum hypothesis testing has 
only recently begun to receive significant 
attention~\cite{vargas2021quantum,Li_2022,Gasbarri_2024,perez2025bounds,rodasalichs2025sequentialanalysiscontinuousspinnoise,zecchin2025quantumsequentialuniversalhypothesis,simpson2026optimalerrorexponentscomposite}. These works demonstrate that the data-efficiency advantages of sequential analysis extend 
to quantum state discrimination, at least for binary simple hypotheses.

A distinctive feature of the quantum setting is that the measurement itself becomes part of the optimization problem.
Under hypotheses $H_0$ and $H_1$, every requested copy is prepared in one of two quantum states, respectively.
After each measurement round, the protocol either accepts one of the hypotheses or requests an additional copy.
The most general sequential quantum strategy need not measure the copies
individually. Conditioned on all previous ``continue'' outcomes, the first $k$
copies may remain in a joint state stored in a quantum memory. At round $k+1$,
a fresh copy is appended to this conditional state and a general quantum
instrument is applied to the enlarged system. The instrument has two
conclusive outcomes, associated with accepting $H_0$ or $H_1$, and a third
outcome that updates the quantum memory and continues the protocol. This
framework includes adaptive operations, weak measurements, quantum memories,
and collective measurements on an increasing number of copies
\cite{vargas2021quantum}.
In quantum metrology, the term \textit{sequential} is sometimes used for protocols where a fixed number of probes are 
interrogated consecutively while preserving quantum coherence throughout the
protocol~\cite{Giovannetti2006,Bavaresco2021,Bavaresco_2022,andre2026strategyoptimizationbayesianquantum}. In this article, \textit{sequential} refers
exclusively to protocols in which the number of copies is determined adaptively from previous measurement outcomes.

Optimizing directly over all such strategies is generally intractable. The
instruments at different rounds are coupled through the conditional quantum
state produced by all previous ``continue'' outcomes, so neither the
measurement nor the dimension of the retained system is fixed. Nevertheless,
universal lower bounds on the mean stopping time can be obtained without
solving this optimization explicitly.

Consider binary discrimination between two states $\rho$ and $\sigma$, corresponding to hypothesis $H_0$ and $H_1$, respectively.
Every sequential quantum strategy satisfying an
evidence threshold $A$ obeys, in the large-threshold regime,
\begin{equation}\label{eq:asymptotic_stopping_time_quantum}
		\mathds{E}(N\mid \rho)\ge\frac{\log A}{D(\rho\|\sigma)}+\mathcal{O}(1)\, ,
	\end{equation}
where
\begin{equation}
		D(\rho\|\sigma)=\tr\!\left[\rho(\log\rho-\log\sigma)\right]\, ,
	\label{eq:quantum_KL}
	\end{equation}
is the quantum relative entropy~\cite{vargas2021quantum}.
The bound of~\eqnref{eq:asymptotic_stopping_time_quantum} is directional: when $H_0$ is true, the relevant rate is
$D(\rho\Vert\sigma)$, whereas under $H_1$ it is
$D(\sigma\Vert\rho)$. Importantly, this bound depends only on the states themselves and applies to the fully general class of sequential quantum instruments described above.

A much simpler strategy repeatedly applies the same single-copy positive operator-valued measure (POVM) $\mathcal M=\{E_m\}_m$. This produces i.i.d. classical distributions 
\begin{equation}
    p_{\mathcal M}(m) = \tr(E_m\rho),
    \qquad
    q_{\mathcal M}(m) =   \tr(E_m\sigma),
\end{equation}
that follow the asymptotic behavior of~\eqnref{eq:asymptotic_stopping_time_binary}.
Optimizing the bound individually, conditioned on one hypothesis, gives the \emph{measured relative entropy}
    \begin{equation}\label{eq:Measured_RE}
		D_{\mathrm{meas}}(\rho\|\sigma):=\sup_{\mathcal M}D\!\left(p_{\mathcal M}\middle\|q_{\mathcal M}\right)\, ,
	\end{equation}
which maximizes the classical relative entropy over all single-copy POVMs~\cite{Berta_2017}.
The two measured relative entropies are the optimal directional evidence rates obtainable by repeating a suitably chosen single-copy measurement.

An intermediate and analytically tractable class of quantum strategies is
obtained by measuring independent blocks of copies. Consider a POVM
$\mathcal M^{(K)}$ acting jointly on $K$ copies and repeat this block
measurement independently. If
$p_{\mathcal M^{(K)}}$ and $q_{\mathcal M^{(K)}}$ denote the two
block-outcome distributions, the expected numbers of consumed copies satisfy
\begin{align}
    \mathds{E}_{\mathcal M^{(K)}}(N\mid H_0)
    &\sim
    \frac{
        K\log A
    }{
        D\!\left(
            p_{\mathcal M^{(K)}}
            \Vert
            q_{\mathcal M^{(K)}}
        \right)
    }\, .
\label{eq:block_quantum_measurement_stopping}
\end{align}
This is not the most general sequential quantum strategy: quantum information
is discarded after every block and only the classical block outcomes are
retained. Its advantage is that the measurement rounds are i.i.d., so the
classical sequential theory applies directly.

Optimizing over the $K$-copy block measurement gives the evidence rate per
copy
\begin{equation}
    \frac{1}{K}
    D_{\mathrm{meas}}
    \left(
        \rho^{\otimes K}
        \Vert
        \sigma^{\otimes K}
    \right)\,  .
\end{equation}
Hiai and Petz showed that this rate converges to the quantum relative entropy,
while Hayashi provided an explicit sequence of collective measurements
attaining the limit~\cite{HiaiPetz1991,Hayashi2002},
\begin{equation}
    \lim_{K\rightarrow\infty}
    \frac{1}{K}
    D_{\mathrm{meas}}
    \left(
        \rho^{\otimes K}
        \Vert
        \sigma^{\otimes K}
    \right)
    = D(\rho\Vert\sigma)\, .
\label{eq:regularized_measured_RE}
\end{equation}
Choosing an increasing block size sufficiently slowly ensures both
convergence of the evidence rate and a diverging number of measured blocks,
thereby attaining
\begin{equation}
    \mathds{E}(N\mid H_0)
    = \frac{\log A}{D(\rho\Vert\sigma)} \left[1+o(1)\right].
\end{equation}

A complication remains: a block measurement optimized for
$D(\rho\Vert\sigma)$ need not be optimal for
$D(\sigma\Vert\rho)$. Since the true hypothesis is unknown, repeating one
fixed measurement need not attain both directional bounds simultaneously.
The same issue arises already at the single-copy level, where a POVM attaining
$D_{\mathrm{meas}}(\rho\Vert\sigma)$ need not attain
$D_{\mathrm{meas}}(\sigma\Vert\rho)$.
Li, Tan, and Tomamichel showed that this directional conflict can be resolved
adaptively~\cite{Li_2022}. Their strategy uses the accumulated data to identify
the hypothesis currently favored by the likelihood. When $H_0$ is favored,
the next measurement is chosen to optimize evidence in the direction
$\rho$ against $\sigma$; when $H_1$ is favored, a measurement optimized for
the reverse direction is used. Under the true hypothesis, the fraction of
rounds in which the wrong directional measurement is selected becomes
asymptotically negligible. Consequently, both single-copy directional rates
can be attained within the same adaptive protocol,
\begin{align}
    \mathds{E}(N\mid H_0)
    &\sim
    \frac{
        \log A
    }{
        D_{\mathrm{meas}}(\rho\Vert\sigma)
    }\, .
\label{eq:adaptive_measured_RE_rates}
\end{align}

The construction extends to collective blocks. At each round, the protocol
selects the $K$-copy block measurement optimized for the hypothesis currently
favored by the data. Taking first the large-threshold limit and then allowing
the block size to increase, the directional rates approach the corresponding
quantum relative entropies. Hence,
\begin{align}
    \mathds{E}(N\mid H_0)
    &\sim
    \frac{
        \log A
    }{
        D(\rho\Vert\sigma)
    }\, ,
\label{eq:adaptive_quantum_RE_rates}
\end{align}
under the sampling constraints and regularity assumptions considered in
Ref.~\cite{Li_2022}. The adaptive block strategy therefore attains the
universal quantum limits without requiring a single measurement to be
simultaneously optimal in both directions.

The support conditions require separate attention. The quantum relative
entropy is finite when
$\operatorname{supp}\rho\subseteq\operatorname{supp}\sigma$ and diverges
otherwise. If $D(\rho\Vert\sigma)=\infty$, there exists a measurement outcome
that occurs with nonzero probability under $\rho$ but never under $\sigma$.
Repeatedly testing for such an outcome can identify $\rho$ without error and
with finite mean stopping time when $H_0$ is true. This statement is
directional and does not imply the same conclusion under $H_1$, since
$D(\sigma\Vert\rho)$ may remain finite. For two distinct pure states, both
directional relative entropies diverge, and sequential unambiguous
discrimination can achieve zero error with finite mean sample size.

The results reviewed above concern two fixed simple hypotheses. For a finite
collection of hypotheses, the binary bounds remain valid pairwise: conditioned
on $H_x$, the stopping time is constrained by the competitor that is hardest
to distinguish from $H_x$. Achieving the corresponding multihypothesis limit
requires a policy capable of selecting measurements that accumulate evidence
efficiently against all relevant competitors. The binary adaptive construction
motivates choosing the measurement according to the hypothesis currently
favored by the data, but it does not by itself establish a general
attainability theorem for multiple or continuously many hypotheses. In the
parameter-testing setting developed below, we therefore use the binary results
as pairwise quantum benchmarks while treating the optimization over the
continuum of competing parameter values explicitly.

Finally, we note that sequential quantum hypothesis testing has recently been extended beyond simple hypotheses to 
settings involving fixed composite 
hypotheses~\cite{zecchin2025quantumsequentialuniversalhypothesis,simpson2026samplecomplexitycompositequantum,simpson2026optimalerrorexponentscomposite}. 
These developments are closely connected to generalized versions of the quantum Stein 
lemma~\cite{hayashi2025generalizedquantumsteinslemma,Lami2025GQSL,lami2025generalisedquantumsanovtheorem,lami2025doublycompositechernoffsteinlemma}.


\section{Sequential parameter testing}\label{sec:cont-hyp-testing}
Binary sequential hypothesis testing discriminates between two simple hypotheses, corresponding to two values of an unknown parameter. In many 
estimation scenarios, however, there is no natural reason to restrict attention to a finite set of parameter values. The question arises how a 
sequential test can be designed to single out the parameter value most compatible with the observed measurement data, and when this data is conclusive 
enough to reach a decision. A natural sequential approach is to employ Bayesian updating and stop once the posterior variance falls below a 
prescribed threshold. Since the posterior variance generally depends on the measurement outcomes, the stopping time is random, allowing the desired 
estimation precision to be certified on a run-by-run basis. 

A stopping rule based on the posterior variance reflects the objectives of parameter estimation. Here, however, we are interested in a different objective, 
more closely aligned with hypothesis testing. In conventional parameter estimation, FOMs such as the MSE penalize estimation errors according to their 
distance from the true parameter value. Instead, we consider settings in which the operational distinction is between acceptable and unacceptable 
parameter values: all values within a prescribed tolerance region are regarded as equivalent for the task, whereas values outside this region are treated as 
competing alternatives, irrespective of their precise distance from it, very much in line with quantum certification procedures~\cite{Pallister2018}. Motivated by this threshold-based objective, we introduce a framework for 
\emph{parameter testing} as a continuous analogue of hypothesis testing.

Within this framework, we develop a sequential testing procedure, the \emph{twin-peaks test}, which is described in the next section. When passing from 
a finite set of hypotheses to a continuum, several subtleties arise. To motivate the \textit{twin-peaks test}, we first examine a number of alternative approaches 
and discuss their limitations.

Let $\vartheta\in\Theta$ be an unknown parameter in a convex and compact parameter space distributed according to the prior probability density function 
$\pi(\vartheta)$. To estimate the parameter, one has access to a series of measurement outcomes $\Vec{m}$ sampled from the set of all possible measurement 
outcomes $\{\Vec{m}\}$. Each outcome occurs with conditional probability distribution $P(\Vec{m}\mid \vartheta)$, and one can calculate the posterior probability 
density function $p(\vartheta\mid \Vec{m})$ by Bayes' rule.

A naive approach might discretize the parameter space into a finite number of compact and convex regions, $\Theta_i$, and associate with each region 
the hypothesis that the parameter lies in it, $H_i:\, \vartheta\in\Theta_i$. If the regions form a partition, that is, every two regions are disjoint, and the regions 
cover the entire parameter space, the set of hypotheses is mutually exclusive and exhaustive. The probability of a hypothesis after 
observing outcomes $\Vec{m}$ is given by the integral of the posterior over the corresponding region $\int_{\Theta_i} p(\beta\mid \vec{m})\dd\beta$.
This approach has two related drawbacks arising from the predetermined partition. First, the reported estimate is restricted to a fixed collection of cells, so the 
location and resolution of the accepted region depend on an arbitrary discretization chosen before observing the data. Second, two nearby parameter values lying on 
opposite sides of a cell boundary are treated as different hypotheses, whereas more distant values within the same cell are treated as equivalent. As a result, if the 
true parameter lies close to a boundary, posterior evidence may remain divided between neighboring cells, substantially delaying or even preventing the test from 
reaching its stopping threshold.

To circumvent these partition artifacts, the methods introduced below allow the center of the tolerance region to vary continuously with the observed data. The 
operational resolution remains fixed by the prescribed tolerance $\delta$, but the accepted region is centered on the parameter value favored by the data and, under 
standard consistency conditions, converges towards the true parameter.

The first possibility is to extend the strong-error condition of \eqnref{eq:condition_sum} to a continuous parameter space. 
For every $\vartheta\in\Theta$, we define the composite neighbourhood hypothesis
	\begin{equation}
    		H_\vartheta:\varphi\in B_\delta(\vartheta)\, ,
	\end{equation}
where $\varphi$ denotes the unknown parameter and $\vartheta$ is the center of the tolerance region. Since the hypotheses $H_\vartheta$ overlap, 
they are not mutually exclusive and cannot be compared through the sum of the probabilities of all remaining hypotheses, as in \eqnref{eq:condition_sum}.
We therefore define the set of $\delta$-separated alternatives as
	\begin{equation}
    		B^{c}_\delta(\vartheta):=\Theta\setminus B_\delta(\vartheta)\, .
	\label{eq:barBall}
	\end{equation}
Each $H_\vartheta$ is then compared with its complementary hypothesis
	\begin{equation}
    		H_{\bar\vartheta}:\varphi\in B^{c}_\delta(\vartheta)\, .
	\end{equation}
This associates a Bayesian strong-error condition with every hypothesis $H_\vartheta$: whenever $H_\vartheta$ is accepted, its posterior probability
must satisfy
\begin{equation}
    P(H_\vartheta\mid\Vec{m})\geq 1-\epsilon\, .
\end{equation}
All these conditions are evaluated in parallel, and the global procedure stops as soon as the condition is satisfied for at least one $\vartheta\in\Theta$. 
We refer to this procedure as the \textit{complement test} (see App.~\ref{app:complement_test} for a detailed construction).

The complement test avoids the artifacts introduced by a fixed partition since the tolerance region is centered according to the observed data rather than on a 
predetermined grid. Its main practical limitation is that $H_\vartheta$ and $H_{\bar\vartheta}$ are composite hypotheses. Their evidence cannot be obtained by 
evaluating the likelihood at a single parameter value. Instead, one must specify a prior or reference measure and integrate the pointwise likelihood over the 
corresponding parameter regions. These integrals must be evaluated for every candidate center and recalculated after each measurement round, making 
the complement test explicitly prior-dependent and computationally demanding. 

To avoid these integrations, one may instead formulate the test directly in terms of pointwise likelihood or posterior-density values. In a continuous
parameter space, however, a point hypothesis $H_\vartheta:\varphi=\vartheta$ corresponds to a set of measure zero. It is therefore not meaningful 
to compare its posterior probability with the total posterior probability of the remaining parameter values, as in the discrete multihypothesis condition of 
\eqnref{eq:condition_sum}~\cite{baum1994sequential,VeeravalliBaum1995}. A possible alternative is to compare the posterior density associated with 
$H_\vartheta$ with the average posterior density over the set of $\delta$-separated alternatives $B^c_\delta(\vartheta)$, defined in ~\eqnref{eq:barBall}.
The resulting ratio compares $p(\vartheta\mid\Vec{m})$ with the average posterior density over parameter values separated from $\vartheta$ by at least 
$\delta$. The test stops once this ratio exceeds a prescribed threshold for some $\vartheta$, and the maximizing value may be reported as the estimate. 
We refer to this procedure as the \textit{concentration test}; it is illustrated in orange in Fig.~\ref{fig:diff-tests} and presented in detail in 
App.~\ref{app:concentration_test}.

To further simplify the test, rather than averaging over all distinguishable alternatives, one may compare the best-fitting parameter value only with its strongest 
competitor. This principle underlies several multihypothesis testing procedures, in which the probability of one hypothesis is compared with the maximum probability 
among the remaining hypotheses~\cite{Armitage50,Lorden77,Draglia_1999}. Building on this idea, we introduce a test that we coin twin-peaks test---shown in blue in 
Fig.~\ref{fig:diff-tests}. We next describe the test in detail, as it underpins the main results of this work.


\section{The twin-peaks test}\label{sec:twin-peaks_test}
We now formulate the continuous analogue of the maximum-competitor condition in \eqnref{eq:condition_max}. In the discrete setting, the posterior
probability of a candidate hypothesis is compared with that of its strongest competitor. In a continuous parameter space, point hypotheses have zero
posterior probability, but their posterior densities can still be compared once a reference measure has been fixed.

To motivate this comparison, consider two sufficiently small regions of equal volume centered at $\vartheta$ and $\vartheta_0$. To leading order, their
posterior probabilities are proportional to $p(\vartheta\mid\Vec{m})$ and $p(\vartheta_0\mid\Vec{m})$, respectively. As the regions shrink, the ratio of
their posterior probabilities converges to the ratio of the corresponding posterior densities. Therefore, it is meaningful to compare point hypotheses through
posterior-density ratios rather than through their vanishing individual probabilities. Such density ratios are understood relative to the reference
measure used to define the posterior density.

By Bayes' rule, the ratio of the posterior densities at two parameter values $\vartheta$ and $\vartheta_0$ is
	\begin{equation}
    		\frac{p(\vartheta\mid\Vec{m}) }{p(\vartheta_0\mid\Vec{m})}=\frac{ p(\Vec{m}\mid\vartheta)\, \pi(\vartheta) }{ p(\Vec{m}\mid\vartheta_0)\,
		\pi(\vartheta_0)}\, .
	\label{eq:probability_ratio}
	\end{equation}
We denote by
	\begin{equation}
    		\hat{\vartheta}(\Vec{m})\in\operatorname*{arg\,max}_{\vartheta\in\Theta}p(\vartheta\mid\Vec{m})
	\label{eq:MAP_estimate}
	\end{equation}
a current maximum-a-posteriori (MAP) estimate. A direct comparison of $\hat{\vartheta}$ with all remaining parameter values would be trivial, since values of $\vartheta_0$ can approach $\hat{\vartheta}$ arbitrarily closely and by continuity the same holds for their likelihood. It is hence natural
to regard all parameter values within $B_\delta(\hat{\vartheta})$ as operationally equivalent and restrict the
comparison to the set of $\delta$-separated alternatives $B^c_\delta(\hat{\vartheta})$, defined in \eqnref{eq:barBall}.

The stopping condition of the \textit{twin-peaks test} is
	\begin{equation}
    		\frac{p(\hat{\vartheta}\mid\Vec{m})}{\displaystyle\max_{\vartheta_0\in B^c_\delta(\hat{\vartheta})}p(\vartheta_0\mid\Vec{m})}
    		=\frac{p(\Vec{m}\mid\hat{\vartheta})\,\pi(\hat{\vartheta})}{\displaystyle\max_{\vartheta_0\in B^c_\delta(\hat{\vartheta})}
        	p(\Vec{m}\mid\vartheta_0)\,\pi(\vartheta_0)}\geq A_\epsilon\, .
	\label{eq:stopping_condition_twin_peaks}
	\end{equation}
The denominator is the largest posterior density among the $\delta$-separated alternatives and therefore
identifies the strongest competitor to the current MAP estimate.

The twin-peaks condition has a direct worst-competitor interpretation. Once \eqnref{eq:stopping_condition_twin_peaks} is satisfied, every parameter
value outside $B_\delta(\hat{\vartheta})$ is suppressed relative to the MAP estimate by at least the factor $A_\epsilon$. Equivalently, all parameter 
values that remain competitive with the MAP estimate at evidence level $A_\epsilon$ lie within the prescribed tolerance region. The test thus provides 
a posterior-density-ratio certification of the desired resolution. The accepted test identifies the region $B_\delta(\hat{\vartheta})$, whose center 
$\hat{\vartheta}$ may additionally be reported as a point estimate.

For a uniform prior, the MAP estimate in \eqnref{eq:MAP_estimate} coincides with the maximum-likelihood estimate (MLE), and the prior factors in
\eqnref{eq:stopping_condition_twin_peaks} cancel. In this case, the twin-peaks test compares the maximum likelihood directly with the largest
likelihood attained over $B^c_\delta(\hat{\vartheta})$. It therefore requires only pointwise likelihood evaluations and optimizations, rather than
the prior-weighted integrations over parameter regions required by the complement and concentration tests. Moreover, by focusing on the strongest
competitor, the test avoids averaging together alternatives that may be supported very differently by the observed data.

For independent samples, the twin-peaks test may be viewed as a continuous analogue of parallel pairwise SPRTs. The current MAP estimate
$\hat{\vartheta}$ is compared with every $\delta$-separated alternative, and the test stops once all the corresponding posterior-density 
ratios exceed the threshold $A_\epsilon$. Since the strongest competitor produces the smallest of these ratios, it is sufficient to evaluate 
the single maximization appearing in \eqnref{eq:stopping_condition_twin_peaks}. This interpretation is only an analogy with ordinary SPRTs: 
both $\hat{\vartheta}$ and its strongest competitor are selected from the accumulated data and may change between measurement rounds. 
Moreover, all pairwise comparisons are correlated because they are constructed from the same measurement record.

We note  that unlike the complement test, the twin-peaks condition does not directly bound the total posterior probability outside
$B_\delta(\hat{\vartheta})$. The threshold $A_\epsilon$ quantifies the evidence against the strongest individual competitor rather than an exact 
finite-sample posterior error probability. For the twin-peaks test, however, $\epsilon$ should be understood as an
asymptotic calibration parameter rather than as an exact finite-sample probability of error (see \appref{app:Asymptotic_equivalence})


\section{Stopping times and asymptotic behavior}\label{sec:stopping times}

The asymptotic stopping-time bound for binary hypothesis testing given in \eqnref{eq:asymptotic_stopping_time_binary} extends naturally to multiple 
competing hypotheses, in which case the stopping time is governed by the least distinguishable alternative~\cite{baum1994sequential,Draglia_1999}. 
At each stage, the \textit{twin-peaks test} compares the hypothesis corresponding to the parameter $\vartheta$ against all hypotheses $\vartheta_0\notin B_\delta(\vartheta)$, 
and we obtain the expected asymptotic stopping time [cf.~Eq.~\eqref{eq:asymptotic_stopping_time_binary}]
    \begin{equation}\label{eq:asymptotic_stopping_time_classic}
        \mathds{E}(N\mid \vartheta) \sim \max\limits_{\vartheta_0: \|\,\vartheta-\vartheta_0\|\,\ge 
        \delta}\frac{\log{A_\epsilon}+\log\frac{\pi(\vartheta_0)}{\pi(\vartheta)}}{D(\vartheta\|\,\vartheta_0)}\, ,
    \end{equation}
where 
	\begin{equation}
		D(\vartheta\|\,\vartheta_0):=\lim_{n\to\infty}\frac{D(P(\Vec{m}_n\mid\vartheta)\|\, P(\Vec{m}_n\mid\vartheta_0))}{n}\,  
	\label{eq:KL_rate}
	\end{equation}		
denotes the asymptotic Kullback--Leibler rate whenever the limit exists. Furthermore, \eqnref{eq:asymptotic_stopping_time_classic} remains valid whenever, 
under the true parameter $\vartheta$, the normalized log-likelihood ratio satisfies
 	\begin{equation}
		\frac{1}{n} \log\frac{P(\Vec{m}_n\mid\vartheta)}{P(\Vec{m}_n\mid\vartheta_0)}\xrightarrow[P_\vartheta]{\mathrm{a.s.}}
		D(\vartheta\|\,\vartheta_0)
	\label{eq:almost_surely}
	\end{equation}
For i.i.d.\ observations, \eqnref{eq:KL_rate} reduces to the ordinary Kullback--Leibler divergence by the strong law of large numbers.
More generally, \eqnref{eq:asymptotic_stopping_time_classic} remains
valid for arbitrary measurement strategies whenever the normalized log-likelihood ratio converges almost surely according to
\eqnref{eq:almost_surely}. In this case, $D(\vartheta\|\,\vartheta_0)$ is interpreted as the corresponding
asymptotic information rate.  The average expected asymptotic stopping time is then 
    \begin{equation}\label{eq:average_asymptotic_stopping_time}
        \mathds{E}(N)=\int_\Theta \pi(\vartheta)\, \mathds{E}(N\mid \vartheta)\, \dd \vartheta\, .
    \end{equation}
    
For a uniform prior, the optimization depends only on the KL divergence. If the KL divergence increases monotonically with parameter-space distance, the least 
distinguishable alternatives lie on the boundary of the tolerance region, i.e., $|\hat{\vartheta}-\vartheta_0|=\delta$. This is locally true for regular 
identifiable models and sufficiently small $\delta$ by the quadratic expansion of the KL divergence---see \eqref{eq:kl_fisher_local} below. Without global 
monotonicity, however, the optimization must be carried out over all $\delta$-separated alternatives $B_\delta^c(\hat{\vartheta})$.

As discussed in \secref{sec:quantum}, quantum experiments may employ adaptive measurement strategies, so that the sequence of observed outcomes is generally 
not i.i.d. The asymptotic stopping-time bound for local binary hypothesis testing given in
\eqnref{eq:adaptive_measured_RE_rates}
extends directly to the 
\textit{twin-peaks test}, provided the induced log-likelihood ratio satisfies the almost-sure convergence condition \eqnref{eq:almost_surely}. In this case, 
the stopping time is governed by the corresponding asymptotic information rate. 

Whether this bound can be saturated depends on the measurement strategy. If the locally optimal measurement depends on the unknown parameter, then the optimal 
information rate is generally not achievable. Conversely, if there exists a parameter-independent measurement attaining the optimal information rate, then an i.i.d.\ 
measurement strategy saturates the local bound of 

\eqnref{eq:adaptive_measured_RE_rates},
and no adaptive local strategy can asymptotically improve the stopping 
time. If, in addition, the measured relative entropy coincides with the quantum relative entropy, then no collective measurement strategy can perform asymptotically 
better. We will encounter precisely such a scenario in \secref{sec:purity}, where we estimate the purity of a qubit.

The local behavior of the stopping time is determined by the curvature of the information rate. In the i.i.d.\ setting, the local curvature of the Kullback--Leibler 
divergence is given by the Fisher information~\cite{AmariShun-ichi2000Moig} (see also \appref{app:Fisher_bounds} for a derivation),
    \begin{equation}\label{eq:kl_fisher_local}
        D(p(\vartheta)\|\,p(\vartheta+\delta))=\frac{1}{2}I(\vartheta)\delta^2+o(\delta^2)\, .
    \end{equation}

When the likelihood has concentrated around the true value, the closest competing 
parameter outside $B_\delta(\vartheta_n)$ lies at a distance $\delta$.  For small $\delta$
    \begin{equation}\label{eq:asymptotic_stopping_time_fisher}
        \mathbb E(N_{\rm TP}\mid \vartheta)\sim\frac{-2\log\epsilon}{I(\vartheta)\delta^2}\ \,
    \end{equation}
where TP stands for twin-peaks. In \appref{app:complement_test}, we derive the same leading small-$\delta$ and small-$\epsilon$ scaling as the strong Bayesian 
stopping rule of the \textit{complement test}.


\section{Comparison to deterministic strategies}\label{sec:comparison_fixed_size}

We now compare the performance of sequential strategies with tests using a deterministic number of samples.
To compare the two approaches, we require both strategies to satisfy the stopping condition of \eqnref{eq:stopping_condition_twin_peaks}.
For sequential strategies the twin-peaks stopping condition is imposed on every trajectory: for every measurement record on which the sequential test stops, the evidence ratio exceeds the prescribed threshold. We shall refer to it as the  \textit{strong stopping condition}.
To benchmark the sequential protocol against a strategy using a predetermined number $N$ of samples, one could require the same condition to hold for every possible measurement record after $N$ rounds. Such a uniform requirement may be excessively stringent and may not be satisfied by any finite sample size. We therefore adopt a weaker fixed-sample benchmark in which a bounded function of the twin-peaks evidence ratio is controlled on average over the possible measurement outcomes. This choice requires some care: although any monotone transformation of the evidence ratio defines the same trajectory-wise stopping rule, different transformations generally lead to different averages.

We choose the transformation that, in the binary
case, coincides with the posterior probability of error and gives the usual
calibration $A_\epsilon=(1-\epsilon)/\epsilon$.
Specifically, we define
    \begin{align}
        	\eta = \left(\frac{p(\hat{\vartheta}\mid \Vec{m}_k)}{\max\limits_{\|\,\hat{\vartheta}-\vartheta_0\|\,\ge \delta} 
        	p(\vartheta_0\mid\Vec{m}_k)}+1\right)^{-1}\, .
    \label{eq:eta}
    \end{align}
For two fixed and exhaustive hypotheses, $\eta$ is exactly the posterior
probability of error associated with accepting the more probable hypothesis.
For the twin-peaks test, it is instead a bounded measure of the residual
ambiguity between the current estimate and its strongest
$\delta$-separated competitor. It does not include the combined posterior
weight of the remaining alternatives and is therefore not the total posterior
probability of an incorrect estimate.

With the calibration $ A_\epsilon = \frac{1-\epsilon}{\epsilon}$
the twin-peaks condition of Eq.~\eqref{eq:stopping_condition_twin_peaks} is exactly equivalent to
$\eta(\Vec{m}_N)\leq\epsilon$.
This provides an  interpretation of the calibration parameter:
$\epsilon$ bounds, on every stopped trajectory, the residual ambiguity in the
strongest pairwise comparison. It should not be interpreted as an exact
finite-sample probability that the true parameter lies outside
$B_\delta(\hat{\vartheta})$. In~\appref{app:Asymptotic_equivalence} we show that this interpretation holds in the asymptotic limit of large sample sizes.

For a fixed-sample strategy, we instead control the same quantity on average.
We define $N_{\mathrm{fss}}^1(\epsilon)$ as the smallest sample size satisfying
\begin{equation}
    \mathbb E(\eta)
    =
    \sum_{\Vec{m}_N}
        p(\Vec{m}_N)\,
        \eta(\Vec{m}_N)
    \leq
    \epsilon\, ,
\label{eq:smallest_size}
\end{equation}
where the average over the unknown parameter is included in the outcome
probability
\begin{equation}
    p(\Vec{m}_N)
    =
    \int_\Theta
        p(\Vec{m}_N\mid\vartheta)\,
        \pi(\vartheta)\,
        \dd\vartheta\, .
\end{equation}
Thus, $N_{\mathrm{fss}}^1(\epsilon)$ controls the residual ambiguity
on average over both the measurement records and the prior distribution of the
parameter. Unlike the sequential condition, it does not require
$\eta(\Vec{m}_N)\leq\epsilon$ for every possible record.

The prior-averaged criterion may conceal parameter values for which the
fixed-sample strategy performs poorly, particularly when those values have
small prior weight. We therefore also consider the stronger,
parameter-uniform criterion
\begin{equation}
    \max_{\vartheta\in\Theta}
    \mathbb E(\eta\mid\vartheta)
    =
    \max_{\vartheta\in\Theta}
    \sum_{\Vec{m}_N}
        p(\Vec{m}_N\mid\vartheta)\,
        \eta(\Vec{m}_N)
    \leq
    \epsilon\, ,
\label{eq:eta_strong}
\end{equation}
and denote by $N_{\mathrm{fss}}^\infty(\epsilon)$ the smallest deterministic
sample size satisfying it. This condition is independent of the prior and
guarantees the prescribed expected residual ambiguity separately for every
true parameter value. It nevertheless remains an outcome-averaged condition
and is therefore weaker than imposing the trajectory-wise twin-peaks condition
for every possible measurement record.

We thus distinguish three levels of control: the sequential test bounds
$\eta$ trajectory by trajectory,
$N_{\mathrm{fss}}^1(\epsilon)$ bounds its prior-averaged value, and
$N_{\mathrm{fss}}^\infty(\epsilon)$ bounds its outcome-averaged value over the full parameter space. Since the latter two sample sizes are
deterministic, they can be compared directly with the mean sequential stopping
time $\mathbb E \left(N_{\mathrm{TP}}(\epsilon)\right)$.

To obtain an analytical estimate of these fixed-sample benchmarks, consider
an i.i.d.\ measurement strategy and define
\begin{equation}
    D_\delta(\vartheta)
    :=
    \inf_{\vartheta_0\in
    B_\delta^c(\vartheta)}
    D\!\left(
        p_\vartheta
        \big\|
        p_{\vartheta_0}
    \right)\, .
\label{eq:delta_distinguishability_rate}
\end{equation}
For i.i.d. samples, 
\eqnref{eq:asymptotic_stopping_time_classic} or, under the regularity assumptions, \eqnref{eq:asymptotic_stopping_time_fisher},
 the twin-peaks log-evidence grows typically as
$N D_\delta(\vartheta)$. If, in addition, the log-evidence is sufficiently
sharply concentrated around this value that atypical records make a negligible
contribution to the expectation of $\eta$, then
\begin{equation}
    \mathbb E(\eta\mid\vartheta)
    \approx
    \left[
        \exp\!\left(
            N D_\delta(\vartheta)
        \right)
        +1
    \right]^{-1}\, .
\label{eq:eta_concentration_approx}
\end{equation}
Consequently,
\begin{align}
    \mathbb E(\eta)
    &\approx
    \int_\Theta
        \pi(\vartheta)
        \left[
            \exp\!\left(
                N D_\delta(\vartheta)
            \right)
            +1
        \right]^{-1}
        \dd\vartheta\, ,
\label{eq:expected_error}
\\
    \max_{\vartheta\in\Theta}
    \mathbb E(\eta\mid\vartheta)
    &\approx
    \max_{\vartheta\in\Theta}
    \left[
        \exp\!\left(
            N D_\delta(\vartheta)
        \right)
        +1
    \right]^{-1}\, .
\label{eq:wc_error}
\end{align}

For regular identifiable models and sufficiently small $\delta$,
one can take the approximation of~\eqnref{eq:kl_fisher_local} 
to obtain
\begin{align}
    \mathbb E(\eta)
    &\approx
    \int_\Theta
        \pi(\vartheta)
        \left[
            \exp\!\left(
                \frac{N}{2}
                I(\vartheta)\delta^2
            \right)
            +1
        \right]^{-1}
        \dd\vartheta\, ,
\label{eq:expected_error_local}
\\
    \max_{\vartheta\in\Theta}
    \mathbb E(\eta\mid\vartheta)
    &\approx
    \left[
        \exp\!\left(
            \frac{N}{2}
            \delta^2
            \min_{\vartheta\in\Theta}I(\vartheta)
        \right)
        +1
    \right]^{-1}\, .
\label{eq:wc_error_local}
\end{align}
Inverting Eq.~\eqref{eq:wc_error_local} gives
\begin{equation}
    N_{\mathrm{fss}}^\infty(\epsilon)
    \approx
    \frac{
        2\log A_\epsilon
    }{
        \delta^2
        \displaystyle
        \min_{\vartheta\in\Theta}I(\vartheta)
    }\, .
\label{eq:N_fss_asymptotic}
\end{equation}

Equation~\eqref{eq:N_fss_asymptotic} clarifies the potential advantage of
sequential stopping. For a fixed parameter value, sequential and fixed-sample
strategies have the same leading dependence on $\delta$ and
$A_\epsilon$. A deterministic strategy satisfying the parameter-uniform
criterion must, however, be designed for the least informative parameter
value. The sequential test instead adapts its stopping time to the evidence
accumulated in each measurement record and may therefore reduce the
prior-averaged sample cost when the information rate $I(\vartheta)$ depends on the unknown parameter. Maximizing the Fisher information over all possible measurements results in the quantum Fisher information (QFI) $\mathcal{I}(\vartheta)$, which governs the asymptotic behavior of the optimal measurement strategy.


\section{Phase estimation}\label{sec:phase}
We now apply the \textit{twin-peaks test} to the estimation of an unknown phase of a single-qubit system. We are given a probe system in the state 
$\ket{\theta}=(\ket{0}+\mathrm{e}^{i\theta}\ket{1})/\sqrt{2}$ and want to determine its phase $\theta$, where all values in the interval 
$[0,2\pi)$ are equally likely. We treat this problem as a state discrimination task over a continuous set of states. This differs from the standard 
scenario in quantum parameter estimation in that we have no control over the probe state itself, but only over the performed measurement.

We begin by considering single-copy measurements. Since the states $\ket{\theta}$ lie in the equatorial plane of the Bloch sphere, any component of a POVM effect 
orthogonal to this plane does not contribute to the outcome probabilities. Such components can be removed while preserving positivity and completeness, and we may 
therefore restrict without loss of generality to POVMs with effects in the equatorial plane. The phase covariance of the state family and the uniform prior further 
motivate focusing on symmetric equatorial POVMs:
	\begin{equation} 
		E_m = \frac{2}{n} \ket{\phi_m}\!\bra{\phi_m}, \qquad m=0,\ldots,n-1\, ,
	\end{equation} 
where 
	\begin{equation}
		\ket{\phi_m} = \frac{1}{\sqrt{2}} (\ket{0} + \mathrm{e}^{\mathrm{i}\phi_m}\ket{1}), \qquad \phi_m = \frac{2\pi m}{n}\, .
	\end{equation}
The probability of observing outcome $m$ for a probe with phase $\theta$ is 
	\begin{equation}
        	p(m\mid \theta)=\frac{1}{n}\left(1+\cos(\theta-\tfrac{2\pi m}{n})\right)\, .
    	\label{eq:prob_outcome}
	\end{equation}
For even $n$, the POVM can be implemented by uniformly selecting one of $n/2$ equatorial projective measurements. In the continuous limit, this yields the covariant~\cite{HOLEVO1973337,holevo2011probabilistic} 
phase POVM 
	\begin{equation}
		\left\{ E_\phi = \frac{\dd\phi}{\pi} \ket{\phi}\!\bra{\phi} \right\}_{\phi\in[0,2\pi)}, \qquad \int_0^{2\pi}\frac{\dd\phi}{\pi} \ket{\phi}\!\bra{\phi}\, = \one\, ,
	\label{eq:covariant_phase_measurement}
	\end{equation}
with likelihood 
	\begin{equation} 
		p(\phi\mid\theta) = \operatorname{tr} \left[ E_\phi\ket{\theta}\!\bra{\theta} \right] = \frac{1}{2\pi} \left[ 1+\cos(\phi-\theta) \right]\, .	
	\label{eq:covariant_phase_likelihood}
	\end{equation}
Operationally, this corresponds to choosing an equatorial measurement axis uniformly at random and labelling its two outcomes by the corresponding antipodal directions.

\begin{figure}[t]
    \centering
    \hspace{-20pt}
    \includegraphics[width=1.05\columnwidth, trim={10pt 0 50pt 50pt}, clip]{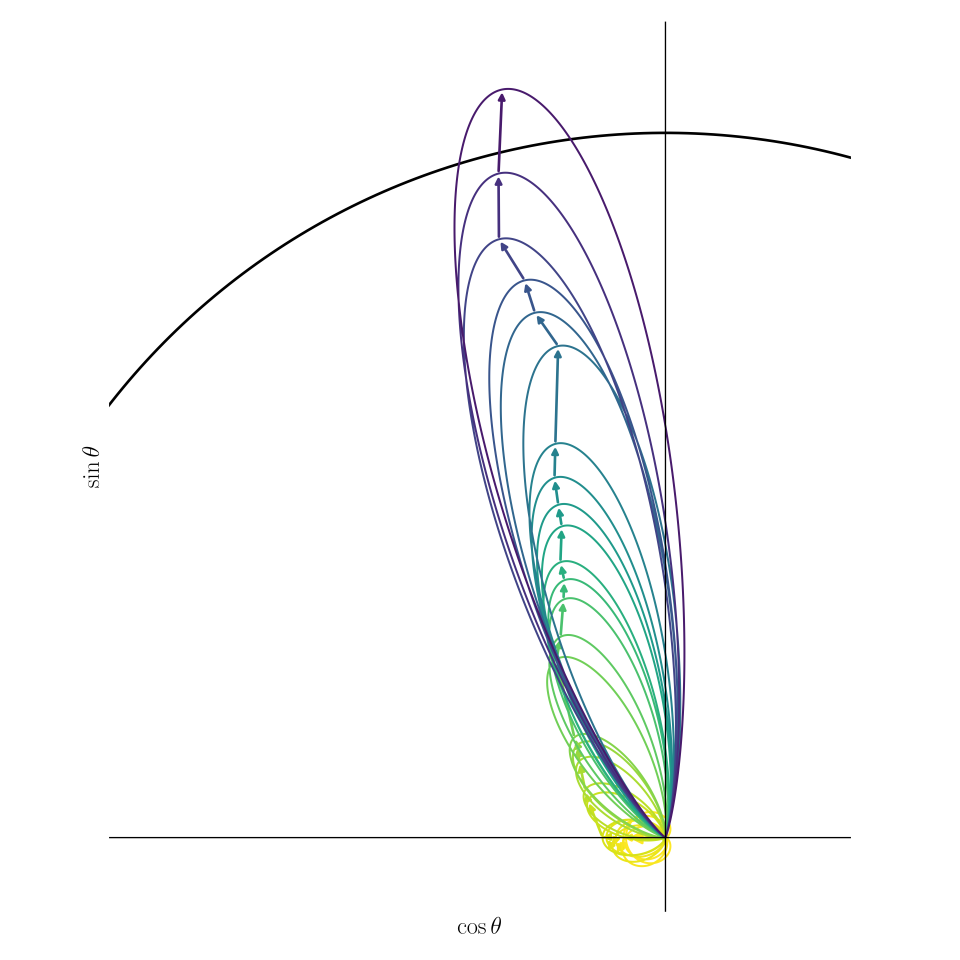}
    \caption{One run of the equatorial phase testing protocol according to the \textit{twin-peaks test}, and choosing a random measurement strategy. 
    The plot is made in phase space, where the horizontal (vertical) axis corresponds to $\cos\theta$ ($\sin\theta$). Each ellipse corresponds to one 
    round, with a color gradient from yellow for early rounds to blue for later ones. In each ellipse, the arrow points towards the most likely phase, 
    $\hat\theta$, of that round. The distance between that point and the origin is the likelihood ratio 
    $p(\vec{m}_k\mid \hat\theta)/p(\vec{m}_k\mid \theta_0)$, where $\theta_0$ is selected as the alternative hypothesis by the \textit{twin-peaks test} given 
    the interval of size $2\delta=1$ around the ML estimate $\hat\theta$. For the rest of the points forming the ellipse, which correspond to all other 
    phases $\theta$, their distance to the origin represents the ratio $p(\vec{m}_k\mid \theta)/p(\vec{m}_k\mid \theta_0)$. Note that here $\theta_0$ is fixed 
    to the same value as before, that is, the most likely phase within the alternative hypothesis set given the interval around $\hat\theta$. The 
    radius of the black arc represents the saturation of the stopping condition Eq.~\eqref{eq:stopping_condition_twin_peaks} for an evidence calibration parameter
    $\epsilon=0.05$. The data comes from the same estimation round shown in Fig.~\ref{fig:diff-tests}.}
    \label{fig:phase_point_test}
\end{figure}
\begin{figure}[t]
    \centering
    \hspace{-15pt}
    \includegraphics[width=1.05\columnwidth]{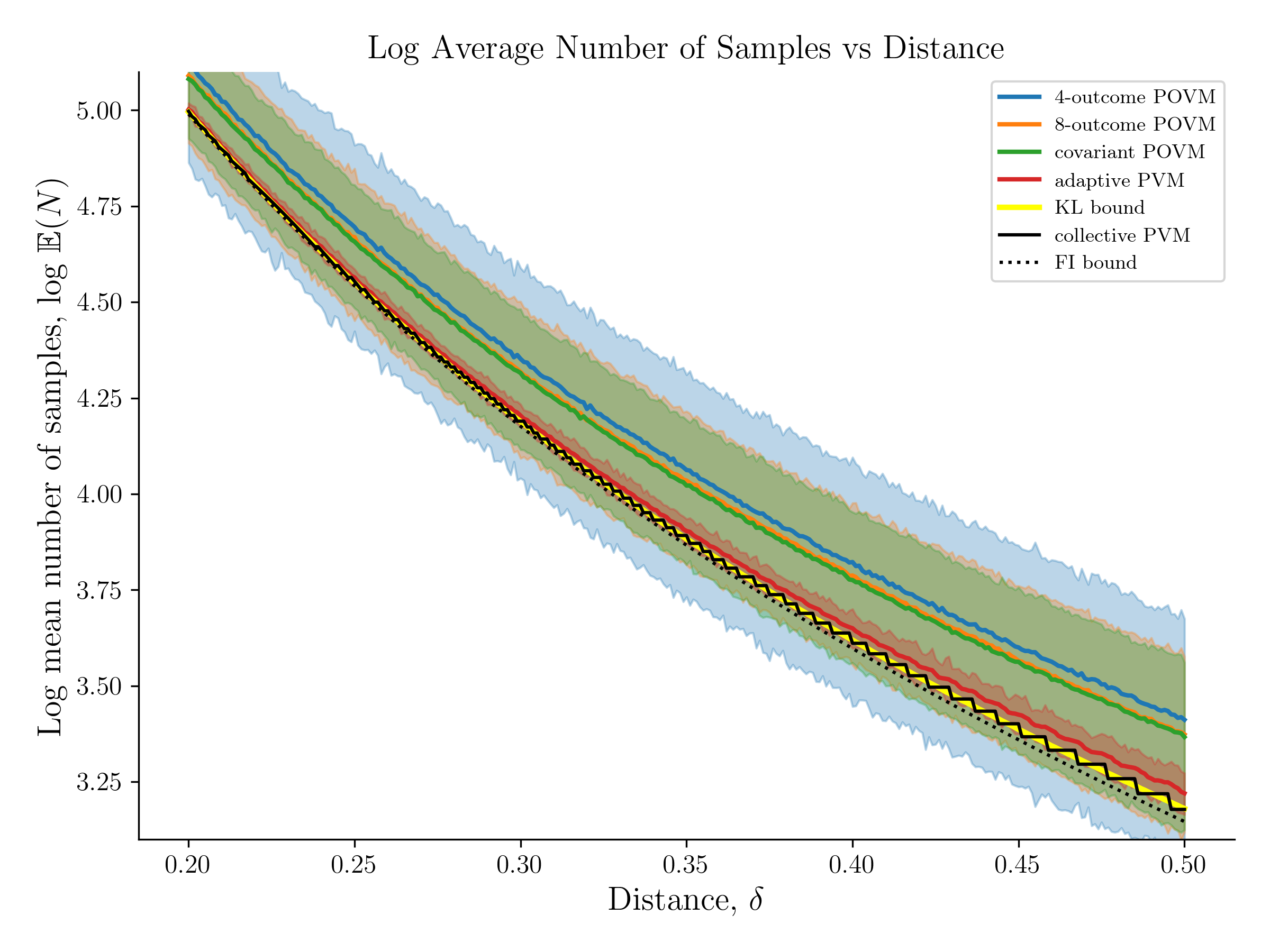}
    \caption{Log-plot of the mean number of rounds until the maximum sequential test stops using the evidence threshold $A_\epsilon=19$, corresponding to the calibration $\epsilon=0.05$. Different colors represent the different strategies explained in the main text; the lines show the mean number of rounds over $10^4$ runs for the different strategies, with their corresponding standard deviation shown as shadow areas. The 4- and 8-outcome POVMs are shown in blue and 
    orange, respectively, the random projective measurement in green, and the adaptive measurement in red. The black line represents the collective 
    strategy with a fixed number of rounds, and the yellow line shows the asymptotic stopping time from the KL divergence for the phase-covariant POVM 
    in Eq.~\eqref{eq:stopping_bound_phase}. The black dotted line shows the approximation for small $\delta$ with the Fisher information of  Eq.~\eqref{eq:asymptotic_stopping_time_fisher}.}
   \label{fig:phase_point_avg_samples}
\end{figure}

A measurement strategy specifies how the measurement performed at each round is selected. The simplest strategies repeatedly apply one of the finite-outcome 
POVMs above, or independently sample from the covariant phase POVM. Their outcomes are i.i.d., which allows their asymptotic stopping times to be related to the KL 
divergence between the corresponding outcome distributions. 

Already in binary sequential quantum hypothesis testing, adaptive measurements are important for simultaneously attaining the optimal directional distinguishability 
rates under both hypotheses~\cite{Li_2022}. The measurements optimized for accumulating evidence in the two directions need not coincide, and the adaptive strategy 
switches between them according to the hypothesis currently favored by the measurement record.

Inspired by this result, we consider a simple greedy adaptive phase-testing strategy. After the first $k$ measurement outcomes, we compute the current maximum-
likelihood estimate $\hat{\theta}(\Vec{m}_k)$. A projective measurement perpendicular to this estimate has maximal local sensitivity to small phase variations, as 
quantified by the Fisher information. We therefore choose the measurement axis in round $k+1$ as 
	\begin{equation}
		\phi_{k+1} = \hat{\theta}(\Vec{m}_k) + \frac{\pi}{2}\, ,
	\label{eq:adaptive_phase_measurement} 
	\end{equation}
and update the estimate after observing the next outcome. At $\theta=\hat{\theta}(\Vec{m}_k)$, the classical Fisher information of this measurement equals the 
single-copy quantum Fisher information. The opposite perpendicular direction defines the same projective measurement with its outcome labels exchanged. After every 
round, we evaluate the twin-peaks condition in \eqnref{eq:stopping_condition_twin_peaks} and stop once it is satisfied. Figure~\ref{fig:phase_point_test} shows one 
realization of this procedure, including the evolution of the likelihood profile and the maximum-likelihood estimate.

For an i.i.d. measurement strategy, the asymptotic stopping time is governed by the least distinguishable $\delta$-separated alternative, as described by 
\eqnref{eq:asymptotic_stopping_time_classic}. For the covariant phase POVM, the KL divergence depends only on the phase difference and can be evaluated exactly 
(see \appref{app:KL_for_phase}), 
	\begin{equation} 
		D\!\left( p(\phi\mid\theta) \big\| p(\phi\mid\theta+\delta) \right) = 2\sin^2\frac{\delta}{2}\, .
	\label{eq:stopping_bound_phase} 
	\end{equation}
The corresponding asymptotic prediction is therefore 
	\begin{equation}
		\mathds{E} \left( N_{\mathrm{TP}}\mid\theta \right) \sim \frac{ \log A }{ 2\sin^2(\delta/2) }\, .
	\label{eq:phase_covariant_stopping}
	\end{equation}
Both the relative entropy and the resulting prediction are independent of $\theta$, as expected from phase covariance. For small $\delta$, 
\eqnref{eq:stopping_bound_phase} reduces to the local Fisher-information expression in \eqnref{eq:asymptotic_stopping_time_fisher}
and 
	\begin{equation}
    		\mathds{E} \left( N_{\mathrm{TP}} \right)\sim\frac{2\log A}{\delta^2}\, .
	\label{eq:phase_covaraint_small_delta_N}
	\end{equation}
Indeed, the classical Fisher information of the covariant measurement is $I(\theta)=1$, which in this example coincides with the single-copy quantum Fisher 
information. This connection concerns the classical outcome distributions generated by the measurement. It should not be interpreted as a local expansion of the 
quantum relative entropy between the pure states, which diverges for every pair of distinct phases. 

Figure~\ref{fig:phase_point_avg_samples} compares the finite-outcome i.i.d. strategies with the random projective and adaptive strategies.  Among the i.i.d. 
measurements, increasing the number of measurement directions improves performance: the 8-outcome POVM outperforms the 4-outcome POVM and approaches the 
random projective strategy. The adaptive greedy protocol performs best among the local schemes. The strategies also differ substantially in their stopping-time 
fluctuations. Finite-outcome i.i.d. measurements, particularly the 4-outcome POVM, may generate extended sequences of weakly informative outcomes and therefore 
exhibit broad, strongly right-skewed stopping-time distributions. These fluctuations decrease as the number of measurement directions increases. The adaptive 
strategy repeatedly measures near the direction of maximal local sensitivity and consequently yields a substantially narrower stopping-time distribution.

The exact KL prediction in \eqnref{eq:phase_covariant_stopping} and its small-$\delta$ Fisher-information approximation from 
\eqnref{eq:asymptotic_stopping_time_fisher} are shown for comparison. The difference between the numerical means and the KL prediction is primarily a finite-
threshold effect. The latter describes the leading asymptotic behavior of a fixed-pair likelihood-ratio process, whereas the twin-peaks protocol continuously updates 
both the estimate and its strongest $\delta$-separated competitor. For the moderate threshold used here ($A_\epsilon=19$), finite-threshold corrections and 
boundary overshoot remain visible.

We finally compare the sequential single-copy strategies with a collective strategy using a predetermined number of copies. The $N$-copy probe state can
be decomposed in the symmetric subspace as
	\begin{equation}
    		\ket{\theta}^{\otimes N}=\frac{1}{2^{N/2}} \sum_{n=0}^{N}\sqrt{\binom{N}{n}}\,  \mathrm{e}^{\mathrm{i}n\theta}   \ket{n}\, 
	\end{equation}
where
    	\begin{equation}
    		\ket{n} := \binom{N}{n}^{-1/2} \sum_{\substack{x\in\{0,1\}^{N}\\|x|=n}} \ket{x}
	\end{equation}
is the normalized symmetric Dicke state with $n$ excitations. The canonical covariant phase POVM on this subspace is
	\begin{equation}
    		\left\{E_\phi^{(N)}=\frac{\dd\phi\ (N+1)}{2\pi}\ket{\phi_N}\!\bra{\phi_N}\right\}_{\phi\in[0,2\pi)}\, ,
	\end{equation}
where
	\begin{equation}
    		\ket{\phi_N}  = \frac{1}{\sqrt{N+1}} \sum_{n=0}^{N} \mathrm{e}^{\mathrm{i}n\phi}  \ket{n}\, ,
	\end{equation}
with
	\begin{equation}
    		\int_0^{2\pi}\frac{\dd\phi\ (N+1)}{2\pi}\ket{\phi_N}\!\bra{\phi_N}\,=\one_{\mathrm{sym}}\, .
	\end{equation}
i.e. it is a valid POVM in the span of the input states.

The corresponding likelihood is
	\begin{equation}
    		p_N(\phi\mid\theta)=\frac{1}{2\pi\,2^N}\left|\sum_{n=0}^{N}\sqrt{\binom{N}{n}}\,\mathrm{e}^{\mathrm{i}n(\theta-\phi)}
    		\right|^2\, .
	\label{eq:collective_covariant_likelihood}
	\end{equation}
\eqnref{eq:collective_covariant_likelihood} depends on the unknown phase and the measurement outcome only through their difference. Changing $\phi$
therefore translates the complete likelihood profile without changing its shape. For a uniform prior, the maximum-likelihood estimate and its strongest
$\delta$-separated competitor are translated by the same amount, so the twin-peaks ratio, i.e. the LHS of \eqnref{eq:stopping_condition_twin_peaks}, is 
independent of the observed outcome. Consequently, for every threshold value $A_\epsilon$ and resolution $\delta$  there is a deterministic minimum 
number of copies $N_{\mathrm{cov}}(A_\epsilon,\delta)$ for which the twin-peaks condition is satisfied. This number is obtained by evaluating
\eqnref{eq:stopping_condition_twin_peaks} using the likelihood in \eqnref{eq:collective_covariant_likelihood} and is shown by the black line
in Fig.~\ref{fig:phase_point_avg_samples}. As the twin-peaks ratio is independent of the measurement outcome for the collective covariant strategy, 
the required number of copies is deterministic and the stopping-time fluctuations vanish. In the small-$\delta$ regime, it is straightforward to verify that 
	\begin{equation} 
		N_{\mathrm{cov}}(A_\epsilon,\delta) \sim \frac{2\log A_\epsilon}{\delta^2}\,.
	\label{eq:collective_small_delta_N} 
	\end{equation}
This coincides with the local Fisher-information prediction of \eqnref{eq:phase_covaraint_small_delta_N}, although the two approaches may differ through finite-threshold and other sub-leading corrections.
 
One may interpolate between the single-copy and fully collective strategies by applying the covariant measurement independently to blocks of $M$ copies and
evaluating the twin-peaks condition after each block. Such a protocol is sequential at the block level, with total copy cost $N=M k$, where $k$ is the
number of measured blocks. Unlike the single collective measurement on all $N$ copies, the accumulated likelihood depends on the relative outcomes of the
different blocks, and the stopping time is therefore random. We do not optimize the block size here.

In App.~\ref{app:phase_precision}, we compare the same strategies using conventional estimation figures of merit. In App.~\ref{app:entangled_probes}, we consider the 
broader metrological setting in which the probe states themselves may also be optimized, including the use of entangled probes.


\section{Purity Testing}\label{sec:purity}

We now apply the \textit{twin-peaks test} to the problem of estimating the purity of a qubit mixed state. 
Such states arise, for example, from the action of a depolarizing channel,
    \begin{equation}
    	\mathcal E_r(\ketbra{\psi})\ =\ r\,\ketbra{\psi}+ \frac{1-r}{2}\one\ =\ \frac{\one +\mathbf{r}\cdot\boldsymbol{\sigma}}{2}\, ,
	\label{eq:Bloch_decomp}
	\end{equation}
where $\boldsymbol{\sigma}$ is the vector of Pauli operators and $\mathbf{r}\in\mathbb{R}^3$, \mbox{$0\leq r=\|\mathbf{r}\|\leq1$}, the Bloch vector 
whose magnitude is directly related to the purity of the state $\rho=\mathcal E_r(\ketbra{\psi})$ via $r^2=2\tr\rho^2-1$. Henceforth, we shall 
associate the word purity with $r$.

Our local measurement strategy consists of performing a two-outcome PVM along some direction $\hat{\mathbf m}\in\mathbb{R}^3$, 
with $\|\hat{\bf m}\|\,=1$.  The corresponding PVM elements can be written in Bloch form as 
    \begin{equation}\label{eq:projectors}
        \Pi_{\pm\hat{\mathbf{m}}}=\frac{\one\pm\hat{\mathbf{m}}\cdot\boldsymbol{\sigma}}{2}\, ,
    \end{equation}
and the associated probability distribution for the measurement outcomes is given by
    \begin{equation}\label{eq:prob_dist}
        p(\pm\mid r) = \frac{1\pm\, \hat{\mathbf{m}}\cdot{\mathbf{r}}}{2}\, ,
    \end{equation}
where $\pm$ signify the $m=\pm1$ outcomes of the PVM $\Pi_{\pm\hat{\mathbf{m}}}$. Assuming that the direction of $\mathbf{r}$ is known, the optimal PVM 
is $\hat{\mathbf{m}}=\mathbf{r}/r$ and the problem reduces to estimating the parameter $p=(1+r)/2$ of a classical random variable distributed according 
to the Bernoulli distribution.  Later in this section, we devise a direction-agnostic, collective measurement strategy for estimating the purity and 
show that it asymptotically yields the same average number of samples as the local measurement strategy described above.

\begin{figure}[t]
    \centering
    \includegraphics[width=1\linewidth]{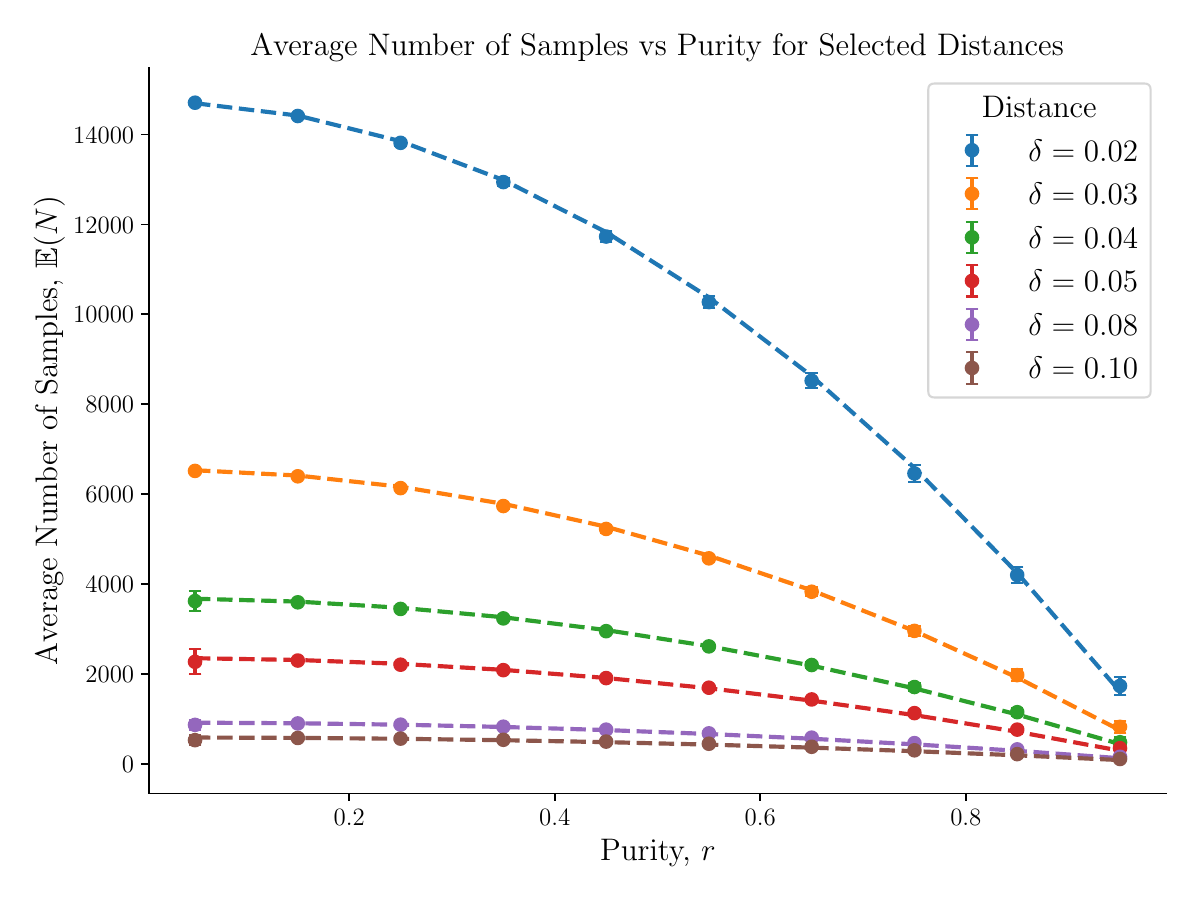}
    \caption{Average number of samples required by the \textit{twin-peaks test} for estimating the purity $r$ of a qubit mixed state $\rho$ using the evidence threshold $A_\epsilon=19$, corresponding to the calibration $\epsilon=0.05$. The average number of samples is over 100 tests. The error bars correspond to the standard deviation. The dotted lines correspond to the average number of samples required in the asymptotic limit (\eqnref{eq:asymptotic_stopping_time_purity}). In order for the numerical simulations not to run indefinitely, we set a threshold for the number of samples to 15000.}
    \label{fig:purity}
\end{figure}

Observe that since the states in \eqnref{eq:Bloch_decomp} commute for all values of $r$ neither adaptive nor collective measurements 
can improve the corresponding asymptotic distinguishability rate.  After $n$ measurements, let $n_+$ and $n_-=n-n_+$ denote the numbers of
outcomes $+1$ and $-1$ respectively. The maximum-likelihood estimate is
	\begin{equation}
    		\hat r_n= \max\left\{ 0, \frac{n_+-n_-}{n}\right\}\, .
	\label{eq:purity_MLE}
	\end{equation}
The likelihood is concave and maximized at $\hat r_n$. Its largest value outside the tolerance interval is therefore attained at
$\hat r_n-\delta$ whenever $\hat r_n\geq\delta$, and at $\hat r_n+\delta$ otherwise. The test stops once the likelihood ratio between
$\hat r_n$ and this strongest competitor reaches the evidence threshold $A_\epsilon$. For an interior true value $r>\delta$, the estimate and 
its strongest competitor converge to $r$ and $r-\delta$, respectively.  The \textit{twin-peaks} test stopping-time expression in 
\eqnref{eq:asymptotic_stopping_time_classic} then gives the large-threshold limit
	\begin{equation}\label{eq:asymptotic_stopping_time_purity}
        	\mathds{E}(N\mid r)\sim\frac{2\log{A_\epsilon}}{(1+r) \log \left(\frac{1+r}{1+r-\delta}\right)+(1-r) \log \left(\frac{1-r}{1-r+\delta}\right)}\, .
    	\end{equation}
For $r<\delta$, the corresponding expression is obtained by replacing the competitor $r-\delta$ with $r+\delta$. The relation is asymptotic rather than
exact because the stopped likelihood ratio generally overshoots $\log A_\epsilon$ and, during the initial rounds, both the estimate and its
strongest competitor remain data dependent. For small $\delta$ and fixed $r$ in the interior of the parameter space, it holds
	\begin{equation}\label{eq:asymptotic_stopping_time_purity_no_overshoot}
        	\mathds{E}(N\mid r) \sim \frac{2(1-r^2)\log{A_\epsilon}}{\delta^2}\, .
   	 \end{equation}
This agrees with the general Fisher-information expression in \eqnref{eq:asymptotic_stopping_time_fisher}, since $  I(r) = 1/(1-r^2)$.
For this commuting family, the classical Fisher information extracted by the measurement in the direction of the Bloch vector coincides with the  
quantum Fisher information.

\figref{fig:purity} compares the numerical mean stopping times with the KL prediction of \eqnref{eq:asymptotic_stopping_time_purity} as a function of  
the purity, $r$, and for various distances.  Specifically, the domain $0\leq r\leq 1$ is divided into 10 equally spaced intervals, and we randomly select a 
value from each interval.  For each such value, we perform 100 tests, and we set the calibration at $\epsilon=0.05$.  In order to guarantee that the 
tests stop we set an upper threshold on the number of samples equal to 15000.  The mean stopping time decreases with $r$, reflecting the increase in 
Fisher information towards the pure-state boundary. In this example, the required numbers of measurements are considerably larger than those used in 
the phase-testing example. As the likelihood depends only on the two outcome counts $(n_+,n_-)$, however, the simulations remain efficient even for these 
longer runs. The larger stopping times place the protocol more deeply in the asymptotic regime, which is consistent with the good agreement observed between 
the numerical and analytical curves.

The numerical stopping times also exhibit comparatively small fluctuations. This can be understood from the concentration of the maximum-likelihood
estimate,
	\begin{equation}
    		\operatorname{Var}(\hat r_n\mid r)=\frac{1-r^2}{n}+o\!\left(\frac{1}{n}\right)\,.
	\label{eq:purity_MLE_variance}
	\end{equation}
For small $\delta$, the twin-peaks statistic accumulated after $n$ rounds is
approximately
	\begin{equation}
    		\frac{n\delta^2}{2\left(1-\hat r_n^2\right)}\, .
	\end{equation}
The threshold-crossing condition therefore gives
	\begin{equation}
    		N_{\mathrm{TP}}\simeq\frac{2\left(1-\hat r_{N_{\mathrm{TP}}}^{\,2}\right)\log A_\epsilon}{\delta^2}\, .
	\end{equation}
Linearizing this expression around the true value yields
	\begin{equation}
    		N_{\mathrm{TP}}-\mathds{E}(N_{\mathrm{TP}}\mid r)\simeq-\frac{4r\log A_\epsilon}{\delta^2}
    		\left(\hat r_{N_{\mathrm{TP}}}-r\right)\, .
	\end{equation}
Evaluating \eqnref{eq:purity_MLE_variance} at the leading mean stopping time in \eqnref{eq:asymptotic_stopping_time_purity_no_overshoot} then gives
	\begin{equation}
 		\operatorname{Var}(N_{\mathrm{TP}}\mid r)\sim\frac{8r^2\log A_\epsilon}{\delta^2}\, ,
	\label{eq:purity_variance_small_delta}
	\end{equation}
and hence
	\begin{equation}
    		\Delta N_{\mathrm{TP}}=\sqrt{ \operatorname{Var}(N_{\mathrm{TP}}\mid r)}\sim\frac{2r}{\delta}\sqrt{ 2\log A_\epsilon}\, .
	\label{eq:purity_std_small_delta}
	\end{equation}
Thus, while the mean stopping time grows as $\delta^{-2}$, its standard deviation grows only as $\delta^{-1}$. The relative fluctuations consequently
vanish in the small-$\delta$ limit,
	\begin{equation}
    		\frac{\Delta N_{\mathrm{TP}}}{\mathds{E}(N_{\mathrm{TP}}\mid r)}\sim\frac{\sqrt{2}\,r\,\delta}{ (1-r^2)\sqrt{\log A_\epsilon}   }\, .
	\label{eq:purity_relative_fluctuations}
	\end{equation}
This explains the narrow stopping-time distributions observed in Fig.~\ref{fig:purity}. The fluctuation calculation is a leading asymptotic approximation: it neglects the 
initial transient, the overshoot of the stopping boundary, and higher-order corrections. At $r=0$, the linear term vanishes and the dominant fluctuations require a 
higher-order expansion.  In \figref{fig:purity} this can be seen by the absence of error bars, or unrealistically small error bars, for the cases of small $\delta$ 
and small $r$, since as for these cases, the sequential \textit{twin-peaks} test does not stop before the 15000 threshold 
for the majority of the 100 runs. 

These finite-sample effects are also visible in Fig.~\ref{fig:log_samples}, which compares the numerical mean stopping times with the relative-entropy prediction in 
\eqnref{eq:asymptotic_stopping_time_purity} as a function of the tolerance $\delta$. The agreement improves as $\delta$ decreases, when the larger stopping times 
allow the estimate and its strongest competitor to stabilize. For larger $\delta$, the test typically stops after comparatively few measurements, making finite-threshold corrections, boundary overshoot, and the discreteness of the likelihood-ratio process more relevant. These effects are particularly pronounced near the 
endpoints of the parameter space. Near $r=0$, the constrained maximum-likelihood estimate frequently reaches the boundary, and the strongest competitor may switch 
between the two sides of the tolerance region. Near $r=1$, only the lower competitor is admissible, and the rarity of outcome $1$ produces a strongly discrete 
likelihood-ratio process. The relative-entropy curve should therefore be interpreted as a large-threshold prediction rather than an exact finite-sample expression.
\begin{figure}[t]
    \centering
    \includegraphics[width=1\linewidth]{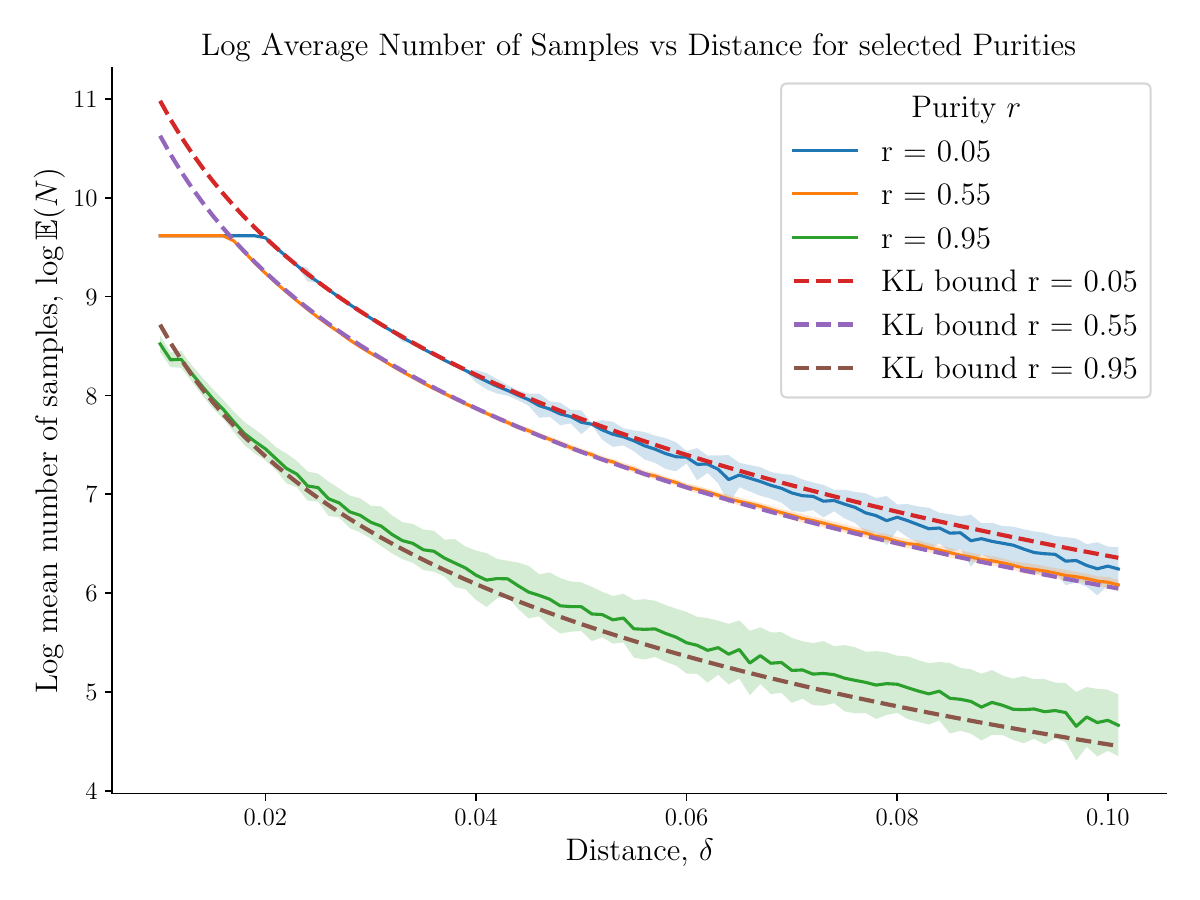}
    \caption{Logarithm of the average number of samples required by the \textit{twin-peaks test} for estimating the purity $r$ of a qubit mixed state $\rho$ as a function of the distance $\delta$. The calibration is $\epsilon=0.05$. The average number of samples is computed over 100 tests, with the shaded region corresponding to the standard deviation.  The dashed lines correspond to the asymptotic bound of \eqnref{eq:asymptotic_stopping_time_purity}. 
    In order for the numerical simulations not to run indefinitely, we set a threshold for the number of samples to 15000; the flat orange and blue lines show this user-defined threshold value for small values of $\delta$.}
    \label{fig:log_samples}
\end{figure}

Let us turn to the case of sequentially estimating the purity when the direction of the Bloch vector is unknown.
To that end, consider the following sequential strategy: at each round, $k$, we are given $M$ copies of the state $\rho$ 
and perform the weak Schur measurement, i.e., the PVM $\{\Pi_J=\sum_{m=-J}^J\ket{J,m}_{\hat{n}}\bra{J,m}\, |\, 
0(\tfrac{1}{2})\leq J\leq \tfrac{M}{2}\}$.  
Here $\ket{J,m}_{\hat{n}}$ are the simultaneous eigenstates of the total angular 
momentum operator ${\bf J}^2$, and its projection along some direction $\hat{n}$, $\bf{J}_{\hat{n}}$, not necessarily the 
direction of the Bloch vector $\hat{\bf r}$.  The conditional probability distribution for this measurement is given by
    \begin{align}\label{eq:Schur_sampling}
        p_M&(J\mid r)=\tr(\Pi_J \rho^{\otimes M})\nonumber\\
        &=\left(\frac{1-r^2}{4}\right)^{\frac{M}{2}}\binom{M}{\frac{M}{2}-J}\frac{2J+1}{\frac{M}{2}+J+1}\frac{R^{J+1}-R^{-J}}{R-1}\, ,
    \end{align}
where $R=(1+r)/(1-r)$. The average stopping time, $\mathds{E}(N\mid r)$, is given by 
\eqnref{eq:asymptotic_stopping_time_purity} with $p(m\mid  r)$ replaced by $p(J\mid  r)$ of \eqnref{eq:Schur_sampling}.     
In the limit of large $M$, the probability distribution of \eqnref{eq:Schur_sampling} can be approximated by a Gaussian
distribution of mean $\mu(r)=\tfrac{Mr}{2}$, and variance $\sigma^2(r)=\tfrac{M (1-r^2)}{4}$~\cite{Guta2006}. 
Taking the asymptotic limit so that the overshooting of our test statistic tends to zero, one finds
    \begin{align}\label{eq:collective_avg_samples}
        \mathds{E}(N^{(M)}\mid r)\sim \frac{2\log{A_\epsilon}}{M \delta^2 I(r)}\, \Rightarrow\, 
        \mathds{E}(N\mid r)\sim \frac{2\log{A_\epsilon}}{\delta^2 I(r)}\, ,
    \end{align}
exactly equaling the asymptotic time of our local sequential strategy, where the direction of $\hat{r}$ is known. In \appref{app:purity_collective}
we show how the total average number of samples $\mathbb{E}(N^{(M)}\mid r)$, diminishes with the number $M$ of states measured collectively in each round, with 
the largest savings occurring in the case of highly mixed states.

Hence, in the frequentist regime where one wishes to locate the purity to within a very small interval in parameter space, if the direction of the 
Bloch vector is not known, collective measurement strategies perform no better than a strategy that first sacrifices a \textit{vanishing fraction} of 
the samples to estimate the direction, $\hat{\bf r}_{\mathrm{est}}$, of the Bloch vector and then performs a local sequential test using the PVM 
$\Pi_{\hat{\bf r}_{\mathrm{est}}}$.  This is in line with Ref.~\cite{PhysRevLett.95.110504}, where it 
was shown that such a strategy using, for example, $\sqrt{n}$ copies to estimate the direction and then estimate the purity along this direction with the remaining 
systems yields the optimal asymptotic MSE.

Finally, it is instructive to compare the performance of the \textit{twin-peaks test} with a test using a deterministic number of samples.
Since no collective measurement strategy employing multiple copies can outperform the bound given in \eqnref{eq:asymptotic_stopping_time_purity}, we 
take the r.h.s. as a lower bound to the sample size of the optimal deterministic strategy. Also, the sequential strategy matches this bound asymptotically, and 
we can calculate the expected number of samples over values of $r$ according to \eqnref{eq:average_asymptotic_stopping_time}, shown as a green line in 
Fig.~\ref{fig:fixed_vs_seq}. The expected error and worst error are then given by \eqnsref{eq:expected_error}{eq:wc_error}, and we use this to 
calculate the number of samples that lead to the desired expected error probability. In Fig.~\ref{fig:fixed_vs_seq}, these are shown as a dotted green line and a 
dashed black line, respectively and we can see that the sequential strategy is the one that requires fewer samples on average.

Since we also consider a scenario where the direction of the Bloch vector is not known, a uniform distribution in parameter space could be considered 
inapt. Two natural choices are a uniform distribution in the Hilbert-Schmidt metric, which results in the prior $\pi(r)=3r^2$, and a uniform distribution in 
Bures metric, which results in the prior $\pi(r)=4 r^2/(\pi\sqrt{1-r^2})$. For these priors, we can calculate the corresponding expected number of samples 
for the sequential strategy, shown as red and yellow lines in Fig.~\ref{fig:fixed_vs_seq}, and the corresponding fixed sample-size strategies that 
achieve the same expected error, shown as dotted lines of the corresponding color in Fig.~\ref{fig:fixed_vs_seq}. Both the Hilbert-Schmidt and the 
Bures metric have a vanishing prior for $r=0$. As a consequence, there is no upper bound on the worst error case, as the ratio between prior 
probabilities in \eqnref{eq:asymptotic_stopping_time_classic} diverges.

Nonetheless, for any positive lower bound on the prior probability (or value of $r$), the ratio of the priors is a constant, and the asymptotic behavior is dominated by small $\epsilon$. Using \eqnref{eq:asymptotic_stopping_time_fisher}, the expected number of samples can be calculated as
    \begin{equation}
        \mathbb E(N)\sim\frac{-2\log\epsilon}{\delta^2}\int_0^1 \pi(r)(1-r^2)\,\mathrm{d}r\, .
    \end{equation}
For a flat prior, the \textit{twin-peaks test} has an expected stopping time $\mathbb E(N)\sim-4/3\log(\epsilon)\delta^{-2}$, for the Hilbert-Schmidt 
uniformly distributed purity, we get $\mathbb E(N)\sim-4/5\log(\epsilon)\delta^{-2}$, and finally for the Bures metric uniformly distributed purity we 
get $\mathbb E(N)\sim-1/2\log(\epsilon)\delta^{-2}$. Because the purity is unknown, a fixed-sample approach must use a sample size sufficient to cover the worst-case scenario. A sequential test, by contrast, learns the purity as data are collected and adaptively determines the number of samples required.  The fixed-sample 
benchmark is controlled by $r=0$, which results in a number of samples that scales as $N^\infty_{\text{fss}}(d,\epsilon)\sim- 2\log(\epsilon)\delta^{-2}$. Therefore, 
sequential strategies need a constant factor of the resources, depending on the prior distribution. In the considered cases, these factors are $2/3$ for a flat 
distribution, $2/5$ for the Hilbert-Schmidt uniform samples, and $1/4$ in the case of Bures uniform samples. 

\begin{figure}[t]
    \centering
    \includegraphics[width=1\linewidth]{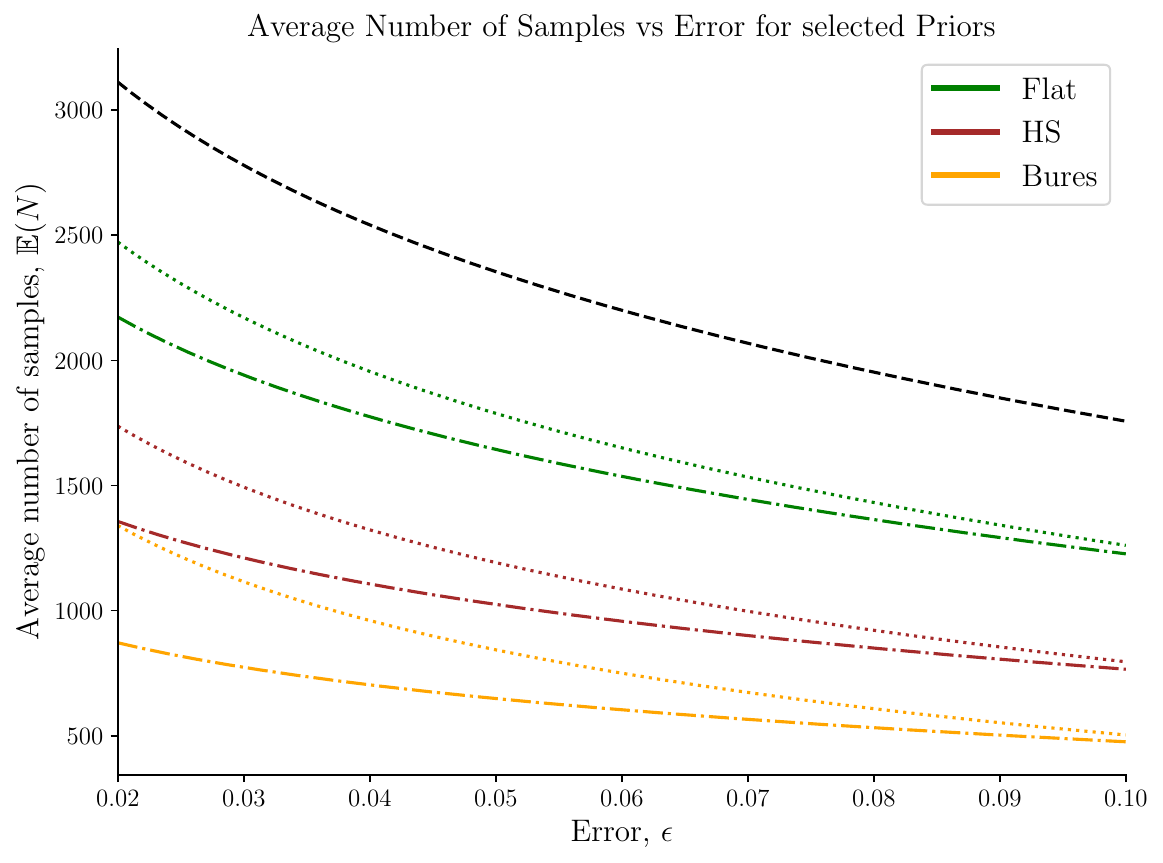}
    \caption{Average number of samples $\mathbb{E}(N)$ for a sequential strategy in dashed-dotted lines vs. the error of $\epsilon$ for a minimal distance between 
    hypotheses $\delta=0.05$. Different colors indicate different priors: the flat prior is shown in green, the Hilbert-Schmidt uniform prior in red, and the 
    Bures uniform prior in orange. The dotted lines show the number of samples for fixed-sample-size strategies such that the expected error matches the 
    error of the corresponding sequential strategy. The black dashed line shows the number of samples for the worst-case error for a flat prior.}
    \label{fig:fixed_vs_seq}
\end{figure}

\section{Conclusions}

In this work, we introduced \textit{parameter testing}, a sequential framework for determining an unknown continuous parameter up to a prescribed tolerance. Once an estimate $\hat{\vartheta}$ is reported, parameter values within a neighborhood $B_\delta(\hat{\vartheta})$ are regarded as 
acceptable, while sufficiently distant values are treated as competing alternatives that must be ruled out with a target error tolerance. In this sense, 
parameter testing retains the likelihood-ratio logic of hypothesis testing while extending it to a continuous family of hypotheses.

The distinction from conventional parameter estimation lies primarily in the operational figure of merit. Standard estimation typically quantifies the 
quality of an estimate through a loss function that depends on its distance from the true parameter, such as the mean squared error. Parameter testing 
is instead concerned with separating an acceptable region from competing parameter values outside a prescribed tolerance. The two approaches are 
therefore complementary: distance-based estimation is natural when the cost of an inaccurate estimate varies smoothly with the estimation error, 
whereas parameter testing is better suited to tasks with a threshold between acceptable and unacceptable parameter values.

Building on sequential hypothesis testing, we developed three sequential procedures for parameter testing and analyzed the non-trivial challenges that arise when passing from a discrete to a continuous set of hypotheses. In particular, we introduced the \textit{complement}, \textit{concentration}, and \textit{twin-peaks} tests. Among them, the \textit{twin-peaks test} provides a particularly natural continuous analog of the sequential likelihood-ratio testing: it compares the most likely current estimate with the strongest competing parameter value outside the prescribed tolerance region. At the same 
time, it is computationally simple and, in the asymptotic regime, leads to the same stopping condition as the more explicitly region-based complement 
test.

We applied this framework to two paradigmatic quantum parameter-testing tasks: determining the phase and the purity of a qubit. These examples illustrate 
two qualitatively different situations. For phase testing, the statistical model is symmetric in the unknown phase, but the optimal measurement 
generally depends on the parameter itself. We compared i.i.d., adaptive, and collective measurement strategies, as well as their performance with 
asymptotic predictions based on the Fisher information and the KL divergence. We found that adaptive projective measurements provide the best sequential 
strategy and achieve the same average sample cost as collective covariant measurements performed with a predetermined number of copies.

Purity testing, by contrast, is effectively classical when the direction of the Bloch vector is known. The optimal projective measurement is 
independent of the unknown purity, and the resulting KL divergence coincides with the quantum relative entropy. Consequently, neither adaptive nor 
collective strategies can improve upon local i.i.d. measurements. In this case, the stopping time depends strongly on the true purity, allowing 
sequential parameter testing to exploit the information contained in each experimental run and to yield significant average sample savings over fixed 
sample-size protocols. We further showed that, when the Bloch vector direction is unknown, sequential procedures acting on finite batches of copies can 
asymptotically recover the same performance.

Overall, our results establish sequential parameter testing as an operationally meaningful framework for continuous-parameter problems in which the 
objective is to certify a parameter value up to a prescribed tolerance rather than to minimize a distance-based estimation error. By combining 
threshold-based certification with data-dependent stopping, the framework provides a natural route towards resource-efficient protocols for quantum
information tasks in which the usefulness of a state or device is determined by whether an underlying parameter lies within an acceptable region.
\section{Acknowledgments}
SM is thankful to Mohammad Mehboudi, Paolo Abiuso, Jes\'us Rubio and Luis A. Correa for insightful discussions.
SM is a recipient of an APART-MINT Fellowship of the Austrian Academy of Sciences at the Atominstitut of the Technische Universit\"at Wien.
RRR acknowledges the Government of Spain (Severo Ochoa CEX2019-000910-S and 
FUNQIP), Fundaci\'o Cellex, Fundaci\'o Mir-Puig, Generalitat de Catalunya (CERCA program).
JCC and RMT acknowledge financial support  from  MICIN grant 
PID2022-141283NB-I00 funded by MCIN/AEI/10.13039/501100011033.
GS acknowledges funding from the Programa Talent UAB - Banco de Santander.
MS acknowledges project PID2024-162141OB-I00, funded by MI-CIU/AEI/10.13039/501100011033/ FEDER, UE, Ayuda Ram\'on y Cajal 2021 (RYC2021-032032-I,
MICIU/AEI/10.13039/501100011033, ESF+) as well as Project FEDER C-EXP-256-UGR23 Consejer\'ia de Universidad, Investigaci\'on e Innovaci\'on y UE 
Programa FEDER Andaluc\'ia 2021-2027.

\bibliography{bibfile.bib}

@article{Pallister2018,
  title = {Optimal Verification of Entangled States with Local Measurements},
  author = {Pallister, Sam and Linden, Noah and Montanaro, Ashley},
  journal = {Phys. Rev. Lett.},
  volume = {120},
  issue = {17},
  pages = {170502},
  numpages = {5},
  year = {2018},
  month = {Apr},
  publisher = {American Physical Society},
  doi = {10.1103/PhysRevLett.120.170502},
  url = {https://link.aps.org/doi/10.1103/PhysRevLett.120.170502}
}

@book{cramer1999mathematical,
    author = {Cram{\'e}r, Harald},
    title = {Mathematical methods of statistics},
    series = {Princeton Landmarks in Mathematics},
    volume = {26},
    year = {1999},
    publisher = {Princeton University Press},
    doi = {10.1515/9781400883868},
    url = {https://doi.org/10.1515/9781400883868}
}

@incollection{rao1992information,
    author = {Rao, C. Radhakrishna},
    title = {Information and the accuracy attainable in the estimation of statistical parameters},
    booktitle = {Breakthroughs in Statistics: Foundations and basic theory},
    pages = {235--247},
    year = {1992},
    publisher = {Springer},
    doi = {10.1007/978-1-4612-0919-4_16}
}

@article{vargas2021quantum,
    author = {Mart{\'\i}nez-Vargas, Esteban and Hirche, Christoph and Sent{\'\i}s, Gael and Skotiniotis, Michalis and Carrizo, Marta and Mu{\~n}oz-Tapia, Ramon and Calsamiglia, John},
    title = {Quantum sequential hypothesis testing},
    journal = {Phys. Rev. Lett.},
    volume = {126},
    number = {18},
    pages = {180502},
    year = {2021},
    doi = {10.1103/PhysRevLett.126.180502}
}

@article{baum1994sequential,
    author = {Baum, Carl W. and Veeravalli, Venugopal V.},
    title = {A sequential procedure for multihypothesis testing},
    journal = {IEEE Trans. Inf. Theory},
    volume = {40},
    number = {6},
    pages = {1994--2007},
    year = {1994},
    doi = {10.1109/18.340472}
}

@article{wald_sequential_1945,
    author = {Wald, A.},
    title = {Sequential {Tests} of {Statistical} {Hypotheses}},
    journal = {Ann. Math. Stat.},
    volume = {16},
    number = {2},
    pages = {117--186},
    year = {1945},
    doi = {10.1214/aoms/1177731118}
}

@article{Holevo1984,
    author = {Holevo, Alexander S.},
    title = {Covariant measurements and imprimitivity systems},
    journal = {Lect. Notes Math.},
    volume = {1055},
    pages = {153},
    year = {1984},
    doi = {10.1007/BFb0071720}
}

@article{perez2025bounds,
    author = {P{\'{e}}rez-Guijarro, Jordi and Pag{\`{e}}s-Zamora, Alba and Fonollosa, Javier R.},
    title = {Bounds in {S}equential {U}nambiguous {D}iscrimination of {M}ultiple {P}ure {Q}uantum {S}tates},
    journal = {Quantum},
    volume = {9},
    pages = {1919},
    year = {2025},
    doi = {10.22331/q-2025-11-20-1919}
}

@article{Bartlett_2007,
    author = {Bartlett, Stephen D. and Rudolph, Terry and Spekkens, Robert W.},
    title = {Reference frames, superselection rules, and quantum information},
    journal = {Rev. Mod. Phys.},
    volume = {79},
    number = {2},
    pages = {555--609},
    year = {2007},
    doi = {10.1103/RevModPhys.79.555}
}

@misc{zecchin2025quantumsequentialuniversalhypothesis,
    author = {Zecchin, Matteo and Simeone, Osvaldo and Ramdas, Aaditya},
    title = {Quantum Sequential Universal Hypothesis Testing},
    year = {2025},
    eprint = {2508.21594},
    archivePrefix = {arXiv}

}

@article{Draglia_1999,
    author = {Draglia, V. P. and Tartakovsky, A. G. and Veeravalli, V. V.},
    title = {Multihypothesis sequential probability ratio tests .I. Asymptotic optimality},
    journal = {IEEE Trans. Inf. Theory},
    volume = {45},
    number = {7},
    pages = {2448--2461},
    year = {1999},
    doi = {10.1109/18.796383}
}

@article{Slussarenko_2017,
    author = {Slussarenko, Sergei and Weston, Morgan M. and Li, Jun-Gang and Campbell, Nicholas and Wiseman, Howard M. and Pryde, Geoff J.},
    title = {Quantum State Discrimination Using the Minimum Average Number of Copies},
    journal = {Phys. Rev. Lett.},
    volume = {118},
    number = {3},
    pages = {030502},
    year = {2017},
    doi = {10.1103/PhysRevLett.118.030502}
}

@article{VeeravalliBaum1995,
    author = {Veeravalli, V. V. and Baum, C. W.},
    title = {Asymptotic efficiency of a sequential multihypothesis test},
    journal = {IEEE Trans. Inf. Theory},
    volume = {41},
    number = {6},
    pages = {1994--1997},
    year = {1995},
    doi = {10.1109/18.476323}
}

@article{Lorden77,
    author = {Lorden, Gary},
    title = {Nearly-Optimal Sequential Tests for Finitely Many Parameter Values},
    journal = {Ann. Statist.},
    volume = {5},
    number = {1},
    pages = {1--21},
    year = {1977},
    doi = {10.1214/aos/1176343737},
    url = {https://doi.org/10.1214/aos/1176343737}
}

@article{Armitage50,
    author = {Armitage, P.},
    title = {Sequential Analysis with More than Two Alternative Hypotheses, and its Relation to Discriminant Function Analysis},
    journal = {J. R. Stat. Soc. Ser. B},
    volume = {12},
    number = {1},
    pages = {137--144},
    year = {1950},
    doi = {10.1111/j.2517-6161.1950.tb00050.x},
    url = {https://doi.org/10.1111/j.2517-6161.1950.tb00050.x}
}

@article{Li_2022,
    author = {Li, Yonglong and Tan, Vincent Y. F. and Tomamichel, Marco},
    title = {Optimal Adaptive Strategies for Sequential Quantum Hypothesis Testing},
    journal = {Commun. Math. Phys.},
    volume = {392},
    number = {3},
    pages = {993--1027},
    year = {2022},
    doi = {10.1007/s00220-022-04362-5}
}

@book{DAgostini2003,
    author = {D'Agostini, Giulio},
    title = {Bayesian Reasoning in Data Analysis: A Critical Introduction},
    publisher = {World Scientific},
    year = {2003},
    doi = {10.1142/5262}
}

@book{Bolstad2009,
    author = {Bolstad, William M.},
    title = {Understanding Computational Bayesian Statistics},
    publisher = {John Wiley \& Sons},
    year = {2009},
    doi = {10.1002/9780470567371}
}

@book{AmaralTurkmanPaulinoMuller2019,
    author = {Amaral Turkman, M. Ant\'onia and Paulino, Carlos Daniel and M\"uller, Peter},
    title = {Computational Bayesian Statistics: An Introduction},
    publisher = {Cambridge University Press},
    year = {2019},
    doi = {10.1017/9781108646185}
}

@article{ShorPreskill2000,
    author = {Peter W. Shor and John Preskill},
    title = {Simple Proof of Security of the {BB}84 Quantum Key Distribution Protocol},
    journal = {Phys. Rev. Lett.},
    volume = {85},
    number = {2},
    pages = {441--444},
    year = {2000},
    doi = {10.1103/PhysRevLett.85.441}
}

@article{AharonovBenOr1997,
    author = {Dorit Aharonov and Michael Ben-Or},
    title = {Fault-Tolerant Quantum Computation with Constant Error Rate},
    journal = {SIAM J. Comput.},
    volume = {38},
    number = {4},
    pages = {1207--1282},
    year = {2008},
    doi = {10.1137/S0097539799359385}
}

@article{RaussendorfHarringtonGoyal2006,
    author = {Robert Raussendorf and Jim Harrington and Kovid Goyal},
    title = {A Fault-Tolerant One-Way Quantum Computer},
    journal = {Ann. Phys.},
    volume = {321},
    number = {9},
    pages = {2242--2270},
    year = {2006},
    doi = {10.1016/j.aop.2006.01.012}
}

@article{AcinGisinMasanes2006,
    author = {Antonio Ac{\'i}n and Nicolas Gisin and Lluis Masanes},
    title = {From {B}ell's Theorem to Secure Quantum Key Distribution},
    journal = {Phys. Rev. Lett.},
    volume = {97},
    number = {12},
    pages = {120405},
    year = {2006},
    doi = {10.1103/PhysRevLett.97.120405}
}

@article{PironioAcinMasanes2010,
    author = {Stefano Pironio and Antonio Ac{\'i}n and Serge Massar and Valerio Scarani and Ronald de Wolf and Nicolas Gisin},
    title = {Random Numbers Certified by {B}ell's Theorem},
    journal = {Nature},
    volume = {464},
    pages = {1021--1024},
    year = {2010},
    doi = {10.1038/nature09008}
}

@article{Werner1989,
    author = {Reinhard F. Werner},
    title = {{Quantum States with Einstein-Podolsky-Rosen Correlations Admitting a Hidden-Variable Model}},
    journal = {Phys. Rev. A},
    volume = {40},
    number = {8},
    pages = {4277--4281},
    year = {1989},
    doi = {10.1103/PhysRevA.40.4277}
}

@article{Horodecki1995,
    author = {Ryszard Horodecki and Pawe{\l} Horodecki and Micha{\l} Horodecki},
    title = {Violating {B}ell Inequality by Mixed Spin-1/2 States},
    journal = {Phys. Lett. A},
    volume = {200},
    number = {5-6},
    pages = {340--344},
    year = {1995},
    doi = {10.1016/0375-9601(95)00214-N}
}

@article{HiaiPetz1991,
    author = {Hiai, Fumio and Petz, D{\'e}nes},
    title = {The proper formula for relative entropy and its asymptotics in quantum probability},
    journal = {Commun. Math. Phys.},
    volume = {143},
    number = {1},
    pages = {99--114},
    year = {1991},
    doi = {10.1007/BF02100287}
}

@article{OgawaNagaoka2000,
    author = {Ogawa, Tomohiro and Nagaoka, Hiroshi},
    title = {Strong converse and {Stein's} lemma in quantum hypothesis testing},
    journal = {IEEE Trans. Inf. Theory},
    volume = {46},
    number = {7},
    pages = {2428--2433},
    year = {2000},
    doi = {10.1109/18.887855}
}

@book{holevo2011probabilistic,
    author = {Holevo, A. S.},
    doi = {10.1007/978-88-7642-378-9},
    isbn = {9788876423789},
    publisher = {Edizioni della Normale, Springer Basel},
    title = {{Probabilistic and Statistical Aspects of Quantum Theory}},
    url = {https://doi.org/10.1007/978-88-7642-378-9},
    year = {2011},
}

@article{Hayashi2002,
    author = {Hayashi, Masahito},
    title = {Optimal sequence of quantum measurements in the sense of {Stein's lemma}},
    journal = {J. Phys. A: Math. Gen.},
    volume = {35},
    number = {50},
    pages = {10759--10773},
    year = {2002},
    doi = {10.1088/0305-4470/35/50/307}
}

@article{Audenaert2007,
    author = {Audenaert, K. M. R. and Calsamiglia, J. and Mu\~noz-Tapia, R. and Bagan, E. and Masanes, Ll. and Ac{\'i}n, A. and Verstraete, F.},
    title = {Discriminating {States}: {The} {Quantum} {Chernoff} {Bound}},
    journal = {Phys. Rev. Lett.},
    volume = {98},
    number = {16},
    pages = {160501},
    year = {2007},
    doi = {10.1103/PhysRevLett.98.160501}
}

@article{Lami2025GQSL,
    author = {Lami, Ludovico},
    title = {A Solution of the Generalised Quantum {Stein's Lemma}},
    journal = {IEEE Trans. Inf. Theory},
    volume = {71},
    number = {6},
    pages = {4454--4484},
    year = {2025},
    doi = {10.1109/TIT.2025.3543610}
}

@article{Berta_2017,
    author = {Berta, Mario and Fawzi, Omar and Tomamichel, Marco},
    title = {On variational expressions for quantum relative entropies},
    journal = {Lett. Math. Phys.},
    volume = {107},
    number = {12},
    pages = {2239--2265},
    year = {2017},
    doi = {10.1007/s11005-017-0990-7}
}

@book{AmariShun-ichi2000Moig,
  author    = {Amari, Shunichi and Nagaoka, Hiroshi},
  title     = {Methods of Information Geometry},
  series    = {Translations of Mathematical Monographs},
  volume    = {191},
  publisher = {American Mathematical Society},
  address   = {Providence, RI},
  year      = {2000},
  isbn      = {0821805312},
  doi       = {10.1090/mmono/191},
  url       = {https://doi.org/10.1090/mmono/191}
}

@article{Wald1948,
author = {A. Wald and J. Wolfowitz},
title = {{Optimum Character of the Sequential Probability Ratio Test}},
volume = {19},
journal = {Ann. Math. Stat.},
number = {3},
publisher = {Institute of Mathematical Statistics},
pages = {326 -- 339},
year = {1948},
doi = {10.1214/aoms/1177730197},
URL = {https://doi.org/10.1214/aoms/1177730197}
}

@misc{rodasalichs2025sequentialanalysiscontinuousspinnoise,
    author={Elisabet Roda-Salichs and Giulio Gasbarri and Antoni Alou and Michalis Skotiniotis and Aleksandra Sierant and Diana M\'endez-Avalos and Morgan W. Mitchell and John Calsamiglia},
  title = {Sequential analysis in a continuous spin-noise quantum sensor},
  year = {2025},
  eprint = {2509.16177},
archivePrefix={arXiv}
}

@misc{andre2026strategyoptimizationbayesianquantum,
      title={{Strategy optimization for Bayesian quantum parameter estimation with finite copies: Adaptive greedy, parallel, sequential, and general strategies}}, 
      author={Erik L. Andr\'e and Jessica Bavaresco and Mohammad Mehboudi},
      year={2026},
      eprint={2602.09655},
      archivePrefix={arXiv}
}

@misc{simpson2026samplecomplexitycompositequantum,
      title={Sample Complexity of Composite Quantum Hypothesis Testing}, 
      author={Jacob Paul Simpson and Efstratios Palias and Sharu Theresa Jose},
      year={2026},
      eprint={2601.08588},
      archivePrefix={arXiv}
}

@misc{lami2025generalisedquantumsanovtheorem,
      title={Generalised quantum {S}anov theorem revisited}, 
      author={Ludovico Lami},
      year={2025},
      eprint={2510.06340},
      archivePrefix={arXiv}
}

@misc{lami2025doublycompositechernoffsteinlemma,
      title={A doubly composite {C}hernoff-{S}tein lemma and its applications}, 
      author={Ludovico Lami},
      year={2025},
      eprint={2510.06342},
      archivePrefix={arXiv}
}

@article{Mosonyi_2014,
   title={Quantum Hypothesis Testing and the Operational Interpretation of the Quantum {R}\'enyi Relative Entropies},
   volume={334},
   ISSN={1432-0916},
   url={http://dx.doi.org/10.1007/s00220-014-2248-x},
   DOI={10.1007/s00220-014-2248-x},
   number={3},
   journal={Communications in Mathematical Physics},
   publisher={Springer Science and Business Media LLC},
   author={Mosonyi, Mil\'an and Ogawa, Tomohiro},
   year={2014},
   month=Dec, pages={1617--1648} }

@article{HOLEVO1973337,
title = {Statistical decision theory for quantum systems},
journal = {J. Multivar. Anal.},
volume = {3},
number = {4},
pages = {337-394},
year = {1973},
issn = {0047-259X},
doi = {https://doi.org/10.1016/0047-259X(73)90028-6},
url = {https://www.sciencedirect.com/science/article/pii/0047259X73900286},
author = {A.S Holevo}
}

@misc{hayashi2025generalizedquantumsteinslemma,
      title={Generalized Quantum {S}tein's Lemma and Second Law of Quantum Resource Theories}, 
      author={Masahito Hayashi and Hayata Yamasaki},
      year={2025},
      eprint={2408.02722},
      archivePrefix={arXiv}
}

@book{dembo1998large,
  author    = {Amir Dembo and Ofer Zeitouni},
  title     = {Large Deviations Techniques and Applications},
  series    = {Applications of Mathematics},
  volume    = {38},
  edition   = {2},
  publisher = {Springer},
  address   = {New York},
  year      = {1998},
  isbn      = {978-0-387-98406-7},
  doi       = {10.1007/978-1-4612-5320-4},
  url       = {https://doi.org/10.1007/978-1-4612-5320-4}
}

@article{PhysRevLett.95.110504,
  title = {Purity Estimation with Separable Measurements},
  author = {Bagan, E. and Ballester, M. A. and Mu\~noz-Tapia, R. and Romero-Isart, O.},
  journal = {Phys. Rev. Lett.},
  volume = {95},
  issue = {11},
  pages = {110504},
  numpages = {4},
  year = {2005},
  month = {Sep},
  publisher = {American Physical Society},
  doi = {10.1103/PhysRevLett.95.110504},
  url = {https://link.aps.org/doi/10.1103/PhysRevLett.95.110504}
}

@article{Gasbarri_2024,
   title={Sequential hypothesis testing for continuously-monitored quantum systems},
   volume={8},
   ISSN={2521-327X},
   url={http://dx.doi.org/10.22331/q-2024-03-20-1289},
   DOI={10.22331/q-2024-03-20-1289},
   journal={Quantum},
   publisher={Verein zur Forderung des Open Access Publizierens in den Quantenwissenschaften},
   author={Gasbarri, Giulio and Bilkis, Matias and Roda-Salichs, Elisabet and Calsamiglia, John},
   year={2024},
   month=Mar, pages={1289} }

@misc{simpson2026optimalerrorexponentscomposite,
      title={Optimal Error Exponents for Composite Sequential Quantum Hypothesis Testing}, 
      author={Jacob Paul Simpson and Efstratios Palias and Sharu Theresa Jose},
      year={2026},
      eprint={2605.04915},
      archivePrefix={arXiv}
}

@article{Shamir2009,
  author  = {Shamir, Reuben R. and Joskowicz, Leo and Spektor, Sergey and Shoshan, Yigal},
  title   = {Localization and registration accuracy in image guided neurosurgery: a clinical study},
  journal = {International Journal of Computer Assisted Radiology and Surgery},
  volume  = {4},
  pages   = {45--52},
  year    = {2009},
  doi     = {10.1007/s11548-008-0268-8}
}

@article{Lin2023,
  author  = {Lin, Zhefan and Lei, Chen and Yang, Liangjing},
  title   = {Modern Image-Guided Surgery: A Narrative Review of Medical Image Processing and Visualization},
  journal = {Sensors},
  volume  = {23},
  number  = {24},
  pages   = {9872},
  year    = {2023},
  doi     = {10.3390/s23249872}
}

@article{Tsang_2012,
   title={Continuous Quantum Hypothesis Testing},
   volume={108},
   ISSN={1079-7114},
   url={http://dx.doi.org/10.1103/PhysRevLett.108.170502},
   DOI={10.1103/physrevlett.108.170502},
   number={17},
   journal={Physical Review Letters},
   publisher={American Physical Society (APS)},
   author={Tsang, Mankei},
   year={2012},
   month=Apr }

@article{helstrom1969quantum,
  title={Quantum detection and estimation theory},
  author={Helstrom, Carl W},
  journal={Journal of statistical physics},
  volume={1},
  number={2},
  pages={231--252},
  year={1969},
  publisher={Springer}
}

@article{yuen1975optimum,
  title     = {Optimum testing of multiple hypotheses in quantum detection theory},
  author    = {Yuen, Horace and Kennedy, Robert and Lax, Melvin},
  journal   = {IEEE Transactions on Information Theory},
  volume    = {21},
  number    = {2},
  pages     = {125--134},
  year      = {1975},
  publisher = {IEEE},
  doi       = {10.1109/TIT.1975.1055351},
  url       = {https://doi.org/10.1109/TIT.1975.1055351}
}

@article{Bavaresco2021,
  title = {Strict Hierarchy between Parallel, Sequential, and Indefinite-Causal-Order Strategies for Channel Discrimination},
  author = {Bavaresco, Jessica and Murao, Mio and Quintino, Marco T\'ulio},
  journal = {Phys. Rev. Lett.},
  volume = {127},
  issue = {20},
  pages = {200504},
  numpages = {7},
  year = {2021},
  month = {Nov},
  publisher = {American Physical Society},
  doi = {10.1103/PhysRevLett.127.200504},
  url = {https://link.aps.org/doi/10.1103/PhysRevLett.127.200504}
}

@article{Bavaresco_2022,
   title={Unitary channel discrimination beyond group structures: Advantages of sequential and indefinite-causal-order strategies},
   volume={63},
   ISSN={1089-7658},
   url={http://dx.doi.org/10.1063/5.0075919},
   number={4},
   journal={Journal of Mathematical Physics},
   publisher={AIP Publishing},
   author={Bavaresco, Jessica and Murao, Mio and Quintino, Marco Túlio},
   year={2022},
   month=Apr }

@article{Giovannetti2006,
  title = {Quantum Metrology},
  author = {Giovannetti, Vittorio and Lloyd, Seth and Maccone, Lorenzo},
  journal = {Phys. Rev. Lett.},
  volume = {96},
  issue = {1},
  pages = {010401},
  numpages = {4},
  year = {2006},
  month = {Jan},
  publisher = {American Physical Society},
  doi = {10.1103/PhysRevLett.96.010401},
  url = {https://link.aps.org/doi/10.1103/PhysRevLett.96.010401}
}

@article{Buhrman2001,
  title = {Quantum Fingerprinting},
  author = {Buhrman, Harry and Cleve, Richard and Watrous, John and de Wolf, Ronald},
  journal = {Phys. Rev. Lett.},
  volume = {87},
  issue = {16},
  pages = {167902},
  numpages = {4},
  year = {2001},
  month = {Sep},
  publisher = {American Physical Society},
  doi = {10.1103/PhysRevLett.87.167902},
  url = {https://link.aps.org/doi/10.1103/PhysRevLett.87.167902}
}

@article{Ekert2002,
  title = {Direct Estimations of Linear and Nonlinear Functionals of a Quantum State},
  author = {Ekert, Artur K. and Alves, Carolina Moura and Oi, Daniel K. L. and Horodecki, Micha\l{} and Horodecki, Pawe\l{} and Kwek, L. C.},
  journal = {Phys. Rev. Lett.},
  volume = {88},
  issue = {21},
  pages = {217901},
  numpages = {4},
  year = {2002},
  month = {May},
  publisher = {American Physical Society},
  doi = {10.1103/PhysRevLett.88.217901},
  url = {https://link.aps.org/doi/10.1103/PhysRevLett.88.217901}
}

@article{Lyons2018,
author = {Ashley Lyons  and George C. Knee  and Eliot Bolduc  and Thomas Roger  and Jonathan Leach  and Erik M. Gauger  and Daniele Faccio },
title = {Attosecond-resolution {H}ong-{O}u-{M}andel interferometry},
journal = {Science Advances},
volume = {4},
number = {5},
pages = {eaap9416},
year = {2018},
doi = {10.1126/sciadv.aap9416},
URL = {https://www.science.org/doi/abs/10.1126/sciadv.aap9416},
eprint = {https://www.science.org/doi/pdf/10.1126/sciadv.aap9416},
}

@article{Aguilar2020,
  title = {Robust Interferometric Sensing Using Two-Photon Interference},
  author = {Aguilar, G. H. and Piera, R. S. and Saldanha, P. L. and Filho, R. L. de Matos and Walborn, S. P.},
  journal = {Phys. Rev. Appl.},
  volume = {14},
  issue = {2},
  pages = {024028},
  numpages = {10},
  year = {2020},
  month = {Aug},
  publisher = {American Physical Society},
  doi = {10.1103/PhysRevApplied.14.024028},
  url = {https://link.aps.org/doi/10.1103/PhysRevApplied.14.024028}
}

@article{Parniak2018,
  title = {Beating the {R}ayleigh Limit Using Two-Photon Interference},
  author = {Parniak, Micha\l{} and Bor\'owka, Sebastian and Boroszko, Kajetan and Wasilewski, Wojciech and Banaszek, Konrad and Demkowicz-Dobrza\ifmmode \acute{n}\else \'{n}\fi{}ski, Rafa\l{}},
  journal = {Phys. Rev. Lett.},
  volume = {121},
  issue = {25},
  pages = {250503},
  numpages = {6},
  year = {2018},
  month = {Dec},
  publisher = {American Physical Society},
  doi = {10.1103/PhysRevLett.121.250503},
  url = {https://link.aps.org/doi/10.1103/PhysRevLett.121.250503}
}

@article{Ndagano2022,
  title = {Quantum Microscopy Based on {{Hong}}--{{Ou}}--{{Mandel}} Interference},
  author = {Ndagano, Bienvenu and Defienne, Hugo and Branford, Dominic and Shah, Yash D. and Lyons, Ashley and Westerberg, Niclas and Gauger, Erik M. and Faccio, Daniele},
  year = 2022,
  month = may,
  journal = {Nat. Photon.},
  volume = {16},
  number = {5},
  pages = {384--389},
  publisher = {Nature Publishing Group},
  issn = {1749-4893},
  doi = {10.1038/s41566-022-00980-6}
}

@article{Ansari2021,
  title = {Achieving the Ultimate Quantum Timing Resolution},
  author = {Ansari, Vahid and Brecht, Benjamin and Gil-Lopez, Jano and Donohue, John M. and \ifmmode \check{R}\else \v{R}\fi{}eh\'a\ifmmode \check{c}\else \v{c}\fi{}ek, Jaroslav and Hradil, Zden\ifmmode \check{e}\else \v{e}\fi{}k and S\'anchez-Soto, Luis L. and Silberhorn, Christine},
  journal = {PRX Quantum},
  volume = {2},
  issue = {1},
  pages = {010301},
  numpages = {7},
  year = {2021},
  month = {Jan},
  publisher = {American Physical Society},
  doi = {10.1103/PRXQuantum.2.010301},
  url = {https://link.aps.org/doi/10.1103/PRXQuantum.2.010301}
}

@article{Mazelanik2022,
  title = {Optical-Domain Spectral Super-Resolution via a Quantum-Memory-Based Time-Frequency Processor},
  author = {Mazelanik, Mateusz and Leszczy{\'n}ski, Adam and Parniak, Micha{\l}},
  year = 2022,
  month = feb,
  journal = {Nat. Comm.},
  volume = {13},
  number = {1},
  pages = {691},
  issn = {2041-1723},
  doi = {10.1038/s41467-022-28066-5}
}

@article{HOM1987,
  title = {Measurement of subpicosecond time intervals between two photons by interference},
  author = {Hong, C. K. and Ou, Z. Y. and Mandel, L.},
  journal = {Phys. Rev. Lett.},
  volume = {59},
  issue = {18},
  pages = {2044--2046},
  numpages = {0},
  year = {1987},
  month = {Nov},
  publisher = {American Physical Society},
  doi = {10.1103/PhysRevLett.59.2044},
  url = {https://link.aps.org/doi/10.1103/PhysRevLett.59.2044}
}

@article{Barenco1997,
author = {Barenco, Adriano and Berthiaume, Andr\'{e} and Deutsch, David and Ekert, Artur and Jozsa, Richard and Macchiavello, Chiara},
title = {Stabilization of Quantum Computations by Symmetrization},
journal = {SIAM Journal on Computing},
volume = {26},
number = {5},
pages = {1541-1557},
year = {1997},
doi = {10.1137/S0097539796302452},
URL = {https://doi.org/10.1137/S0097539796302452},
eprint = {https://doi.org/10.1137/S0097539796302452}
}

@article{Guta2006,
  title = {Local asymptotic normality for qubit states},
  author={Gu{\c{t}}{\u{a}}, M{\u{a}}d{\u{a}}lin and Kahn, Jonas},
  journal = {Phys. Rev. A},
  volume = {73},
  issue = {5},
  pages = {052108},
  numpages = {15},
  year = {2006},
  month = {May},
  publisher = {American Physical Society},
  doi = {10.1103/PhysRevA.73.052108},
  url = {https://link.aps.org/doi/10.1103/PhysRevA.73.052108}
}

@misc{ISOIECGuide984,
  author       = {{International Organization for Standardization}},
  title        = {{ISO/IEC Guide 98-4:2012. Uncertainty of measurement -- Part 4: Role of measurement uncertainty in conformity assessment}},
  year         = {2012},
  publisher    = {International Organization for Standardization},
  url          = {https://www.iso.org/standard/50465.html},
  note         = {Accessed: 2026-07-09}
}

\onecolumngrid
\appendix

\renewcommand{\thesection}{\Alph{section}}
\renewcommand{\thesubsection}{\Alph{section}.\arabic{subsection}}
\renewcommand{\thesubsubsection}{\Alph{section}.\arabic{subsection}.\Roman{subsubsection}}


\section{Description of Tests}\
Let $\Theta\subset\mathbb{R}$ be a parameter space.  For any $\delta>0$ define the sets 
	\begin{equation}
       	 \begin{split}
            		B_\delta(\vartheta) &:= \left\{\varphi \in \Theta \;\middle|\,\; |\,\varphi-\vartheta|\, \le \delta\right\}\\
            		B^{c}_\delta(\vartheta) &:= \Theta \backslash B_\delta(\vartheta)\, .
        	\end{split}
    	\label{app:complement_sets}
	\end{equation}
For every $\vartheta\in\Theta$ we define the composite hypothesis that the parameter lies within a $\delta$-neighborhood of $\vartheta$,   $H_\vartheta\, :\, \varphi \in B_\delta(\vartheta)$.  
Let $\Vec{m}\in M^{\mathbb{N}}$ denote the sequence of observations; note that in principle this sequence may be countably infinite.  Then for $n\in\mathbb{N}$, 
$M^{n}$ is a random variable distributed according to 
	\begin{equation}
        	p(\Vec{m}_n\mid\vartheta)=\prod_{k=1}^n p(m_k\mid\Vec{m}_{k-1},\vartheta)
    	\label{app:prob_observations}
    	\end{equation}
where $\Vec{m}_n=(m_1,\ldots , m_n)$ denotes the vector of $n$ observations.  Note that we do not assume that the observations are 
independent nor that they are identically distributed.  

As the hypotheses are not mutually exclusive, it makes little sense to compare one hypothesis against the union of all other hypotheses. 
Instead, we compare each hypothesis against its corresponding complementary hypothesis, $H_{\bar{\vartheta}}\, :\, \theta \in B^{c}_\delta(\vartheta)$.  Specifically for each 
$\vartheta\in\Theta$, we associate Sequential Probability Ratio Tests (SPRTs) comparing $H_\vartheta$, with $H_{\bar{\vartheta}}$, and stop as soon as one of these tests reaches the stopping boundary.
Because our hypotheses are composite, the construction of the likelihood function is not unique. Below we provide three operationally motivated likelihood functions 
that give rise to three sequential tests.  \secref{app:complement_test} introduces the \textit{complement test}, \secref{app:concentration_test} the 
\textit{concentration test}, and \secref{app:twin-peaks} the \textit{twin-peaks} test.  Finally, \secref{app:Asymptotic_equivalence} proves that under suitable regularity conditions, all three tests 
are asymptotically equivalent at the exponential scale, implying identical error exponents and stopping-time scaling.

\subsection{Complement test}\label{app:complement_test}
For every $\vartheta\in\Theta$, the \textit{complement test} considers the two complementary composite hypotheses
\begin{equation}
    H_\vartheta:
    \varphi\in B_\delta(\vartheta),
    \qquad
    H_{\bar{\vartheta}}:
    \varphi\notin B_\delta(\vartheta).
\end{equation}
Let $\pi(\varphi)\,\mathrm{d}\varphi$ be a prior probability measure on
$\Theta$, and denote the prior probabilities of these hypotheses by
\begin{equation}
    P(H_\vartheta)
    =
    \int_{B_\delta(\vartheta)}
    \pi(\varphi)\,\mathrm{d}\varphi,
    \qquad
    P(H_{\bar{\vartheta}})
    =
    1-P(H_\vartheta).
\end{equation}
The marginal likelihood under $H_\vartheta$ is
\begin{equation}
    p(\Vec{m}_n\mid H_\vartheta)
    =
    \frac{
        \displaystyle
        \int_{B_\delta(\vartheta)}
        p(\Vec{m}_n\mid\varphi)\,
        \pi(\varphi)\,
        \mathrm{d}\varphi
    }{
        P(H_\vartheta)
    },
\end{equation}
and analogously for $H_{\bar{\vartheta}}$.

For each $\vartheta$, the likelihood ratio is therefore
\begin{equation}
    \Lambda_n^{(\mathrm{comp})}(\vartheta)
    :=
    \frac{
        p(\Vec{m}_n\mid H_\vartheta)
    }{
        p(\Vec{m}_n\mid H_{\bar{\vartheta}})
    }.
\end{equation}
By Bayes' rule, the Bayesian strong-error condition
\begin{equation}
    P(H_\vartheta\mid\Vec{m}_n)
    \geq
    1-\epsilon
\end{equation}
is equivalent to
\begin{equation}
    \Lambda_n^{(\mathrm{comp})}(\vartheta)
    \frac{
        P(H_\vartheta)
    }{
        P(H_{\bar{\vartheta}})
    }
    \geq
    \frac{1-\epsilon}{\epsilon}.
\label{app:complement_individual_condition}
\end{equation}
All hypotheses $H_\vartheta$ are evaluated in parallel using the same
measurement record. The global procedure is one-sided: it stops only when
Eq.~\eqref{app:complement_individual_condition} is satisfied in favor of at
least one $H_\vartheta$. No acceptance condition is imposed on
$H_{\bar{\vartheta}}$; evidence against a particular $H_\vartheta$ only
disfavors the corresponding region $B_\delta(\vartheta)$ and cannot cause the
global test to stop.

The stopping time is consequently
\begin{equation}
    N
    = \inf\left\{  n:
        \max_{\vartheta\in\Theta}
        \left[
            \Lambda_n^{(\mathrm{comp})}(\vartheta)
            \frac{P(H_\vartheta) }{P(H_{\bar{\vartheta}})}
        \right]
        \geq
        \frac{1-\epsilon}{\epsilon}
    \right\}.
\end{equation}
The test stops as soon as the strong-error condition is satisfied for at least one $H_\vartheta$. 
The procedure can  be viewed as an infinite collection of correlated
parallel random walkers, each with a single absorbing boundary, together with
a global ``first one past the post'' stopping rule.
The hypothesis attaining the maximum is accepted, thereby identifying a tolerance region $B_\delta(\hat{\vartheta})$ 
with posterior probability at least $1-\epsilon$; its center $\hat{\vartheta}$ may additionally be reported as a point estimate.

Equivalently, using the joint probabilities of the data and the composite hypotheses, the individual stopping condition can be written more compactly as
 \begin{equation} 
 \frac{ P(\Vec{m}_n,H_\vartheta) }{ P(\Vec{m}_n,H_{\bar{\vartheta}}) } 
 \geq \frac{1-\epsilon}{\epsilon}. 
 \end{equation} 
 Here, the dependence on the prior is implicit in the joint probabilities; in particular,
  \begin{equation}
   P(\Vec{m}_n,H_\vartheta) = \int_{B_\delta(\vartheta)} p(\Vec{m}_n\mid\varphi)\, \pi(\varphi)\, \mathrm{d}\varphi .
  \end{equation}
In practice, evaluating this condition can be computationally demanding, because the
 prior-weighted likelihood must be integrated over a different parameter region for every candidate center $\vartheta$
  and recomputed after each measurement round.

\subsection{Concentration test}\label{app:concentration_test}

Alternatively, one may compare the posterior-weighted likelihood at the most likely parameter value against the average posterior-weighted likelihood over all parameter values that are distinguishable 
at a prescribed resolution $\delta$. More precisely, define
	\begin{equation}
    		\hat{\vartheta}:=\arg\max_{\vartheta\in\Theta}p(\Vec m_n\mid\vartheta)\pi(\vartheta)\, 
	\label{app:argmax_concentration}
	\end{equation}
and let $B_\delta(\hat{\vartheta})$ denote its $\delta$-neighborhood.  The posterior-weighted average likelihood outside this neighborhood is 
	\begin{equation}
		\bar{L}_n :=\frac{1}{\pi(\Theta\backslash B_{\delta}(\hat{\vartheta}))} \int_{\Theta\backslash B_{\delta}(\hat{\vartheta})} p(\Vec{m}_n\mid\varphi)\pi(\varphi)\mathrm{d}\varphi\, ,
	\label{app:average_likelihood}
	\end{equation}
for some measure $\pi:\Theta\to \mathbb{R}$. The \textit{concentration test} is defined using the following likelihood ratio statistic
	\begin{equation}
		\Lambda_n^{(\mathrm{CT})}(\hat{\vartheta})= \frac{p(\Vec{m}_n\mid\hat{\vartheta})}{\bar{L}_n}\,,
	\label{app:concentration_likelihood}
	\end{equation}
with the stopping condition
	\begin{equation}
		\Lambda_n^{(\mathrm{CT})}(\hat{\vartheta})\ge A_\epsilon \, .
	\label{app:concentration_stopping}
	\end{equation}	
Here, $A_\epsilon$ is a user-defined threshold that depends on the desired error threshold $\epsilon$ used in the complement test above. 	

Just as for the \textit{complement test}, the concentration test has a clear operational interpretation within the Bayesian framework.  Applying Bayes' theorem \eqnref{app:concentration_likelihood} 
can be rewritten as 
	\begin{equation}
		\frac{p(\hat{\vartheta}\mid\Vec{m}_n)\pi(\Theta\backslash B_{\delta}(\hat{\vartheta}))}{p(\Theta\backslash B_\delta(\hat{\vartheta})\mid \Vec{m}_n)}=\Lambda_n^{(\mathrm{CT})}(\hat{\vartheta})\pi(\hat{\vartheta}) \,.
	\label{app:concentration_posterior}
	\end{equation}
Hence, the statistic of the concentration test measures how strongly the posterior probability density function concentrates within a single resolution cell. For a uniform measure, i.e., 
$\pi(\vartheta)=\lvert\Theta\rvert^{-1}$, the test measures how much larger the posterior probability density function at $\hat{\vartheta}$ is than the average posterior density. 
The test terminates once the posterior density at the maximum-likelihood estimate exceeds the average posterior density over all distinguishable competing parameter values by a factor $A_\epsilon$.
	
\eqnref{app:concentration_posterior} may appear ill-defined as point probabilities vanish for continuous parameter spaces, i.e., $\pi(\{\hat{\vartheta}\})=0$.  However, this issue can be readily remedied by discretizing the 
parameter space and taking the appropriate continuous limit. Specifically, for $k\in\mathbb{N}$ let us partition the parameter space $\Theta$ into compact, convex disjoint sets 
$\{\Theta_j\, |\, \, \pi(\Theta_j)= \pi(\Theta)/k\}_{j=1}^k$.  Let $j^\star$ denote the cell containing the maximizer, $\hat{\vartheta}$, and define the collection of cells intersecting its $\delta$-neighborhood by 
	\begin{equation}
		\mathcal N_\delta(j^\star)=\left\{i:\Theta_i\cap B_\delta(\hat{\vartheta})\neq\varnothing\right\}\, .
	\end{equation}
The posterior-weighted likelihood of each cell is
	\begin{equation}
		p(\Vec{m}_n\mid j):=\int_{\Theta_j} p(\Vec{m}_n\mid\vartheta)\pi(\vartheta)\mathrm{d}\vartheta\, ,
	\label{app:concentration_discreet}
	\end{equation}
and the corresponding likelihood ratio as 
	\begin{equation}
		\Lambda_n^{(\mathrm{CT})}(j^\star):= \frac{p(\Vec{m}_n\mid j^\star)}{\displaystyle\frac1{k-|\,\mathcal N_\delta(j^\star)|\,}\sum_{i\notin\mathcal N_\delta(j^\star)}p(\Vec{m}_n\mid i)}
	\label{app:concentration_discreet_likelihood}
	\end{equation} 	
As each $\Theta_j$ is compact and $p(\Vec{m}_n\mid \vartheta)\pi(\vartheta)$ is continuous in $\vartheta$ the integral in \eqnref{app:concentration_discreet} can be approximated by 
	\begin{equation}
		p(\Vec{m}_n\mid j)=p(\Vec{m}_n\mid \vartheta_j)\,\pi(\Theta_j)\;+\; o(\pi(\Theta_j))\, ,
	\label{app:integral_mean_value}
 	\end{equation}
for some $\vartheta_j\in\Theta_j$ where the error vanishes in the limit of a fine partition. As the partition is refined, the union of cells in $\mathcal N_\delta(j^\star)$
converges to $B_\delta(\hat{\vartheta})$ in measure. Consequently, the discrete likelihood ratio converges, in the Riemann sense, to the continuous statistic \eqnref{app:concentration_likelihood}.

\subsection{The twin-peaks tests}\label{app:twin-peaks}

A third choice of test is the \textit{twin-peaks test}, which compares the globally most likely parameter value, $\hat{\vartheta}\in\Theta$, with the most likely parameter value outside a prescribed 
$\delta$-neighborhood of $\hat{\vartheta}$.  More concretely, we define 
	\begin{equation}
		\hat{\vartheta}:=\arg\max\limits_{\vartheta\in\Theta} \bigl\{p(\Vec{m}_n\mid\vartheta)\pi(\vartheta)\bigr\}\, ,
	\label{app:argmax}
	\end{equation} 
and define the $\delta$-sized neighborhood, $B_\delta(\hat{\vartheta})$, around  $\hat{\vartheta}$.  The statistic of the twin-peaks test is defined as 
	\begin{equation}       \Lambda^{\mathrm{TP}}_n(\hat{\vartheta}):=\frac{p(\Vec{m}_n\mid\hat{\vartheta})\,\pi(\hat{\vartheta})}{\sup\limits_{\vartheta\in \Theta\backslash B_\delta(\hat{\vartheta})}\, p(\Vec{m}_n\mid\vartheta)\,
            \pi(\vartheta)}\, ,
    \label{app:twin_peaks_ratio}
    \end{equation}
with the stopping condition 
	\begin{equation}
		\Lambda^{(\mathrm{TP})}_n(\hat{\vartheta})\ge A_\epsilon\, 
	\label{app:twin-peaks_stopping}
	\end{equation}
for some user-defined threshold $A_\epsilon$ that depends on the desired error-threshold $\epsilon$.

Operationally, the twin-peaks test determines whether the maximum-likelihood estimate is statistically isolated from all competing parameter values outside a prescribed resolution $\delta$. The neighborhood 
$B_\delta(\hat{\vartheta})$ specifies the desired estimation resolution, so that all parameter values within this neighborhood are regarded as operationally equivalent. The test terminates once every parameter value outside this neighborhood has a substantially smaller posterior-weighted likelihood, as quantified by the threshold $A_\epsilon$.  In contrast to the \textit{concentration test}, the 
\textit{twin-peaks test} measures whether the dominant likelihood peak is isolated in the likelihood landscape.  

\subsection{Asymptotic equivalence between all three tests}\label{app:Asymptotic_equivalence}

The \textit{complement test}, \textit{concentration test}, and \textit{twin-peaks test} have distinct test statistics as well as stopping rules.  In this section, we 
show that all three tests are asymptotically equivalent at the exponential scale, meaning that they possess the same error exponent and stopping-time scaling.
To establish this equivalence, write
    \begin{equation}
        p(\vec{m}_n\mid\varphi)=e^{n \ell_n(\varphi)}
    \end{equation}
where we have defined 
    \begin{equation}
        \ell_n(\varphi):=\frac{1}{n}\log p(\vec{m}_n\mid\varphi)\, .
    \end{equation}
        
The key mathematical ingredient is the following version of Laplace's principle~\cite[Theorem 4.3.1]{dembo1998large}.  
\begin{theorem}[Laplace principle]\label{thm:Laplace}
Let $A\subset\Theta$ be a measurable set satisfying $0<\pi(A)<\infty$, and define
	\begin{equation}
    		I_n(A):=\int_A e^{n\ell_n(\varphi)}\pi(\varphi)\,\mathrm{d}\varphi\, .
	\end{equation}
Assume that 
	\begin{equation}
    		\ell_n\xrightarrow[n\to\infty]{\mathrm{a.s.,\,uniformly\ on\ compact\ sets}}\ell,
	\label{app:almost_uniform_convergence}
	\end{equation}
and each $\ell_n$ is uniformly continuous on compact subsets of $\Theta$. Then
	\begin{equation}
    		\frac1n\log I_n(A)=\sup_{\varphi\in A}\ell(\varphi)+o(1),
	\label{app:Laplace_identity}
	\end{equation}
almost surely.
\end{theorem}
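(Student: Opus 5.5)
The plan is to prove the two matching bounds $\limsup_n \frac1n\log I_n(A)\le \sup_A\ell$ and $\liminf_n \frac1n\log I_n(A)\ge \sup_A\ell$ on the probability-one event where the convergence in \eqnref{app:almost_uniform_convergence} holds. From here on I fix one such realization of the record $\Vec{m}$, so the argument is purely deterministic.

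The proof needs three properties that the statement does not spell out, so I will state them explicitly as assumptions.
\begin{enumerate}
\item[(i)] The limit $\ell$ is continuous and finite on $\overline{A}$. This follows on any compact set from the uniform convergence of the uniformly continuous $\ell_n$, provided $\ell_n$ stays bounded; in particular I need $p(\Vec m_n\mid\varphi)>0$.
\item[(ii)] $\Theta$ is compact. This was assumed in the main text, so ``uniformly on compact sets'' gives uniform convergence on all of $\Theta$.
\item[(iii)] $A$ is regular with respect to $\pi$: for every $\varphi\in A$ and every open $U\ni\varphi$ one has $\pi(A\cap U)>0$.
\end{enumerate}
Assumption (iii) holds for the sets actually used in this appendix, namely $B_\delta(\vartheta)$ and $B^c_\delta(\vartheta)$, whenever $\pi$ has a strictly positive density, since every point of an interval has positive-measure pieces of the interval nearby. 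Without (iii) the correct right-hand side would be the $\pi$-essential supremum, and I would remark on that.

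\emph{Upper bound.} Set $\eta_n:=\sup_{\Theta}|\ell_n-\ell|$, so that $\eta_n\to0$ by (ii). For every $\varphi\in A$ we have $e^{n\ell_n(\varphi)}\le e^{n(\sup_A\ell+\eta_n)}$. Integrating against $\pi$ over $A$ gives
\begin{equation}
\frac1n\log I_n(A)\le \sup_A\ell+\eta_n+\frac1n\log\pi(A).
\end{equation}
Because $0<\pi(A)<\infty$, the last term is $o(1)$, and the upper bound follows.

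\emph{Lower bound.} Fix $\gamma>0$ and pick $\varphi^\ast\in A$ with $\ell(\varphi^\ast)\ge\sup_A\ell-\gamma$. By continuity of $\ell$ there is an open neighbourhood $U$ of $\varphi^\ast$ on which $\ell\ge\sup_A\ell-2\gamma$. On $A\cap U$ this gives $\ell_n\ge\sup_A\ell-2\gamma-\eta_n$, hence
\begin{equation}
\frac1n\log I_n(A)\ge \sup_A\ell-2\gamma-\eta_n+\frac1n\log\pi(A\cap U).
\end{equation}
By (iii) we have $\pi(A\cap U)>0$, so the last term vanishes as $n\to\infty$. Therefore $\liminf_n\frac1n\log I_n(A)\ge\sup_A\ell-2\gamma$, and letting $\gamma\downarrow0$ gives the matching lower bound. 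Combining the two bounds yields \eqnref{app:Laplace_identity}.

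\emph{Main obstacle.} The real difficulty is not the two estimates but justifying the lower bound. It needs the positive-mass condition (iii), and it needs $\ell$ to be finite and continuous near $\sup_A\ell$. The latter can fail where the likelihood vanishes, because there $\ell_n=-\infty$. My first approach would be to restrict to parameter sets and priors for which the likelihood stays bounded away from zero, which holds for both the phase and the purity models away from the pure-state boundary. Where that is not possible, I would instead state the theorem with the essential supremum and verify that this coincides with $\sup_A\ell$ for the intervals used in the equivalence argument that follows.
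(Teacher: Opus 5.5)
Your proof is correct. It uses the same $\limsup/\liminf$ sandwich as the paper, and your upper bound is essentially the paper's: the paper bounds by $\sup_A\ell_n$ and then asserts $\limsup_n\sup_A\ell_n=\sup_A\ell$, which tacitly requires uniform convergence on $A$ itself, i.e.\ your assumption (ii).

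The lower bound, however, is organised differently in a way that matters. The paper picks a near-maximiser $\varphi$ of $\ell_n$ and a neighbourhood $B_\epsilon\subset A$ obtained from the uniform continuity of $\ell_n$. Both depend on $n$, so the claim $\frac1n\log\pi(B_\epsilon)\to0$ does not follow as written. It would need equicontinuity of $\{\ell_n\}$ and a lower bound on $\pi(B_\epsilon)$ that is uniform in $n$. Moreover, the very existence of a positive-mass neighbourhood $B_\epsilon\subset A$ is your regularity condition (iii), which the paper uses without comment.

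You instead take a near-maximiser $\varphi^\ast$ of the limit $\ell$. Continuity of $\ell$ (automatic, as a uniform limit of continuous functions on compact $\Theta$) gives an $n$-independent neighbourhood $U$, and you pass to $\ell_n$ through $\eta_n$. The measure term $\pi(A\cap U)$ is then a fixed positive constant, so its $\frac1n\log$ vanishes trivially. The paper's form has the mild appeal of a finite-$n$ inequality in $\ell_n$ alone, but turning it into the asymptotic statement needs exactly the uniformity your route gets for free.

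Your remark that without (iii) the right-hand side must be the $\pi$-essential supremum is also right. For example, take $\Theta=[0,2]$, $A=[0,1]\cup\{2\}$, uniform $\pi$ and $\ell_n=\ell$ the identity; then $\sup_A\ell=2$ while $\frac1n\log I_n(A)\to1$.

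One correction to your closing remark on applicability concerns the phase model with the covariant (random-axis) measurement. Each outcome $\phi_k$ produces a zero of the likelihood at $\theta=\phi_k+\pi$, and these zeros become dense. Hence no fixed interval keeps $\ell_n$ finite for all large $n$. Excluding a pure-state boundary does not help there, since every phase state is pure.
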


Note that the assumption of almost uniform convergence, \eqnref{app:almost_uniform_convergence}, is satisfied by i.i.d. models, ergodic Markov processes, as well as 
adaptive experiments with stabilized likelihood increments.

\begin{proof}
The proof proceeds by establishing that both the limit superior and limit inferior of $\frac1n\log I_n(A)$ converge.  For the limit superior, observe that for any measurable set $A\subset\Theta$, 
we have that point-wise $\ell_n(\varphi)\le \sup\limits_{\omega \in A} \ell_n(\omega), \, \forall\varphi\in A$ and 
    \begin{equation}
        I_n(A):=\int_A e^{n \ell_n(\varphi)} \pi(\varphi)\mathrm{d}\varphi\le \exp\left(n \sup\limits_{\omega \in A} \ell_n(\omega)\right) \pi(A)\, .
    \end{equation}
It follows that 
    \begin{equation}
        \frac{1}{n} \log I_n(A)\leq \sup\limits_{\omega \in A} \ell_n(\omega) +\frac{1}{n} \log \pi(A)\, ,
    \end{equation}
and since $0  < \pi(A)<\infty$ 
    \begin{equation}
            \limsup\limits_{n\to\infty} \frac{1}{n} \log I_n(A)\le \limsup\limits_{n\to\infty}\,  \sup\limits_{\omega\in A} \ell_n(\omega)=\sup\limits_{\omega\in A} 
            \ell(\omega)\, .
    \end{equation}

To establish a lower bound, note that for every $\epsilon>0$ there exists $\varphi\in A$ such that 
    \begin{equation}
        \ell_n(\varphi)\ge \sup\limits_{\omega\in A} \ell_n(\omega)-\epsilon\, .
    \end{equation}
By uniform continuity of $\ell_n(\varphi)$ on compact sets of $\Theta$, there exists a compact neighborhood $B_\epsilon\subset A$ around $\varphi$ such that
$\ell_n(\omega)\ge \ell_n(\varphi)-\epsilon, \, \forall\, \omega\in B_\epsilon$. Hence 
    \begin{equation}
        \inf\limits_{\omega\in B_\epsilon}\ell_n(\omega)\geq \ell_n(\varphi)-\epsilon\geq \sup\limits_{\omega\in A} \ell_n(\omega)-2\epsilon\, ,
    \end{equation}
and 
    \begin{equation}
        \frac{1}{n}\log I_n(A)\ge \sup\limits_{\omega\in A} \ell_n(\omega)-2\epsilon + \frac{1}{n}\log \pi(B_\epsilon)\,.
    \end{equation}
As $0<\pi(B_\epsilon)<\infty$ it follows that
    \begin{equation}
        \liminf_{n\to\infty} \frac{1}{n}\log I_n(A)\geq \liminf_{n\to\infty}\, \sup\limits_{\omega\in A} \ell_n(\omega)-2\epsilon=\sup\limits_{\omega\in A} \ell(\omega)\, ,  
    \end{equation}
where we have let $\epsilon\to 0$. As both the limit superior and limit inferior of $\log I_n(A)/n$ coincide it follows that 
    \begin{equation}
        \frac{1}{n} \log I_n(A) = \sup\limits_{\omega\in A} \ell(\omega) + o(1)\, .
    \end{equation}
\end{proof}

The exponential equivalence among the three tests follows almost immediately.

\begin{corollary}[Exponential equivalence of the likelihood ratios]
Under the assumptions of Theorem~\ref{thm:Laplace}, the likelihood ratios of the complement, concentration, and twin-peaks tests satisfy
	\begin{equation}
		\begin{split}
			\frac1n\log\Lambda_n^{(\mathrm{IL})}(\vartheta)&=\sup\limits_{\varphi\in B_\delta(\vartheta)}\ell(\varphi)-\sup_{\varphi\in B_\delta^c(\vartheta)}\ell(\varphi)+o(1)\\
    			\frac1n\log\Lambda_n^{(\mathrm{CT})}(\hat{\vartheta})&=\ell(\hat{\vartheta})-\sup_{\varphi\in\Theta\backslash B_\delta(\hat\vartheta)}\ell(\varphi)+o(1),\\
    			\frac1n\log\Lambda_n^{(\mathrm{TP})}(\hat{\vartheta})&=\ell(\hat\vartheta)-\sup_{\varphi\in\Theta\backslash B_\delta(\hat\vartheta)}\ell(\varphi)+o(1).
		\end{split}
	\end{equation}
Consequently, the concentration and twin-peaks tests are exponentially equivalent.  Moreover, evaluating the complement test at the maximizing parameter value $\hat{\vartheta}$ yields the same 
exponential rate.
\end{corollary}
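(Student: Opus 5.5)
The plan is to apply Theorem~\ref{thm:Laplace} to numerator and denominator of each statistic separately, taking logarithms and dividing by $n$, so that each ratio becomes a difference of two Laplace exponents. Throughout, I will absorb the prior into the integrand: since $\pi$ is continuous and strictly positive on the compact $\Theta$, $\frac1n\log\pi(\varphi)\to0$ uniformly. This means the prior factors and the normalisations $P(H_\vartheta)$, $\pi(\Theta\setminus B_\delta)$ only contribute $O(1/n)=o(1)$.

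For the complement test, I will write
\begin{equation}
\Lambda_n^{(\mathrm{comp})}(\vartheta)=\frac{I_n(B_\delta(\vartheta))}{I_n(B^c_\delta(\vartheta))}\cdot\frac{P(H_{\bar\vartheta})}{P(H_\vartheta)}.
\end{equation}
Applying the theorem with $A=B_\delta(\vartheta)$ and $A=B^c_\delta(\vartheta)$ (both of positive prior mass whenever $B^c_\delta(\vartheta)\neq\varnothing$) gives the first line. One caveat is that $B^c_\delta$ is not closed, but the supremum of the continuous limit $\ell$ over it equals that over its closure. The lower-bound step of the theorem only needs small balls inside $A$ near near-maximisers, which exist because $B^c_\delta$ is relatively open in $\Theta$.

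For the concentration test, the denominator $\bar L_n$ is $I_n(\Theta\setminus B_\delta(\hat\vartheta))/\pi(\Theta\setminus B_\delta(\hat\vartheta))$, so the theorem again yields $\sup_{B^c_\delta(\hat\vartheta)}\ell+o(1)$. The numerator is pointwise: $\frac1n\log p(\vec m_n\mid\hat\vartheta)=\ell_n(\hat\vartheta)$. For the twin-peaks test, both numerator and denominator are pointwise or suprema, so no integral is needed. I only need $\sup_{B^c_\delta(\hat\vartheta)}\ell_n=\sup_{B^c_\delta(\hat\vartheta)}\ell+o(1)$ and $\ell_n(\hat\vartheta)=\ell(\hat\vartheta)+o(1)$, which follow directly from the uniform convergence $\|\ell_n-\ell\|_\infty\to0$ on the compact $\Theta$.

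The main obstacle is that $\hat\vartheta=\hat\vartheta_n$ is data dependent, so the sets $B^c_\delta(\hat\vartheta_n)$ move with $n$. Theorem~\ref{thm:Laplace} is stated for a fixed $A$, and its $o(1)$ might not be uniform over the family of sets. I would handle this by making the error uniform. The upper bound $\frac1n\log I_n(A)\le\sup_A\ell_n+\frac1n\log\pi(A)$ is already uniform in $A$, as long as $\pi(A)$ is bounded below over the family, which holds since $\Theta$ is compact and $\delta$ is fixed. For the lower bound, uniform (equi)continuity of $\ell_n$, together with convergence to a continuous $\ell$, gives a radius $r(\epsilon)$ independent of $A$ and $n$. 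Then $\pi(B_{r(\epsilon)}\cap A)\ge c(\epsilon)>0$ uniformly over the sets $B^c_\delta(\vartheta)$ whenever they are nonempty. This gives the stated $o(1)$ uniformly in the centre, so it remains valid at $\hat\vartheta_n$.

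Finally, the twin-peaks and concentration expressions then coincide exactly up to $o(1)$, which gives their exponential equivalence. For the complement test at $\vartheta=\hat\vartheta$, I will use $\sup_{B_\delta(\hat\vartheta)}\ell=\ell(\hat\vartheta)+o(1)$. This holds because $\hat\vartheta_n$ maximises $\ell_n+\frac1n\log\pi$, so $\ell_n(\hat\vartheta)\ge\sup_\Theta\ell_n-O(1/n)$, and uniform convergence transfers this to $\ell$. Substituting into the first line then gives the same rate as the other two tests.
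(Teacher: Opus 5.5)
Your proposal is correct and follows the paper's route: Theorem~\ref{thm:Laplace} is applied separately to numerator and denominator, and the pointwise terms are controlled by the convergence of $\ell_n$. It is also more careful than the paper's proof, which neither addresses the data dependence of $B^c_\delta(\hat\vartheta_n)$ nor justifies the final claim about the complement test at $\hat\vartheta$; your argument via $\ell(\hat\vartheta_n)=\sup_\Theta\ell+o(1)$ supplies that step.

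The one loose step is the uniform lower bound on $\pi(B_{r(\epsilon)}\cap B^c_\delta(\vartheta))$ and on $\pi(B^c_\delta(\vartheta))$ over all centres. This does not follow from compactness and fixed $\delta$ alone, because a component of $B^c_\delta(\vartheta)$ cut off by $\partial\Theta$ can be arbitrarily thin, e.g.\ $\Theta=[0,1]$ with $\vartheta+\delta\to 1$. You should either state the bound as an explicit assumption (it holds on the circle for $\delta<\pi$), or localise the uniformity to a neighbourhood of $\vartheta^*$ using consistency of $\hat\vartheta_n$, which works when $\vartheta^*\pm\delta\notin\partial\Theta$.
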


\begin{proof}
For the complement test, apply Theorem~\ref{thm:Laplace} separately to the numerator and denominator of
	\begin{equation}
		L_n^{(\mathrm{IL})}(H_\vartheta)=\int_{B_\delta(\vartheta)}e^{n\ell_n(\varphi)}\pi(\varphi)\,d\varphi,
	\end{equation}
and subtract the resulting asymptotic expansions.

For the \textit{concentration test}, the denominator is the integrated likelihood over $\Theta\backslash B_\delta(\hat\vartheta)$, while the numerator consists of the point likelihood evaluated at the maximizer. Hence
	\begin{equation}
		\frac1n\log\Lambda_n^{(\mathrm{CT})}=\ell(\hat\vartheta)-\sup_{\varphi\in\Theta\backslash B_\delta(\hat\vartheta)}\ell(\varphi)+o(1).
	\end{equation}
For the \textit{twin-peaks test}, $\log\pi(\hat\vartheta)=o(n)$, and therefore the prior contributes only a subexponential correction,
	\begin{equation}
		\frac1n\log\bigl(p(\Vec m_n\mid \hat\vartheta)\pi(\hat\vartheta)\bigr)=\ell(\hat\vartheta)+o(1),
	\end{equation}
while the denominator is treated identically. This yields the same exponential limit.
\end{proof}

Since the \textit{concentration} and \textit{twin-peaks tests} possess the same exponential rate as the complement test, they inherit the same asymptotic stopping 
threshold. Consequently, throughout this work, we set $A_\epsilon= \log(1-\epsilon)-\log\epsilon$ with the understanding that, asymptotically, this threshold enforces 
the same strong-error criterion as the \textit{complement test}.  We now turn to the corresponding stopping-time asymptotics. Assuming that the observations are 
generated by a true parameter $\vartheta^*\in\Theta$, and that the maximum-likelihood estimator is strongly consistent, i.e., $\hat{\vartheta}
\xrightarrow{\mathrm{a.s.}}\vartheta^*$, the likelihood ratios of the \textit{concentration} and \textit{twin-peaks tests} satisfy
	\begin{equation}
    		\lim_{n\to\infty}\frac{1}{n}\log\Lambda_n^{(\mathrm{CT})}=\lim_{n\to\infty}\frac{1}{n}\log\Lambda_n^{(\mathrm{TP})}=
    		\inf_{\varphi\in\Theta\setminus B_\delta(\vartheta^*)}D(\vartheta^*\|\,\varphi)\, ,
	\end{equation}
almost surely, where
	\begin{equation}
    		D(\vartheta^*\|\,\varphi):=\lim_{n\to\infty}\frac{1}{n}D_{\mathrm{KL}}\!\left(p(\Vec{m}_n\mid\vartheta^*)\,\middle\|\,p(\Vec{m}_n\mid\varphi)\right)
	\end{equation}
denotes the relative entropy rate. For the \textit{complement test} evaluated at $\vartheta=\vartheta^*$, and under the same consistency assumption, the identical 
asymptotic limit is obtained.  As the stopping criterion is $\log\Lambda_n=A_\epsilon$, it follows that the stopping time of each test satisfies
	\begin{equation}
    		\frac{N}{A_\epsilon}\xrightarrow{\mathrm{a.s.}}\frac{1}{\inf\limits_{\varphi\in\Theta\setminus B_\delta(\vartheta^*)}D(\vartheta^*\|\,\varphi)}
	\end{equation}
for $\epsilon\rightarrow0$. Under the additional first-passage and integrability conditions required to exchange the large-threshold limit with the expectation value, 
the mean stopping time satisfies
	\begin{equation}
    		\mathbb{E}(N\mid\vartheta^*)=\frac{\log(1-\epsilon)-\log\epsilon}{\inf\limits_{\varphi\in\Theta\setminus B_\delta(\vartheta^*)}
    		D(\vartheta^*\|\,\varphi)}+o\!\left(\log\epsilon^{-1}\right).
	\label{app:asymptotic_stopping_time}
	\end{equation}
\subsection{Fisher information as the local curvature of the KL divergence}\label{app:Fisher_bounds}

In this appendix, we look at the behavior of the asymptotic stopping time, \eqnref{eq:asymptotic_stopping_time_classic}, in the limit of indistinguishable hypotheses.
In this regime, the KL divergence between two competing hypotheses $\vartheta$ and $\vartheta+\delta$ admits a quadratic expansion whose leading term is determined by the Fisher information of the 
conditional probability distribution $p(\Vec{m}_n\mid\vartheta)$.  Equivalently, the Fisher information is the Riemannian metric induced by the KL divergence on the statistical 
manifold~\cite{AmariShun-ichi2000Moig}.

Expanding $p(m\mid \vartheta\pm \delta)$ to second order
    \begin{equation}
        p(\vec{m}_n\mid\vartheta\pm \delta)=p(\Vec{m}_n\mid\vartheta)\pm p'(\Vec{m}_n\mid\vartheta)\, \delta+ \frac{1}{2}p''(\Vec{m}_n\mid\vartheta)\,  \delta^2 +\mc{O}( \delta^3) \,,
    \label{app:series_expand_theta}
    \end{equation}
where $p'(\Vec{m}_n\mid\vartheta)= \frac{\mathrm{d}}{\mathrm{d}\vartheta} p(\Vec{m}_n\mid\vartheta)$. It follows that 
	\begin{equation}
    		\log\left(\frac{p(\Vec{m}_n\mid\vartheta)}{p(\Vec{m}_n\mid\vartheta+\delta)}\right)= - \frac{p'(\Vec{m}_n\mid\vartheta)}{p(\Vec{m}_n\mid\vartheta)}\, \delta-\frac{1}{2}\left(\frac{p''(\Vec{m}_n\mid\vartheta)}
		{p(\Vec{m}_n\mid\vartheta)}-\left(\frac{p'(\Vec{m}_n\mid\vartheta)}{p(\Vec{m}_n\mid\vartheta)}\right)^2\right)\,  \delta^2 +\mc{O}( \delta^3)\, 
	\end{equation}
and hence the KL divergence becomes
	\begin{align}\nonumber
    		D(p(\Vec{m}_n\mid\vartheta)\|\,p(\Vec{m}_n\mid\vartheta+\delta))&=\sum_{\Vec{m}_n}\, p(\Vec{m}_n\mid\vartheta)\,  \log\left(\frac{p(\Vec{m}_n\mid\vartheta)}{p(\Vec{m}_n\mid\vartheta+\delta)}\right)\\
    		&=-\delta\, \sum_{\Vec{m}_n} p'(\Vec{m}_n\mid\vartheta) -\frac{ \delta^2}{2}\sum_{\Vec{m}_n} p''(\Vec{m}_n\mid\vartheta) 
    		+\frac{ \delta^2}{2}\sum_{\Vec{m}_n}\frac{p'(\Vec{m}_n\mid\vartheta)^2}{p(\Vec{m}_n\mid\vartheta)}\, .
	\end{align}
As $\sum_{\Vec{m}_n} p(\Vec{m}_n\mid\vartheta)=1$ remains constant, it follows that $\sum_{\Vec{m}_n} p'(\Vec{m}_n\mid\vartheta)=\sum_{\Vec{m}_n} p''(m\mid \vartheta)=0$. Recognizing that
	\begin{equation}
    		I\left(p(\Vec{m}_n\mid\vartheta)\right)=\sum_{\Vec{m}_n}\frac{p'(\Vec{m}_n\mid\vartheta)^2}{p(\Vec{m}_n\mid\vartheta)}
	\end{equation}
is the Fisher information of the conditional probability distribution $p(\Vec{m}_n\mid\vartheta)$, we define the Fisher information rate  by
	\begin{equation}
		I(\vartheta):=\lim_{n\to\infty} \frac{1}{n} I\left(p(\Vec{m}_n\mid\vartheta)\right)\, .
	\end{equation}
Substituting the second order expansion $D(\vartheta\|\,\vartheta+\delta)=\frac{1}{2}I(\vartheta)\delta^2+o(\delta^2)$ into the asymptotic stopping-time expression, \eqnref{app:asymptotic_stopping_time}, gives
	\begin{equation}
    		\mathds{E}(N\mid\vartheta^*)\sim \frac{2(\log(1-\epsilon)-\log(\epsilon))}{ \delta^2 I(\vartheta^*)}\, .
	\end{equation}
\subsection{Bayesian strong-error stopping times.}\label{app:bayesian_stopping}

In this appendix, we derive the asymptotic stopping time associated with a Bayesian strong-error criterion. Rather than comparing likelihoods, the Bayesian formulation monitors the posterior probability 
assigned to a neighborhood of the current estimate of the parameter and terminates once this probability exceeds a prescribed error level. As in the frequentist setting, our objective is to determine 
the asymptotic dependence of the stopping time on the error $\epsilon$ and the desired resolution $\delta$.

Let $\pi(\vartheta)$ denote a prior density that is continuous and strictly positive in a neighborhood of the true parameter, and let $p(\vartheta\mid\Vec{m}_n)$ denote the corresponding posterior distribution 
after $n$ observations.  The Bayesian strong-error stopping time associated with accuracy $\delta$ and error tolerance $\epsilon$ is
	\begin{equation}
		\tau_{\rm B}=\inf\left\{n: P(\vartheta\in B_\delta(\hat\vartheta_n)\mid \Vec{m}_n)\ge1-\epsilon\right\}\, ,
	\label{app:eq:bayesian_stopping}
	\end{equation}
where $B_\delta(\hat{\vartheta}_n)$ denotes the $\delta$-neighborhood of the current maximum-a-posteriori estimate $\hat{\vartheta}_n$.  The stopping criterion of \eqnref{app:eq:bayesian_stopping} is the Bayesian analog of the strong-error criterion considered throughout this work.

Under the standard regularity conditions of Bayesian asymptotic theory, the posterior distribution satisfies the Bernstein--von Mises theorem and is therefore asymptotically Gaussian,
	\begin{equation}
		p(\vartheta\mid\Vec{m}_n)\approx\mathcal{N}\left(\hat{\vartheta}_n,\frac{1}{nI(\vartheta)}\right),
	\label{app:Bernstein_vonMises}
	\end{equation}
where $I(\vartheta)$ denotes the Fisher information rate. Consequently, the posterior probability contained within the $\delta$-neighborhood of the estimator is
	\begin{equation}
		P\!\left(\vartheta\in B_\delta(\hat{\vartheta}_n)\mid\Vec{m}_n\right)\approx\operatorname{erf}\!\left(\delta\sqrt{\frac{nI(\vartheta)}{2}}\right).
	\label{app:posterior_erf}
	\end{equation}
Let $\zeta_{1-\epsilon/2}$ denote the $(1-\epsilon/2)$ quantile of the standard normal distribution,
	\begin{equation}
		\Phi(\zeta_{1-\epsilon/2})=1-\frac{\epsilon}{2},
	\label{app:epsilon_quantile1}
	\end{equation}
or equivalently,
	\begin{equation}
		\zeta_{1-\epsilon/2}=\sqrt{2}\,\operatorname{erf}^{-1}(1-\epsilon).
	\label{app:epsilon_quantile2}
	\end{equation}
Substituting \eqnref{app:posterior_erf} into the stopping criterion \eqnref{app:eq:bayesian_stopping} gives
	\begin{equation}
		\tau_{\mathrm{B}}(\vartheta)\sim\frac{\zeta_{1-\epsilon/2}^{\,2}}{\delta^2I(\vartheta)}.
	\label{app:tau_bayes}
\end{equation}
This condition is the natural continuous-parameter analog of a strong error condition. It is important that the condition is imposed on the posterior
probability of a set, not on the value of the posterior density at a point.

Using the asymptotic expansion of the Gaussian quantile,
	\begin{equation}
		\zeta_{1-\epsilon/2}^{\,2}\sim2\log\frac{1}{\epsilon},\qquad\epsilon\rightarrow0\, ,
	\end{equation}
the stopping time becomes
	\begin{equation}
		\tau_{\mathrm{B}}(\vartheta)\sim\frac{-2\log\epsilon}{\delta^2I(\vartheta)}\, .
	\label{app:tau_bayes_asymptotic}
	\end{equation}
Taking the expectation with respect to the prior distribution therefore yields
	\begin{equation}
		\mathbb{E}(\tau_{\mathrm{B}}\mid\pi): = \int \mathbb{E}(\tau_{\mathrm{B}}\mid\vartheta)\pi(\vartheta)\, \mathrm{d}\vartheta \sim\frac{-2\log(\epsilon)}{\delta^2}\int_{\Theta}\frac{\pi(\vartheta)}{I(\vartheta)}\,
		\mathrm{d}\vartheta.
	\label{app:bayes_avg_tau}
	\end{equation}

To quantify the advantage of Bayesian stopping, compare \eqnref{app:bayes_avg_tau} with the minimax fixed-sample benchmark
	\begin{equation}
		N^*(\delta,\epsilon)\sim\frac{2\log(1/\epsilon)}{\delta^2\inf_{\vartheta\in\Theta}I(\vartheta)},
	\label{app:minimax}
	\end{equation}
which is designed to guarantee the prescribed accuracy uniformly over the entire parameter space. Their ratio is given by
	\begin{equation}
		\frac{\mathbb{E}(\tau_{\mathrm{B}}\mid\pi)}{N^*(\delta,\epsilon)}=\inf_{\vartheta\in\Theta}I(\vartheta)\int_{\Theta}\frac{\pi(\vartheta)}{I(\vartheta)}\,\mathrm{d}\vartheta\le1\, ,
	\end{equation}
since $I(\vartheta)^{-1}\leq (\inf\limits_{\omega\in\Theta}I(\omega))^{-1}$ for every $\vartheta\in\Theta$.

Hence, Bayesian stopping requires, on average, no more observations than a deterministic fixed-sample strategy designed for the least informative parameter value. 
The improvement arises from adapting the stopping time to the local Fisher information through the posterior distribution, while preserving the same asymptotic dependence on the resolution 
$\delta$ and the error $\epsilon$. 

\section{Phase estimation}\label{app:phase}

In this appendix, we apply our \textit{twin-peaks test}  to the problem of quantum phase estimation. In \secref{app:KL_for_phase}, we evaluate the Kullback--Leibler divergence and Fisher information for 
several phase-estimation strategies, obtaining explicit predictions for the asymptotic stopping times of the \textit{twin-peaks test}. In \secref{app:phase_precision} we numerically investigate the performance of the 
\text{twin-peaks test} under various measurement strategies. Finally, in \secref{app:entangled_probes}, we numerically compare the performance of the \textit{twin-peaks test} using entangled probe states.

\subsection{KL divergence for the covariant phase measurement}\label{app:KL_for_phase}

We consider a pure equatorial qubit state, $\ket{\theta} = \frac{1}{\sqrt{2}}(\ket{0}+e^{i\theta}\ket{1})$, and evaluate the classical Kullback-Leibler divergence between the 
measurement outcome distributions associated with parameters $\theta$ and $\theta+\delta$ under the canonical phase-covariant POVM. 
For a single copy, the latter has density
	\begin{equation}
    		E(\phi) = \frac{\dd\phi}{2\pi}\ket{\phi}\!\bra{\phi}\, ,
	\label{app:covariant_POVM}
	\end{equation}
and the corresponding probability density is
	\begin{equation}
    		p(\phi\mid\theta) = \frac{1}{2\pi}\left|\,\langle{\phi}|\,\theta\rangle\right|\,^2 =\frac{1+\cos(\phi-\theta)}{2\pi}\, .
	\label{app:prob_covariant_POVM}
	\end{equation}
The Kullback-Leibler divergence is then
	\begin{align}\nonumber
   		D(p(\phi\mid\theta)\|\, p(\phi\mid\theta+\delta)) &= \int_0^{2\pi}p(\phi\mid\theta)\log \frac{p(\phi\mid\theta)}{p(\phi\mid\theta+\delta)}\,\dd\phi\\
   		&= \frac{1}{2\pi}   \int_0^{2\pi}(1+\cos t)\left(\log(1+\cos t)-\log(1+\cos(t+\delta))\right)\,\dd t
	\label{app:KL_covariantPOVM}
	\end{align}
where we performed the change of variable $t = \phi-\theta$. 

As the Fourier expansion of $\log(1+\cos t)$
	\begin{equation}
    		\log(1+\cos t) = -\log 2 + 2\sum_{k=1}^\infty\frac{(-1)^{k+1}}{k}\cos(kt)\, 
	\label{app:log_FT_final1}
	\end{equation}
contains only the zeroth and first Fourier modes, only the constant term and the first harmonic contribute to the integral of \eqnref{app:log_FT_final1}. Therefore,
	\begin{align}\nonumber
    		\frac{1}{2\pi}\int_0^{2\pi}(1+\cos t)\log(1+\cos t)\,dt&=\frac{1}{2\pi}\int_0^{2\pi}(1+\cos t)(-\log2+2\cos t)\dd t \nonumber\\ \nonumber
    		&=-\log2+\frac{2}{2\pi}\int_0^{2\pi}(1+\cos t)\cos t\,\dd t\\
    		&= -\log2+1\, .
		\label{app:KL_covariantPOVM_integral1}
		\end{align}
Similarly, using
	\begin{equation}
    		\log(1+\cos(t+\delta))=-\log2+2\sum_{k=1}^\infty\frac{(-1)^{k+1}}{k}\cos(k(t+\delta))\, ,
	\label{app:log_FT_final2}
	\end{equation}
we find
	\begin{equation}
    		\frac{1}{2\pi}\int_0^{2\pi}(1+\cos t)\log(1+\cos(t+\delta))\,\dd t = -\log2+\cos d.
	\label{app:KL_covariantPOVM_integral2}
	\end{equation}
Subtracting the two contributions gives
	\begin{equation}
    		D(p(\phi\mid \theta)\|\, p(\phi\mid \theta+\delta)) = 1-\cos\delta= 2\sin^2\frac{\delta}{2}.
	\label{app:KL_covariantPOVM_final}
	\end{equation}

\subsection{Evaluating the twin-peaks test}\label{app:phase_precision}

In Fig.~\ref{fig:phase_point_avg_samples}, we compare the average number of samples between different measurement strategies for the estimation of an unknown phase $\theta$ with the \textit{twin-peaks test}. We 
see that the 4-outcome POVM (blue), equivalent to randomly sampling between two measurement directions perpendicular in the equatorial Bloch plane, performs worse in the number of samples. Already, the 8-outcome POVM (orange), corresponding to 4 evenly spaced measurements on the equatorial plane, performs visibly better, similar to a random sampling in the measurement direction (green). All these measurements 
are i.i.d. strategies. In contrast, the highly non-i.i.d. adaptive strategy (red) performs significantly better. 

We also compare the performance of the test in terms of precision in the estimate. In Fig.~\ref{fig:precision_test_phase}, we compare the mean squared error (left image) and the frequency of estimates at a distance 
larger than $\delta$ from the true value (right image). While the 8-outcome POVM and the random PVM perform very similarly for both figures of merit, the 4-outcome POVM performs worse. Interestingly, the adaptive 
strategy performs significantly worse with respect to the MSE. This illustrates that optimizing the MSE is not equivalent to optimizing the stopping criterion of the \textit{twin-peaks test}. The latter is 
determined by the probability that the estimate lies farther than $\delta$ from the true value, under which the adaptive strategy performs much better than the 4-outcome POVM and only slightly worse than the 8-
outcome POVM and the random PVM, while requiring fewer samples.

\begin{figure}[t]
    \centering
    \hspace{-25pt}
    \includegraphics[width=0.45\columnwidth]{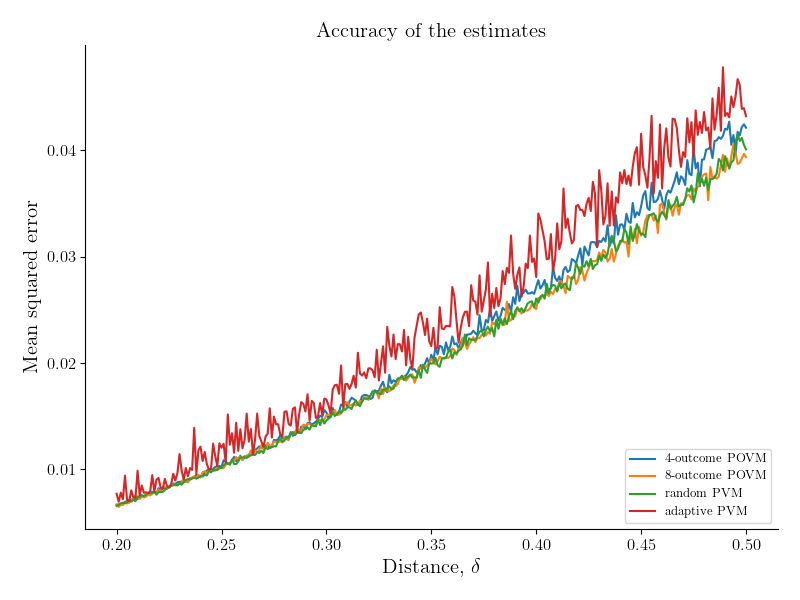}
    \includegraphics[width=0.45\columnwidth]{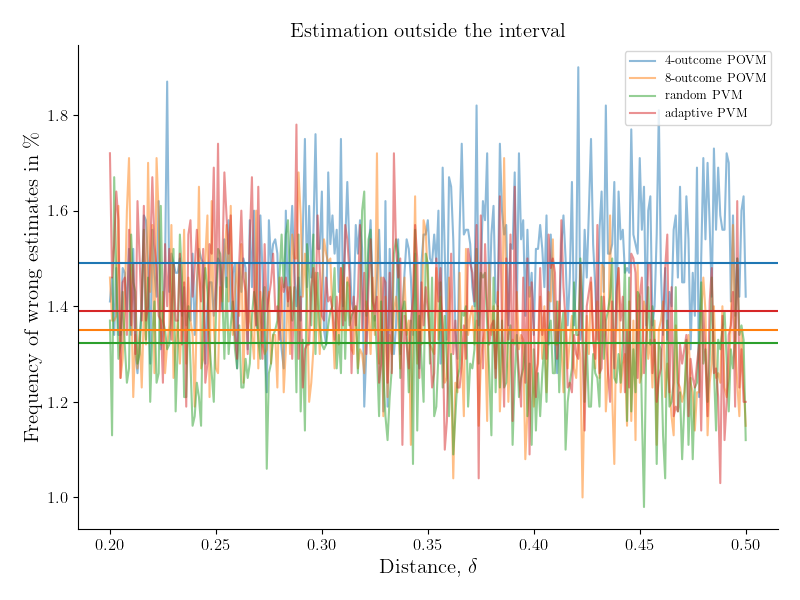}
    \caption{Precision of the different estimation strategies explained in the main text. The left image shows the mean squared error vs. the distance $\delta$, and the right image shows the frequency of estimates 
    outside the interval of distance $2\delta$ vs. the distance $\delta$ and their mean over all values of $\delta$ as horizontal lines in the same color. We use an error of $\epsilon=5\%$.}
    \label{fig:precision_test_phase}
\end{figure}

\subsection{Entangled strategies}\label{app:entangled_probes}

So far, we have treated the phase estimation problem as a state discrimination task between a continuous family of states. In the more general 
framework of quantum parameter estimation, however, the probe state itself becomes part of the estimation strategy. Specifically, one prepares a probe, lets it undergo a parameter-dependent transformation, and then 
performs a measurement on the resulting state. The freedom to optimize the probe state introduces additional resources, most notably entanglement, which can significantly enhance estimation performance.

As a paradigmatic example, consider the estimation of the parameter $\theta$ describing the unitary transformation
	\begin{equation}
		U(\theta)=\ketbra{0}{0}+e^{\mathrm{i} \theta}\ketbra{1}{1}.
	\label{app:unitaries}
	\end{equation}
If only separable probe states are available, the optimal choice is the equatorial state, $\ket{+}=(\ket{0}+\ket{1})/\sqrt{2}$, which, after the action of $U(\theta)$, produces the one-parameter family
$\ket{\theta}=(|\,0\rangle+\mathrm{e}^{i\theta}|\,1\rangle)/\sqrt{2}$ considered in the previous sections. Allowing entangled probes, however, enlarges the set of achievable likelihood functions and can 
substantially improve estimation performance. In the remainder of this section, we compare several entangled probe states using the same \textit{twin-peaks test}, stopping criterion, and FOM 
introduced previously. Throughout, we assume a uniform prior, $\pi(\theta)=1/(2\pi)$, and employ the collective covariant phase POVM.
 
If the FOM is chosen to be the Holevo variance~\cite{Holevo1984}
	\begin{align}
    		V_{\text{Ho}}=|\,\langle\mathrm{e}^{i\theta}\rangle|\,^{-2}-1\, ,
	\label{app:Holevovariance}
	\end{align}
then the optimal $N$-qubit state is the so-called sine-state
	\begin{align}\label{eq:sine_state}
    		\ket{\psi_{\text{sine}}}=\sum\limits_{n=0}^N \sqrt{\frac{2}{N+2}}\sin{\frac{(n+1)\pi}{N+2}}\ket{n},
	\end{align}
where $\ket{n}=\ket{1}^{\otimes n}\ket{0}^{\otimes N-n}$.  In contrast, if the FOM is chosen to be the maximum likelihood, then the optimal state is  the flat state 
	\begin{align}
    		\ket{\psi_{\text{flat}}}=\sum\limits_{n=0}^N \frac{1}{\sqrt{N+1}}\ket{n}\, .
	\label{eq:flat_state}
	\end{align}
For both probe states, the optimal measurement remains the covariant phase POVM, which corresponds to a phase-randomized measurement in the Fourier basis~\cite{Holevo1984,Bartlett_2007}.

The corresponding likelihood functions for obtaining the outcome $\phi$ are
	\begin{align}\nonumber
    		|\,\braket{\phi\mid \psi_{\text{sine}}(\theta)}|\,^2&=\frac{2}{(N+1)(N+2)}\left|\,  \sum _{n=0}^{N} \sin{\frac{(n+1)\pi}{N+2}} e^{i n(\theta-\phi)} \right|\,^2\\
    		|\,\braket{\phi\mid \psi_{\text{flat}}(\theta)}|\,^2&=\frac{1}{(N+1)^2}\left|\,  \sum _{n=0}^{N}  e^{i n(\theta-\phi)} \right|\,^2\, .
	\label{app:entangled_probs}
	\end{align}
Viewed as functions of $\theta$, both likelihoods depend on the measurement outcome $\phi$ only through the shift $\theta\mapsto\theta-\phi$. Consequently, the shape of the likelihood function is 
independent of the measurement outcome, implying that the stopping condition \eqnref{eq:stopping_condition_twin_peaks} is either satisfied or not satisfied irrespective of the observed value of $\phi$. 
We may therefore determine directly the minimum number of probe qubits required to satisfy the stopping criterion.

The above two probe states are not \textit{a priori} optimal for the \textit{twin-peaks} test, since its test statistic differs from both the Holevo variance and the maximum likelihood criterion. 
To determine the optimal probe state, we therefore consider the general $N$-qubit state $\ket{\psi}=\sum\limits_{n=0}^N \psi_n \ket{n}$ for which the likelihood of obtaining the outcome $\phi$ is
	\begin{align}
    		|\,\braket{\phi\mid \psi(\theta)}|\,^2&= \frac{1}{N+1}\left|\,  \sum _{n=0}^{N} \psi_n e^{i n(\theta-\phi)} \right|\,^2\, .
	\label{app:generalstate_likelihood}
	\end{align}
As before, the measurement outcome merely translates the likelihood function without altering its shape. Consequently, we may set $\phi=0$ without loss of generality.  Furthermore, for $\psi_n$ non-negative,  
the likelihood of \eqnref{app:generalstate_likelihood} is maximized at $\theta=0$. The optimal probe state for the \textit{twin-peaks test} is therefore obtained by numerically solving
	\begin{align}
    		\max\limits_{\psi_n}\min\limits_{\theta\in(d,2\pi-\delta)}\frac{\left|\, \sum _{n=0}^{N} \psi_n \right|\,^2}{\left|\, \sum _{n=0}^{N} \psi_n e^{i \theta  n} \right|\,^2}.
	\end{align}
The resulting optimal probe, together with the sine and flat states discussed above, is compared in Fig.~\ref{fig:entangled_strategies}.

\begin{figure}[t]
    \centering
    \hspace{-25pt}
    \includegraphics[width=0.45\columnwidth]{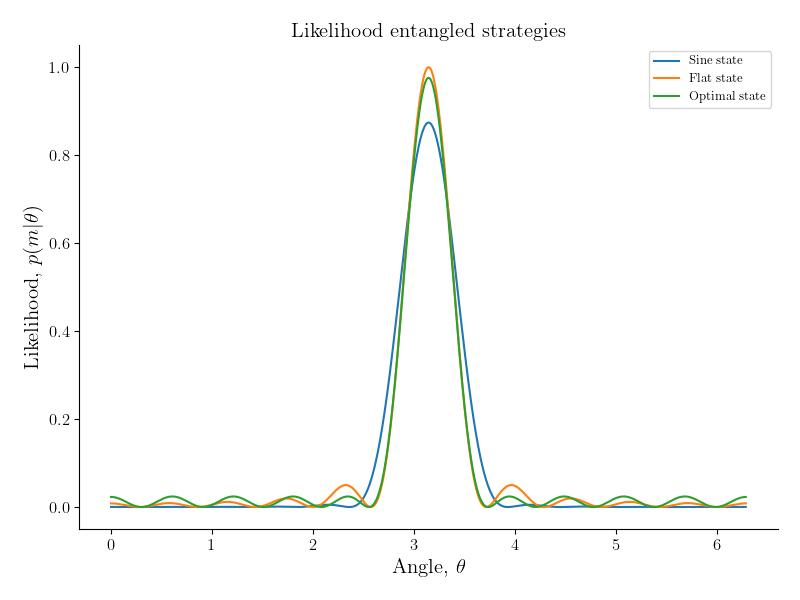}
    \includegraphics[width=0.45\columnwidth]{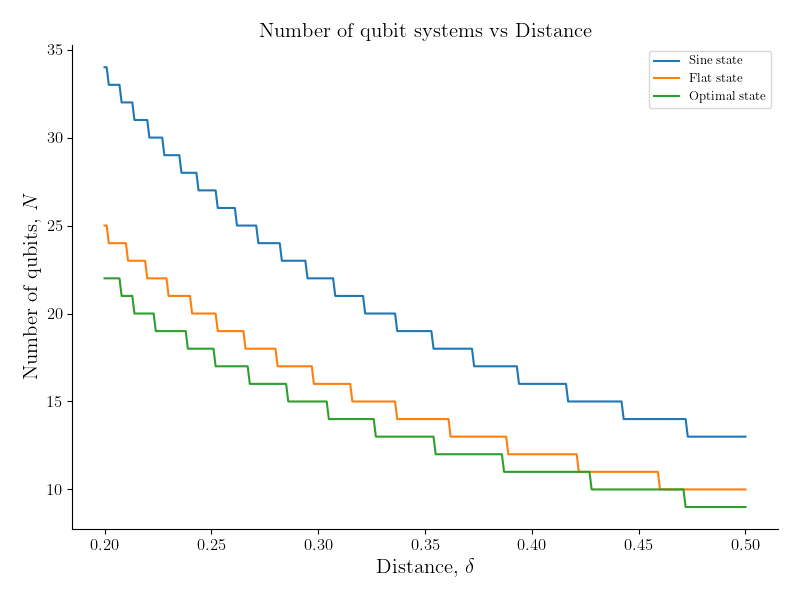}
    \caption{The performance of entangled strategies. In blue the sine state from Eq.~\eqref{eq:sine_state}, in orange the flat state from Eq.~\eqref{eq:flat_state} and in green the numerically optimized state 
    for outcome $\phi=\pi$. The left image shows the likelihood $p(m\mid \theta)$ for the three discussed entangled states for $N=10$, where the green line shows the state optimized for $d=0.5$. The right image 
    shows the number of qubits for the three entangled strategies to fulfil the stopping condition of the sequential strategy with respect to the distance $\delta$ and $\epsilon=0.05$.}
    \label{fig:entangled_strategies}
\end{figure}
\section{Purity estimation using collective measurement strategies}\label{app:purity_collective}

In this appendix, we consider collective measurement strategies for testing the purity of a qubit mixed state.  In \secref{sec:purity} we considered our \textit{twin-peaks test} employing a local PVM for 
estimating the purity. However, in doing so, we assumed that the direction of the Bloch vector is known, begging the question of whether the purity can be tested without 
prior knowledge of the Bloch vector direction.  In fixed-sample protocols, one typically sacrifices a fraction of the samples to first estimate the direction and 
then the remaining samples to estimate the purity by performing a PVM along the estimate, $\hat{\bf r}$, of the direction.  Assuming that the fraction of samples used for estimating the direction scales sub-linearly with $n$, the sample cost for estimating the direction tends to zero in the limit of a large number of samples. The collective measurement approach developed below removes the need for this preliminary estimation step.

An alternative route is to perform collective measurements on multiple copies of the state.  The most ubiquitous example is the two-copy SWAP test~\cite{Barenco1997}---which can be implemented optically 
via the Hong-Ou-Mandel effect~\cite{HOM1987}---and which finds numerous applications from quantum fingerprinting~\cite{Buhrman2001}, 
estimation of non-linear functionals of a quantum state~\cite{Ekert2002}, through to super-resolving spatial~\cite{Parniak2018, Aguilar2020, Ndagano2022}, temporal~\cite{Lyons2018, Ansari2021}, and 
spectral~\cite{Mazelanik2022} imaging.  Let $\ket{\psi}, \ket{\phi}\in\mathcal{H}$; the SWAP operator is defined as 
	\begin{equation}
		\mathrm{SWAP} \ket{\psi}\otimes\ket{\phi}=\ket{\phi}\otimes\ket{\psi}\, .
	\label{app:SWAP}
	\end{equation}
and has spectral decomposition 
	\begin{equation}
		\mathrm{SWAP}=\lambda_S \,\Pi_S +\lambda_A\, \Pi_A\, ,
	\label{app:spectralSWAP}
	\end{equation}
where $\Pi_A=\ketbra{\Psi^-}{\Psi^-}$ is the projector onto the anti-symmetric subspace spanned by $\ket{\Psi^-}=\frac{1}{\sqrt{2}}(\ket{01}-\ket{10})$, with eigenvalue $\lambda_A=-1$, and $\Pi_S=\one-\Pi_A$ the projector 
onto the symmetric subspace with eigenvalue $\lambda_S=1$.

Noting that for a qubit density matrix $\rho$ it holds
	\begin{equation}
		\tr\left(\mathrm{SWAP}(\rho\otimes\rho)\right)=\tr\rho^2\, ,
	\label{app:expectedSWAP}
	\end{equation}   
repeating the SWAP measurement on $N$ i.i.d. pairs of qubits in the state $\rho\otimes\rho$ yields the estimator 
	\begin{equation}
		\hat{\xi}=\frac{1}{N}\sum_{i=1}^N\lambda_i,
	\label{app:estimator}
	\end{equation}
which provides an estimate of $\rho^2$ with corresponding variance
	\begin{equation}
		\sigma(\hat{\xi})=\frac{1-\tr\rho^2}{N}\, .
	\label{app:variance}
	\end{equation}
Using the relation 
	\begin{equation}
		r = \sqrt{2\xi-1}
	\end{equation}
one obtains an estimator of the Bloch vector length by standard error propagation.

Now consider a sequential version of the SWAP test using our \textit{twin-peaks} test.  Since the SWAP test is nothing more than the weak Schur-sampling measurement of \eqnref{eq:Schur_sampling} 
with $M=2$,  the conditional probabilities for obtaining the symmetric ($J=1$) or anti-symmetric ($J=0$) outcome are given by 
	\begin{equation}
		\begin{split}
			p_2(J=0\mid r) &= \frac{1-r^2}{4}\\
			p_2(J=1\mid r) &=\frac{3+r^2}{4}\, .
		\end{split}
	\label{app:probsN2}
	\end{equation}
The average sample size $\mathbb{E}(k\mid r)$ of a sequential SWAP test, in the limit of no overshooting of the boundaries, reads
	\begin{equation}
		\mathbb{E}(k\mid r) = \frac{-2\log\epsilon}{d^2 I_2(r)}
	\label{app:expected_sample_size2}
	\end{equation}	
where $I_2(r)$ is the Fisher information of the conditional probability distribution of \eqnref{app:probsN2}. For the case at hand it reads
\begin{equation}
    I_2(r)= \frac{4 r^2}{(1-r^2)(3+r^2)}\ \,
    \label{app:FisherN2_rmt}
\end{equation}
and can be expressed as 
	\begin{equation}
		I_2(r) = 2 \left(\mathcal{I}(\rho)-\frac{3-r^2}{(1-r^2)(3+r^2)}\right)\, ,
	\label{app:FisherN2new}
	\end{equation} 
where 	
	\begin{equation}
		\mathcal{I}(\rho)=\frac{1}{1-r^2}
	\label{eq:QFI}
	\end{equation}
is the QFI information obtained by performing a single-qubit PVM along the Bloch vector direction.  The factor of 2 in \eqnref{app:FisherN2new} is due to the 
fact that our sequential test employs pairs of qubit states. Plugging \eqnref{app:FisherN2new} into \eqnref{app:expected_sample_size2} results in the \textit{total}
average number of samples
	\begin{equation}
		\mathbb{E}(N^{(2)}\mid r) = \frac{-2\log\epsilon}{d^2 \left(\mathcal{I}(\rho)-\frac{3-r^2}{(1-r^2)(3+r^2)}\right)}\, ,
	\label{app:AvsamplesN2}
	\end{equation}			
where the superscript 2 on the left-hand side is there to remind us that this is the average number of samples when we employ a sequential test with block size equal to 
2.

The same analysis extends directly to weak Schur measurements acting on blocks of $M$ copies. In this case, the Fisher information of the
measurement outcomes takes the form
	\begin{equation}
		I_M(r)  = M(\mathcal{I}(\rho)- g_M(r))\, ,
	\label{app:FisherNM}
	\end{equation} 
where $g_M(r)$ is a ratio of polynomial functions of degree at most $2M$.  Consequently, the \textit{total} average number of samples employed by such a collective \textit{twin-peaks} test 
is given by 
	\begin{equation}
		\mathbb{E}(N^{(M)}\mid r) = \frac{-2\log\epsilon}{d^2 \left(\mathcal{I}(\rho)-g_M(r)\right)}\, .
	\label{app:AvsamplesNM}
	\end{equation}
In \figref{fig:ratio_samples}, we compare the ratio between the total average number of samples used by a collective \textit{twin-peaks} test of block size $M=2,4,8$.  The total average sample size of a collective 
\textit{twin-peaks test} of block size $M-1$ requires more average samples than one of block size $M$.  Moreover, it is known that as the block size becomes asymptotically large, 
the Fisher information rate of the Schur measurement approaches the QFI~\cite{Guta2006}, i.e., 
	\begin{equation}
		\lim_{M\to\infty} \frac{I_M(r)}{M}=\mathcal{I}(\rho)\, .
	\label{app:Fisherasymptotics}
	\end{equation}		
Hence, a collective \textit{twin-peaks test} for testing the purity of qubit states without knowing their Bloch vector direction requires \textit{the same} number of samples on average 
as the \textit{twin-peaks} test that assumes the Bloch vector direction is known, i.e., 
	\begin{equation}
		\lim_{M\to\infty}\mathbb{E}(N^{(M)}\mid r)= \mathbb{E}(N\mid r).
	\label{app:asymptotic_avg_samples}
	\end{equation}
\begin{figure}[ht]
    \centering
    \hspace{-15pt}
    \includegraphics[width=0.45\columnwidth]{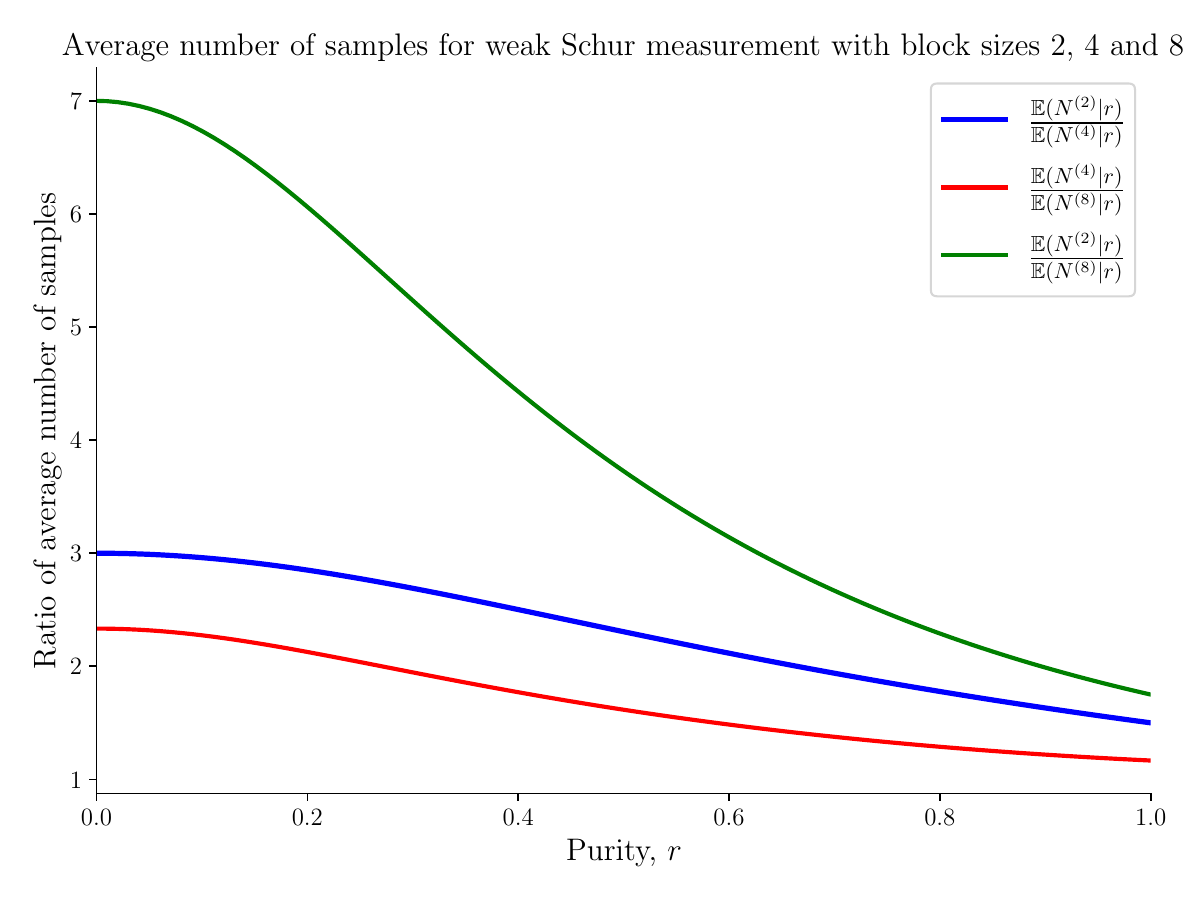}
    \caption{Ratio of the total average number of samples for the \textit{twin-peaks} test using the weak Schur measurement with block sizes $M=2,4$ and 8.}
    \label{fig:ratio_samples}
\end{figure}

An interesting question is how the block size $M$ should be chosen. While increasing $M$ always increases the Fisher-information rate,
larger blocks also require more copies before a measurement outcome is obtained. Consequently, the optimal block size depends on the unknown
purity itself, leading to a trade-off between information gained per measurement and the cost of each measurement.
Near the maximally mixed state 
	\begin{equation}
		\frac{\mathbb{E}(N\mid r)}{\mathbb{E}(N^{(M)}\mid r)}=\frac{2M(M-1)}{3}\, r^2 +\mathcal{O}(r^3)\, .
	\label{app:close_to_mixed}
	\end{equation}
Hence, increasing the block size is beneficial as the expected stopping time grows quadratically with $M$. However, this expansion remains valid only while
$Mr=\mathcal{O}(1)$, meaning that $M_{\rm opt}\sim\frac1r$. Therefore, highly mixed states benefit from increasingly collective measurements. 

On the other hand, for nearly pure states,
	 \begin{equation}
		\frac{\mathbb{E}(N\mid r)}{\mathbb{E}(N^{(M)}\mid r)}=M-3 +2r +\mathcal{O}((r-1)^2)\, ,
	\label{app:close_to_pure}
	\end{equation}
showing that the additional benefit obtained by increasing the block size rapidly diminishes. As the Fisher information rate is already close to the QFI, larger collective measurements yield only marginal reductions in stopping time.

\end{document}